\keywords{Domain Theory, Constructive Mathematics, Real Number
Computation, Predomain Base, Function Spaces}
\theoremstyle{plain} 
\renewcommand{\above}{\sqsupseteq}
\newcommand{\incl}{\hookrightarrow}
\newcommand{\upup}{\twoheaduparrow}
\newcommand{\fromabove}{\searrow}
\newcommand{\Int}{\mathbb{Z}}
\newcommand{\lo}{\underline}
\newcommand{\hi}{\overline}
\newcommand{\pref}{\mathsf{pref}}
\newcommand{\lift}{_\bot}
\newcommand{\entails}{\vdash}
\newcommand{\wex}{\tilde\exists}
\newcommand{\Cons}{\mathsf{Cons}}
\newcommand{\step}{\mathrel{\searrow}}
\newcommand{\IQ}{\mathsf{I}\mathbb{Q}}
\newcommand{\waybelow}{\ll}
\newcommand{\SO}{\mathcal{O}} 
\newcommand{\inv}{^{-1}}
\newcommand{\sqsup}{\bigsqcup}
\newcommand{\below}{\sqsubseteq}
\newcommand{\Nat}{\mathbb{N}}
\newcommand{\Rat}{\mathbb{Q}}
\newcommand{\Real}{\mathbb{R}}
\newcommand{\IR}{\mathsf{I}\mathbb{R}}
\newcommand{\sep}{.\,}
\theoremstyle{plain}
\newtheorem{lemma}[thm]{Lemma}
\newtheorem{propn}[thm]{Proposition}
\theoremstyle{definition}
\newtheorem{defn}[thm]{Definition}
\newtheorem{example}[thm]{Example}
\newtheorem{remark}[thm]{Remark}
\newtheorem{comparison}[thm]{Comparison}
\begin{document}

\title{Constructive Domains with Classical Witnesses}


\author[D.~Pattinson]{Dirk Pattinson}	
\address{The Australian National University}	
\email{dirk.pattinson@anu.edu.au}  

\author[M.~Mohammadian]{Mina Mohammadian}	
\address{The Australian National University and University of Tabriz}	
\email{mina.mohammadian@anu.edu.au}  






\begin{abstract}
 We develop a constructive theory of continuous domains from the
  perspective of program extraction. Our goal that programs 
  represent (provably correct) computation without witnesses of
  correctness is achieved by formulating correctness
  assertions classically. Technically, we start from a predomain
  base and construct a completion. We then investigate continuity
  with respect to the Scott topology, and present a construction of
  the function space. We then discuss our main motivating example in
  detail, and instantiate our theory to real numbers that we
  conceptualise as the total elements of the completion of the
  predomain of rational intervals, and prove a representation
  theorem that precisely delineates the class of representable continuous
  functions.
\end{abstract}

\maketitle

\section*{Introduction}\label{S:one}

The ability to extract programs from proofs is one of the hallmark
features of constructive mathematics \cite{Troelstra:1988:CMI}: from a proof of a
formula of the form $\forall x \exists y P(x, y)$ we can
automatically obtain a (computable) \emph{function} $f$ such that
$P(x, f(x))$ for all $x$. Within mathematics, the
variables usually have \emph{types}, such as natural or real
numbers, or functions between types. 

Computationally, while some of these types, such as the natural
numbers,  can be computed with directly, there is no immediate way
to compute with others.
The prime example
here are the real numbers that are represented either as
infinitely-long running Turing machines \cite{Weihrauch:2000:CA},
rational Cauchy sequences with modulus \cite{Bishop:1985:CA}, linear fractional
transformations \cite{Heckmann:2002:CRN}, digit streams
\cite{Marcial-Romero:2007:SSL} or domains \cite{Edalat:1998:DTA}. 

From the view of program extraction, the data structure that is used
to represent mathematical objects is systematically derived from
their definition. If we define real numbers to be Cauchy sequences
with modulus, then programs extracted from an existence proof will
produce just that -- a Cauchy sequence with modulus. 
The vast majority of the work on constructive real analysis and
program extraction has focussed on the Cauchy representation and
its variants such as the signed digit representation, e.g.
\cite{Troelstra:1988:CMI,Bishop:1985:CA,Berger:2011:CPE}.  There is
little work on other representations, with the notable exception of
\cite{Bauer:2009:CTC} which develops a theory of constructive
domains that is instantiated to obtain representations of real numbers.

In domain theory, real numbers are represented as nested sequences
of (rational, or dyadic) intervals, with the interpretation that every interval gives an
upper and lower bound to the number being approximated. In other
words, every sequence element gives a guaranteed enclosure of the
actual result, and the successive computation of sequence elements
can be halted if the actual precision, measured by the interval
width, falls below a given threshold. 

Compared with a representation as Cauchy sequences with modulus,
domains offer two attractive features. First, every stage of
approximation carries an actual error bound, rather than the
worst case error, as given by the modulus of convergence for Cauchy
reals. 
For example, computing the square root of $2$ using Newton iteration
(as carried out in e.g. \cite{Schwichtenberg:2016:CAW}) one obtains
a rational Cauchy Sequence $(a_n)_n$ such that for example, $|a_5 - a_n| \leq \frac{1}{6
\cdot 2^4} \approx 5.2  \times 10^{-3}$ for all $n \geq 5$.
Instantiating the same method to obtain a shrinking sequence $(a_n,
b_n)_n$ of nested rational intervals such that $a_n^2 \leq 2 \leq
b_n^2$ one obtains that $|b_n - a_n| \approx 5.7 \times 10^{-49}$
for $n \geq 5$.
Both methods use the same initial approximation of $\sqrt{2}$, and
indeed the computed Cauchy sequence is identical to the sequence of
upper interval endpoints.
The significant difference is explained as the modulus is a worst
case estimate, whereas the differences between upper and lower
interval endpoint are obtained from the actual computation and avoid
over-estimation. We showcase this by means of an example in Section
\ref{sec:sqrt}.

The second attractive feature of a domain theoretic approach is that  most classes of domains are closed under the
formation of function spaces, i.e. one systematically obtains a
representation of the space of e.g. real-valued functions. 

Both motivate the development of a more general theory of domains,
as e.g. carried out in \cite{Bauer:2009:CTC}. Our work is similar in
spirit, focuses on extracted programs and data type as an end goal.
Specifically, our aim is to extract 
(necessarily effective) functions that operate on the basis of the
domains under consideration. For the special case of real numbers
(and functions), our goal is to obtain algorithms in the style
descibed in \cite{Edalat:1998:DTA}. There, mathematical operations
(such as computing square roots) are first extended to an
appropriate domain (such as the interval domain), then restricted to
the base of the domain, and in a third step, shown to be recursive
by considering a computable enumeration of basis elements. Indeed,
one of our goals is to short-circuit effectivity considerations that
are often laborious and provide little insight. Our slogan is
``proofs, not programs'' as the constructive reasoning (via a
realisability interpretation) immediately yields necessarily
recursive algorithms.

Putting the extracted algorithm into the centre of attention gauges
the formulation of the notion of domain, and this is where
differences to \cite{Bauer:2009:CTC} begin to emerge. The programs
we are seeking to extract should embody \emph{just} the computational
essence, but no additional terms that evidence correctness. For
example, when extracting a program to compute a real number, we only
seek a nested sequence of intervals, but \emph{not} a witness of the
fact that the intervals are converging to zero in width. This is
similar to the approach taken in \cite{Berger:2011:CPE} where
one freely adds (true) axioms without computational content to the
theory that forms the basis of extraction. That is, we are
interested in constructive existence, but are content with classical
correctness. Conceptually, this can be understood as phrasing
correctness in the classical (double negation) fragment of
constructive logic. Technically,  the (intended) consequence of this
is that correctness proofs do not have any computational content,
and are therefore invisible after program extraction, using a
standard realisability  interpretation~\cite{Troelstra:1988:CMI}. 

For example, subjecting the proof of the existence of the square
root of two to a realisability interpretation, our aim is to
extract only a nested sequence of intervals. To achieve this, the
definition of equality needs to be free of computational content.
We solve this by judiciously setting up the theory in such a way
that treats existence of objects as constructive existence, whereas
properties are usually formulated classically. 

Another aspect where our theory puts the extracted algorithm into the
centre of attention is the definition of completion of domain bases. 
It is one of the hallmark features of domains  that ideal elements (such
as infinite sequences, or real numbers) can be approximated by
elements of a \emph{base}. Constructively, we take the notion of a
base as primitive, and recover ideal elements in the completion of
the base. Classical domain theory, see e.g.~\cite{Abramsky:1994:DT}
usually considers completion by directed suprema. Here, take the
same approach as \cite{Bauer:2009:CTC} and consider completions by
infinite sequences, as they are much more easily representable
computationally.

\emph{Plan of the paper and main results.} We introduce the notion
of a \emph{predomain base} that is similar to \cite{Bauer:2009:CTC}
in Section \ref{sec:bases},
but our definition of the way-below relation is classical, and we
establish some basic lemmas, notably interpolation, for later use.
We also introduce our main motivating, and running, example, the
predomain base of formal intervals. 
In Section \ref{sec:completion}, we introduce the
continuous completion of a predomain base, along with a (defined)
notion of equality. As foreshadowed in the introduction, equality
(defined in terms of way-below) is classical and devoid of
computational content. The main result here is the extension property
that allows us to extend any continuous function defined on a
predomain base to its completion. 

In Section \ref{sec:topology} we align the order-theoretic notion of order-theoretic
continuity to topological continuity. As expected, this necessitates
a classical definition of the Scott topology which we also show to
be generated by upsets of the way-below relation as in the classical
theory. In particular, we can show that order-theoretic and
topological continuity coincide.
Our consideration of continuity naturally leads to the construction of
function spaces that we carry out in Section
\ref{sec:function-spaces}. In the classical theory, function
spaces are constructed as the set of Scott continuous functions,
with pointwise ordering. Here, we investigate the construction of
function spaces on the level of predomain bases. More specifically,
we present a construction of a predomain base, the completion of which
precisely captures the space of continuous functions between the
completion of two bases. In Sections \ref{sec:interval-domain} we
specialise our theory to our initial motivating example, and
recapture real numbers as the total elements of the (continuous
completion of the) domain of formal intervals. We show that the
Euclidean topology arises as the restriction of the Scott topology
to real numbers, and investigate the relationship between Cauchy
reals and the domain-theoretic reals. As a consequence of our
constructive existence -- classical correctness approach, both
notions are only equivalent if Markov's principle holds (and in
fact, we can prove Markov's principle from their equivalence). 
We conclude by relating $\epsilon$-$\delta$ continuous functions to
the restrictions of Scott continuous total functions. This
unearthes a new notion of continuity which appears to be weaker than
uniform continuity but at the same time stronger than pointwise
continuity that we call \emph{intensional non-discontinuity}. We
leave the question of a more detailed analysis of this notion to
future work.

\emph{Related work.} We have already mentioned \cite{Bauer:2009:CTC}
which is closest to the work reported in this paper. The main
differences are that our notions of way-below and equality are
defined classically whereas \emph{op.cit.} employs constructive
definitions. We also present a construction of function spaces as
completion of predomain bases in Section \ref{sec:function-spaces}. 

Our work stands in the tradition of Bishop-style constructive
analysis, \cite{Bishop:1985:CA}, and indeed we work in a purely
constructive setting. What is different is our treatment of real
numbers that we derive from the interval domain, similarly to the
classical 
treatment of real analysis in \cite{diGianantonio:1997:RNC,Edalat:1999:DTA} via continuous
domains, except that we
do not focus on the (classical) notion of computability. Again from a
classical perspective, our real numbers (and functions) can be thought of as the
total objects of (constructively understood) domains, studied in
\cite{Berger:1993:TSO}, although we don't investigate the notion of
totality \emph{per se}. The comparison between different notions of
continuity on the induced set of real numbers is of course insipred
by \cite{Ishihara:1992:CPC}.
Much of this paper is owed to discussions with Helmut
Schwichtenberg. His notes \cite{Schwichtenberg:2016:CAW} develop
constructive analysis with a view to program extraction, and the
question that motivated the present paper was whether this is also
possible using a domain representation of the reals, rather than a
Cauchy sequence representation with a worst-case modulus of
convergence. 

\section{Preliminaries and Notation} \label{sec:prelims}

We work in standard Bishop-style constructive mathematics 
\cite{Bishop:1985:CA} that we envisage as being formalised in
higher-type intuitionistic arithmetic $\mathsf{HA}^\omega$
\cite{Troelstra:1988:CMI}.  
We write $\Nat$ for the natural numbers, $\Int$ for the integers and
$\Rat$ for the rationals, and $\Rat_{>0}$ for the positive
rationals. 

We use the term
'weak existence' to refer  to the weak existential quantifier $\wex =
\neg\forall\neg$ which is constructively equivalent  to $\neg \neg \exists$. In informal reasoning, we often say that
`there must exist $x$ such that $A$' or `there weakly exists $x$
such that $A$' for $\wex x\sep A$. We read defined operations
universally, that is assuming that $A(x)$ defines $x$ uniquely
($(A(x) \wedge A(y)) \to x = y$) and we let $\phi(x)$ denote `the
unique $x$ such that $A(x)$', we read a formula $B(\phi(x))$ as
$\forall x\sep A(x) \to B(x)$. In particular, if there must exist
$x$ such that $A(x)$, using $\phi(x)$ does not assert (strong)
existence.

\section{Predomain Bases and Interpolation}  \label{sec:bases}

A \emph{predomain base} is a countable  ordered structure that collects
finitely representable objects used to approximate elements of ideal structures,
such as the real numbers.  Examples of predomain bases are finite
sequences (approximating infinite streams) and rational intervals
(approximating real numbers). The order structure captures
information content, such as the prefix ordering for finite
sequences, and reverse inclusion for rational intervals. 

Predomain bases are the constructive analogue of a base in classical
domain theory \cite{Abramsky:1994:DT}, where  arbitrary elements of
the domain can be displayed as directed suprema of base elements.
In a constructive setting, the totality of the domain is not
given and needs to be constructed, similar to the (constructive)
notion of real numbers as rational Cauchy sequences with a modulus
of convergence. This section discusses basic properties of predomain
bases, and we then construct completions in Section
\ref{sec:completion}.

\begin{defn}[Predomain Bases] \label{defn:predomain-bases}
Let $(B, \below)$ be a poset. A \emph{chain} in $B$ is a sequence
$(b_n)_n$ such that $b_n \below b_{n+1}$ for all $n \in \Nat$. 
If $(C, \below)$ is (another) poset, we call a monotone function $f: C \to D$
\emph{Scott continuous} if $f(\bigsqcup_n b_n) = \sqsup_n f(b_n)$
for all chains $(b_n)_n$, provided that all suprema in the last
equality exist.
An
element $b$ is \emph{way below} an element $c \in B$ if there must exist
$n \in \Nat$ such that $b \below x_n$ whenever $(x_n)$ is a chain in
$B$ with $\sqsup_n x_n \in B$ and $c \below \sqsup_n x_n$. 
We write $b \waybelow c$ if $b$ is way
below $c$, and also say that $b$ \emph{approximates} $c$. 
A chain $(b_n)$ is an \emph{approximating sequence} of $b \in B$ if
$b_n \waybelow b$ for all $n \in \Nat$ and $\sqsup_n b_n = b$. 
A \emph{predomain base} is a countable poset $B = \lbrace b_n \mid n
\in \Nat \rbrace$ with decidable ordering $\below$ in which every element
has   an approximating sequence. 

A non-empty, finite set $B_0 \subseteq B$
is \emph{consistent}, written $\Cons(B_0)$  if it must have an upper
bound, i.e. there must exist $b' \in B$ such that $b \below b'$ for all
$b \in B$. We say that consistency is \emph{continuous} if $a_i =
\sqsup_j a_{i, j}$ for all $i \in I$ and $\Cons \lbrace a_{i, j}
\mid i \in I \rbrace$ for all $j \in \Nat$ implies $\Cons \lbrace
a_i \mid i \in I \rbrace$ where $I$ is a nonempty, finite set. The  poset $(B, \below)$ is \emph{bounded complete} if every finite
consistent 
subset $B_0 \subseteq B$ has a least upper bound $\sqsup B_0$, and
\emph{pointed} if it has a least element $\bot \in B$. \end{defn}

\noindent
Note that all non-empty bounded complete posets are necessarily
pointed, and that consistency is not necessarily continuous as we
demonstrate in Example \ref{consis}. 

\begin{remark}
The notion of predomain base differs from that of
\cite{Bauer:2009:CTC} in that \emph{op.cit.} requires that an
approximating sequence be a $\waybelow$-chain. 
This immediately entails interpolation: if $x \waybelow z$ in a
predomain base (where every element has an approximating
$\waybelow$-chain), we have $z = \sqsup_n z_n$ for a
$\waybelow$-chain $(z_n)_n$ so that $x \below z_n$ for some $n$ by
definition of $\waybelow$. But then $x \below z_n \waybelow z_{n+1}
\waybelow z_{n+2} \below z$ so that $x \waybelow z_{n+1} \waybelow
z$, i.e. $y = z_{n+1}$ interpolates between $x$ and $z$. 

We require that every element $x \in B$ can be displayed as $x =
\sqsup_n x_n$ where each $x_n \waybelow x$ which is strictly weaker. 
As a consequence, we need additional hypotheses to establish
interpolation in Corollary \ref{cor:interpolation}. On the other
hand, our definition makes it easier to construct predomain bases as
we don't need to ensure that approximating sequences are
$\waybelow$-chains, as for example in the construction of function
spaces given later in Lemma \ref{lemma:function-space-predomain-base}.

We are also adopting a different (weaker) definition of the
way-below relation that is formulated using strong existence in
\emph{op.cit.} Both are equivalent if Markov's Principle is
assumed. By directly phrasing the way-below relation in terms of
weak existence, Markov's Principle can be avoided. 
A helpful pattern of proof that exploits weak existence is the
following. Suppose that $\Gamma, \exists x. A \entails B$ and
$\Gamma \entails \wex x. A$. Then $\Gamma \entails \neg\neg B$. 
Similarly, the notion of bounded completeness, phrased in terms of
weak existence, is stronger than that of \emph{op.cit.} which uses
strong existence. Technically, we need to use weak existence of an
upper bound to establish that the continuous completion of a
(bounded complete) predomain base has suprema of all increasing
chains (Corollary \ref{cor:bounded-complete-complete}).
Conceptually, weak existence suffices as the witness of boundedness
of a finite subset of a predomain base is not used in the
construction of the least upper bound.
\end{remark}

\begin{example} \label{example:bases}
  Let $B$ be a countable set with decidable equality, that is,
  $\forall b, b' \in B. (b =
  b') \lor \neg (b = b')$ is (constructively) provable. Then $(B, =)$
  and $(B^*, \below_\pref)$ are predomain bases where $B^*$ is the
  set of finite sequences of $B$ and $\below_\pref$ is the prefix
  ordering. Both are bounded complete and satisfy $x \waybelow x$
  for all $x \in B$ (resp. $x \in B^*$).

  If $(B, \below)$ and $(C, \below)$ are predomain bases, then so
  are $B \times C$ and $B + C$ with the pointwise and co-pointwise
  ordering. Moreover $(B\lift, \below\lift)$ is a predomain base
  where $B\lift = B \cup \lbrace \bot \rbrace$ (we tacitly assume
  $\bot \notin B$) and $b \below c$ if either $b = \bot$ or $b
  \neq \bot \neq c$ and $b \below c$. The predomain bases $B \times
  C$, $B + C$ and $B\lift$ are the \emph{product}, \emph{coproduct}
  and \emph{lifting} of $B$ and $C$ (resp. of $B$).
\end{example}

\begin{example} \label{ex:interval-predomain}
  The poset $\IQ = \lbrace (p, q) \in \Rat \times \Rat \mid
  p \leq q \rbrace$ ordered by $(p, q) \below (p', q')$ iff $p \leq
  p' \leq q' \leq q$ is called the \emph{predomain base of rational
  intervals}. We usually write $[p, q]$ for the pair $(p, q) \in
  \IQ$ and think of $[p, q]$ as a rational interval. For $\alpha  =
  [a, b] \in
  \IQ$, we sometimes  write $\alpha = [\lo \alpha, \hi \alpha]$
  to denote the lower and upper endpoint of $\alpha$, and $\alpha
  \pm \delta = [\lo \alpha - \delta, \hi \alpha + \delta]$ for the
  symmetric extension of $\alpha$ by $\delta \in \Rat_{\geq 0}$.
\end{example}

\noindent
It is not immediate (but easy) to see that $\IQ$ is a
predomain base.  The negative formulation of $\waybelow$ gives the
following characterisation that has been established in
\cite[Proposition 7.2]{Bauer:2009:CTC} using Markov's Principle.

\begin{lemma}\label{lemma:IQ-waybelow}
  Let $[p, q], [p', q'] \in \IQ$. Then
  $[p, q] \waybelow [p', q'] \mbox{ iff } p < p' \leq q' < q$.
\end{lemma}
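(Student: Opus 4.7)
The plan is to prove both directions directly from the definition of $\waybelow$ in Definition~\ref{defn:predomain-bases}, carefully using decidability of rational order and weak existence to avoid invoking Markov's Principle.

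For the forward direction, the idea is to apply the $\waybelow$ hypothesis to a single chain whose supremum is $[p',q']$. Consider $x_n := [p' - 2^{-n}, q' + 2^{-n}]$, which lies in $\IQ$ (since $p' \leq q'$) and forms a chain because $2^{-n}$ decreases. Verifying $\sqsup_n x_n = [p',q']$ in $\IQ$ is routine: $[p',q']$ bounds every $x_n$, and any other upper bound $[c,d]$ must satisfy $c \geq p' - 2^{-n}$ and $d \leq q' + 2^{-n}$ for all $n$, hence $c \geq p'$ and $d \leq q'$ by decidability of $<$ on $\Rat$ (one rules out $c < p'$ by picking $n$ with $2^{-n} < p' - c$). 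The $\waybelow$ hypothesis then yields $\wex n.\,(p \leq p' - 2^{-n} \wedge q' + 2^{-n} \leq q)$, which entails the decidable condition $p < p' \wedge q' < q$; as this consequent does not depend on $n$, the double negation implicit in $\wex$ collapses.

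For the backward direction, assume $p < p' \leq q' < q$ and let $(x_n) = ([a_n, b_n])$ be any chain in $\IQ$ with $\sqsup_n x_n = [a,b] \in \IQ$ and $[p', q'] \below [a,b]$, so $p' \leq a$ and $b \leq q'$, giving $p < a$ and $b < q$. The goal $\wex n.\,(p \leq a_n \wedge b_n \leq q)$ unfolds to $\neg \forall n.\,(a_n < p \vee b_n > q)$, which I would prove by assuming its premise and deriving a contradiction. Introduce the decidable predicates $P_n := (a_n \geq p)$ and $Q_n := (b_n \leq q)$; by the chain condition both are non-decreasing in $n$. First establish $\wex n.\,P_n$: otherwise $a_n < p$ for all $n$ would make $[p,b]$ a valid upper bound of the chain (using $p \leq b$ from $p < a \leq b$), forcing $a \leq p$ against $p < a$. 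Symmetrically, $\wex n.\,Q_n$.

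The final step is to derive a contradiction from $\wex n.\, P_n$, $\wex n.\, Q_n$, and $\forall n.\,(\neg P_n \vee \neg Q_n)$. The key manoeuvre is to show that any $m$ with $Q_m$ already forces $\forall n.\,\neg P_n$: monotonicity of $Q$ gives $Q_n$ for $n \geq m$, hence $\neg P_n$ for $n \geq m$ by the hypothesis, and the contrapositive of monotonicity of $P$ then spreads $\neg P_n$ to all $n$. This contradicts $\wex n.\,P_n$, so $\forall m.\,\neg Q_m$, contradicting $\wex n.\,Q_n$. The hard part is precisely this step: since only weak existences are available, one cannot naively extract indices $m_1, m_2$ with $P_{m_1}$ and $Q_{m_2}$ and take a common upper bound, so the argument must exploit monotonicity of the chain to play the two decidable predicates off each other entirely within weak existence.
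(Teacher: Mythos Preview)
Your proof is correct and follows the same route as the paper: the same test chain $[p'-2^{-n},q'+2^{-n}]$ for the forward direction, and for the converse the same two separate weak existences $\wex n.\,a_n\geq p$ and $\wex m.\,b_m\leq q$ obtained by the same upper-bound contradiction. The only difference is in the final combination step: the paper simply takes $N=\max\{n,m\}$, which \emph{is} valid here despite your remark to the contrary, because the target $\wex N.\,[p,q]\below[a_N,b_N]$ is itself a $\neg\neg$-formula and intuitionistically $\neg\neg A\wedge\neg\neg B\to\neg\neg(A\wedge B)$, so under the outer double negation one may treat $n,m$ as given and use monotonicity of the chain. Your explicit monotonicity-and-contradiction argument is a faithful unpacking of exactly this, just more laborious than necessary.
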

\begin{proof}
  For the only-if direction, assume that $[p, q] \waybelow [p',
  q']$. As $[p', q'] = \sqsup_n [p' - 2^{-n}, q' + 2^{-n}]$ there
  must exist 
  $n \in \Nat$ such that $[p, q] \below [p' - 2^{-n}, q' +
  2^{-n}]$ from which we obtain that $p \leq p' - 2^{-n} < p' \leq q'
  < q' + 2^{-n} \leq q$ and hence $p < p' \leq q' < q$
 using decidability of order on $\Rat$.

  For the converse, assume that $p < p' \leq q' < q$ and assume that
  $[p', q'] \below [a, b] = \sqsup_n [a_n, b_n]$. Then $a = \sup_n
  a_n$ and $b = \inf_n b_n$. We claim that there must exist $n$
  and $m$ so that $a_n \geq p$ and $b_m \leq q$. So assume that $a_n
  \leq p$ for all $n \in \Nat$. Then $p$ is an upper bound of
  $(a_n)_n$ and therefore $a \leq p$. Hence $a \leq p < p' \leq a$,
  contradiction. The proof of classical existence of $m$ is analogous.
  Hence there must exist
  $N= \max \lbrace n, m \rbrace$ such that we have $p \leq a_N \leq
  b_N \leq q$, that is, $[p, q] \below [a_N, b_N]$. 
\end{proof}

\begin{lemma}
$\IQ$ is a predomain base. 
\end{lemma}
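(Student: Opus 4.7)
The plan is to verify the three clauses of Definition \ref{defn:predomain-bases} in order: countability, decidability of the order, and existence of approximating sequences.

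First, I would observe that $\IQ$ is countable because it is a decidable subset of $\Rat \times \Rat$: given $(p,q) \in \Rat \times \Rat$, the condition $p \leq q$ is decidable, and $\Rat \times \Rat$ is countable. The order relation $[p,q] \below [p',q']$ unfolds to the conjunction $p \leq p' \land p' \leq q' \land q' \leq q$ of rational inequalities, each of which is decidable, so $\below$ is decidable.

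The core step is to produce an approximating sequence for an arbitrary $[p,q] \in \IQ$. My candidate is
\[
b_n \;=\; [\,p - 2^{-n},\; q + 2^{-n}\,],
\]
which is patently in $\IQ$ and forms a chain, since $p - 2^{-n} \leq p - 2^{-(n+1)}$ and $q + 2^{-(n+1)} \leq q + 2^{-n}$ give $b_n \below b_{n+1}$. By Lemma~\ref{lemma:IQ-waybelow}, $b_n \waybelow [p,q]$ holds because $p - 2^{-n} < p \leq q < q + 2^{-n}$. It remains to show $\sqsup_n b_n = [p,q]$. Clearly $[p,q]$ is an upper bound. For leastness, suppose $[p',q']$ is any upper bound; then $p - 2^{-n} \leq p'$ and $q' \leq q + 2^{-n}$ for every $n$. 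To conclude $p \leq p'$ constructively, I would argue by contradiction using decidability of order on $\Rat$: if $p' < p$ then $p - p' > 0$ and picking $n$ with $2^{-n} < p - p'$ yields $p - 2^{-n} > p'$, contradicting the upper-bound property; the inequality $q' \leq q$ is symmetric. Hence $[p,q] \below [p',q']$, so $[p,q]$ is the least upper bound.

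There is no real obstacle here; the only mildly delicate point is the constructive reading of the ``limit'' step when verifying leastness of the supremum, and this is handled cleanly by the decidability of rational order together with Archimedeanness of $\Rat$. Everything else is immediate from Lemma~\ref{lemma:IQ-waybelow}.
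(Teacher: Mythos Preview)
Your proof is correct and uses exactly the same approximating sequence $([p-2^{-n}, q+2^{-n}])_n$ as the paper; you simply spell out the countability, decidability, chain, way-below, and supremum verifications that the paper leaves implicit in its one-line proof.
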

\begin{proof}
Let $[p, q] \in \IQ$ be given. Then $([p-2^{-n}, q+2^{-n}])_n$ is a
approximating sequence of $[p, q]$. 
\end{proof}

\begin{lemma} \label{lemma:IQ-cons-cont}
Consistency on $\IQ$ is continuous. 
\end{lemma}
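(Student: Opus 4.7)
The plan is to reduce the statement to an elementary inequality on rationals and exploit the fact that order on $\Rat$ is decidable. First, I would observe that for any finite set $\{[p_i, q_i] : i \in I\} \subseteq \IQ$, the condition $\max_i p_i \leq \min_i q_i$ is decidable (as $I$ is finite and order on $\Rat$ is decidable) and is equivalent to the weak existence of an upper bound in $\IQ$: if it holds then $[\max_i p_i, \min_i q_i]$ is an upper bound (recall $[p,q] \below [p',q']$ iff $p \leq p' \leq q' \leq q$, so any upper bound $[p,q]$ must satisfy $p_i \leq p \leq q \leq q_i$ for all $i$), and otherwise no upper bound can exist. So $\Cons$ on $\IQ$ is in fact a decidable predicate of finite subsets, and weak existence of an upper bound collapses to the rational inequality above.

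Next, suppose $a_i = [p_i, q_i] = \sqsup_j a_{i,j}$ with $a_{i,j} = [p_{i,j}, q_{i,j}]$, and that $\Cons\{a_{i,j} : i \in I\}$ for every $j \in \Nat$. The chain condition $a_{i,j} \below a_{i,j+1}$ unfolds to $p_{i,j} \leq p_{i,j+1} \leq q_{i,j+1} \leq q_{i,j}$, so $(p_{i,j})_j$ is nondecreasing and $(q_{i,j})_j$ is nonincreasing. Because $\sqsup_j a_{i,j}$ exists in $\IQ$ and equals $[p_i, q_i]$, we must have $p_i = \sup_j p_{i,j}$ and $q_i = \inf_j q_{i,j}$ as rational limits of these monotone rational sequences. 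By the preceding paragraph, the stage-$j$ hypothesis $\Cons\{a_{i,j}\}$ becomes $\max_i p_{i,j} \leq \min_i q_{i,j}$.

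The key combinatorial step is then: for any $j, j' \in \Nat$, set $k = \max(j, j')$. Monotonicity yields $\max_i p_{i,j} \leq \max_i p_{i,k}$ and $\min_i q_{i,k} \leq \min_i q_{i,j'}$, while the stage-$k$ consistency gives $\max_i p_{i,k} \leq \min_i q_{i,k}$. Chaining these three inequalities delivers $\max_i p_{i,j} \leq \min_i q_{i,j'}$. Since $I$ is finite, $\max$ commutes with $\sup$ and $\min$ with $\inf$, so $\max_i p_i = \sup_j \max_i p_{i,j}$ and $\min_i q_i = \inf_{j'} \min_i q_{i,j'}$; taking the supremum over $j$ and the infimum over $j'$ in the derived inequality yields $\max_i p_i \leq \min_i q_i$, hence $\Cons\{a_i : i \in I\}$.

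The main obstacle is less mathematical than bookkeeping: one has to verify that reading $\sqsup$ in $\IQ$ as the limits $\sup_j p_{i,j}$ and $\inf_j q_{i,j}$ of rational sequences is legitimate, and that the weak existential in the definition of $\Cons$ collapses to actual existence via decidability of rational order, so that no appeal to Markov's Principle sneaks in.
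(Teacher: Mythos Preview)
Your proof is correct and follows essentially the same route as the paper: both reduce $\Cons\{a_i\}$ to the decidable rational inequality $\max_i \lo{a_i} \leq \min_i \hi{a_i}$ and derive it from the stage-wise inequalities $\max_i \lo{a_{i,j}} \leq \min_i \hi{a_{i,j}}$. The paper's proof is considerably more terse, simply asserting that the stage-wise inequalities ``imply the claim'' without spelling out the $k=\max(j,j')$ step or the commutation of $\max$/$\min$ with $\sup$/$\inf$ that you make explicit; your additional care about decidability collapsing the weak existential is also absent from the paper but is a reasonable point to record in a constructive setting.
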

\begin{proof}
Let $I$ be a finite set and let $\alpha_i = \sqsup_j \alpha_{i, j}
\in \IQ$ for all $i \in I$. Assume furthermore that $\lbrace
\alpha_{i, j} \mid i \in I \rbrace$ is consistent for all $j \in
\Nat$, we show that $\lbrace \alpha_i \mid  i \in I \rbrace$ is
consistent. 
The latter is the case if $\max \lbrace \underline{\alpha}_i \mid i
\in I \rbrace \leq \min \lbrace \overline{\alpha}_i \mid i \in I
\rbrace$. We have, for all $i \in J$, that $\max \lbrace
\underline{\alpha}_{i, j} \mid i \in I \rbrace \leq \min \lbrace
\overline{\alpha}_{i, j} \mid i \in I \rbrace$ which implies the
claim.
\end{proof}

\begin{example}\label{consis}
Consistency is not automatically continuous. Consider for instance
the predomain base $B = \IQ \setminus \lbrace [0, 0] \rbrace$ and
two sequences
$\alpha_n = [-1, 2^{-n}]$ and $\beta_n = [-2^{-n}, 1]$. Then
$\alpha_n$ and $\beta_n$ are consistent for all $n \in \Nat$ but
$[-1, 0] = \sqsup_n \alpha_n$ and $[0, 1] = \sqsup_n \beta_n$ are
not.
\end{example}

\noindent
We collect some basic facts about posets and the way-below relation
that are used in the proof of our first technical result, the
interpolation property (Proposition \ref{propn:interpolation} and Corollary
\ref{cor:interpolation}).  The majority of results are standard in
(classical) domain theory, see e.g. \cite{Abramsky:1994:DT}, and we
include them here both to be self-contained and to demonstrate that
they continue to hold in our framework.

\begin{lemma}\label{lemma:way-below-below}
  Let $(P, \below)$ be a poset for which $\below$ is decidable. Then $b \below c$ whenever $b
  \waybelow c$. Moreover, $a \below b \waybelow c$ implies
  that $a \waybelow c$, and similarly $a \waybelow b \below c$
  implies that $a \waybelow c$, for $a, b, c \in P$. 
\end{lemma}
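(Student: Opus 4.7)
The plan is to unfold the definition of $\waybelow$ directly in each of the three cases, being careful about the weak existential quantifier ($\wex = \neg\forall\neg$) in the defining clause and, for the first part, about the role of decidability of $\below$.

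For the claim that $b \waybelow c$ implies $b \below c$, I would instantiate the defining condition of $\waybelow$ with the constant chain $x_n = c$. This is a chain, its supremum exists in $P$ and equals $c$, and $c \below c$ holds trivially. Thus the definition yields $\wex n \sep b \below c$, which is just $\neg\neg (b \below c)$. Since $\below$ is decidable, $(b \below c) \vee \neg(b \below c)$ holds, so the double negation collapses and $b \below c$ follows.

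For the monotonicity in the first argument, $a \below b \waybelow c \Rightarrow a \waybelow c$: let $(x_n)_n$ be any chain in $P$ with $\sqsup_n x_n$ existing in $P$ and $c \below \sqsup_n x_n$. From $b \waybelow c$ we obtain $\wex n \sep b \below x_n$, and because $a \below b$, the implication $b \below x_n \to a \below x_n$ is valid by transitivity of $\below$. Since $\neg\neg$ is monotone under such pointwise implications, we conclude $\wex n \sep a \below x_n$, which is exactly $a \waybelow c$.

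For $a \waybelow b \below c \Rightarrow a \waybelow c$: given any chain $(x_n)_n$ with $\sqsup_n x_n$ existing and $c \below \sqsup_n x_n$, transitivity of $\below$ gives $b \below \sqsup_n x_n$, and then $a \waybelow b$ directly supplies $\wex n \sep a \below x_n$. The only subtle point worth flagging is the first claim: without decidability of $\below$ one would only be able to conclude $\neg\neg(b \below c)$, so the decidability hypothesis is essential there. The other two parts are purely logical manipulations of weak existence and go through without any extra assumption.
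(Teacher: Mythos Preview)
Your proposal is correct and follows essentially the same approach as the paper: instantiate the definition of $\waybelow$ with the constant chain $x_n = c$ for the first claim and argue directly from the definition for the two transitivity-type claims. The paper's proof is terser but identical in substance; it also remarks (immediately after the proof) that decidability of $\below$ is what makes $b \below c$ stable under $\neg\neg$, exactly as you explain.
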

\begin{proof}
  For the first item, assume that $b \waybelow c$. As $c = \sqsup_n
  c$ there must exist $n$ such that $b \below c$ whence $b \below
  c$.

  Now suppose that $a \below b \waybelow c$. Let $c \below \sqsup_n
  c_n$. Then there is $n$, weakly, such that $b \below c_n$ whence
  $a \below c_n$, too. Now suppose that $a \waybelow b \below c$ and
  let $c \below \sqsup_n c_n$. Then $b \below \sqsup_n c_n$ whence
  there is $n$, weakly, such that $a \below c_n$. 
\end{proof}

\noindent
The proof of the above lemma uses that $\below$ is $\neg\neg$-closed which follows from
decidability.

\begin{lemma}\label{lemma:waybelow-sup}
  Let $(P, \below)$ be a poset, $I$ be a finite set and $b_i, b_i' \in
  P$ with $b_i \waybelow b_i'$ for all $i \in I$. If $s = \sqsup_i b_i$ and $s' = \sqsup_i b_i'$ then $s
  \waybelow s'$.
\end{lemma}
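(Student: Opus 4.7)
The plan is to unfold the definition of $\waybelow$ and then combine finitely many instances of weak existence by taking the maximum of the resulting indices. Monotonicity of the chain then upgrades the pointwise bounds into a uniform one, from which the desired bound on $s$ follows by the universal property of the supremum.

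More concretely, suppose $(x_n)_n$ is a chain in $P$ with $\sqsup_n x_n$ existing and $s' \below \sqsup_n x_n$. We must show weakly that there exists $N$ with $s \below x_N$. For each $i \in I$, Lemma \ref{lemma:way-below-below} (combined with the hypothesis $b_i' \below s'$ that follows from $s' = \sqsup_i b_i'$) gives $b_i' \below \sqsup_n x_n$, so the assumption $b_i \waybelow b_i'$ yields, weakly, an index $n_i$ with $b_i \below x_{n_i}$.

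Since $I$ is finite, these weak existences can be combined: iterating the observation that $\neg\neg A_1 \wedge \neg\neg A_2 \to \neg\neg(A_1 \wedge A_2)$, we obtain weak existence of a tuple $(n_i)_{i\in I}$ with $b_i \below x_{n_i}$ for all $i \in I$. Setting $N = \max_{i\in I} n_i$ and using that $(x_n)_n$ is a chain, we conclude $b_i \below x_{n_i} \below x_N$ for every $i \in I$, so that $x_N$ is an upper bound of $\{ b_i \mid i \in I \}$, and hence $s = \sqsup_i b_i \below x_N$.

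The main obstacle is purely bookkeeping: the conclusion we seek is itself a weak existence statement, so the nested double negations never really need to be discharged. The place where finiteness of $I$ is indispensable is precisely the step combining the individual weak existences into a single one — for an infinite index set this distribution of $\neg\neg$ over a universal quantifier would fail constructively. Once this is handled, the rest of the argument is a direct use of decidability-free properties of suprema.
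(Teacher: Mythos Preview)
Your proof is correct and follows essentially the same approach as the paper's: from $b_i' \below s' \below \sqsup_n x_n$ obtain weak indices $n_i$ with $b_i \below x_{n_i}$, take the maximum, and use the universal property of $s$. The citation of Lemma~\ref{lemma:way-below-below} is unnecessary here (plain transitivity of $\below$ already gives $b_i' \below \sqsup_n x_n$), and your explicit handling of the finite combination of weak existentials via $\neg\neg A_1 \wedge \neg\neg A_2 \to \neg\neg(A_1 \wedge A_2)$ spells out a step the paper leaves implicit.
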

\begin{proof}
  Let $(x_n)$ be a chain in P where $s' =\sqsup_i b_i' \below \sqsup_n x_n$. Since $\sqsup_i b_i'$
  is an upper bound of $\lbrace b_i' \mid i \in I \rbrace$, we have $b_i' \below \sqsup_n x_n$  for all
  $i \in I$.  Moreover, by assumption $b_i
  \waybelow b_i'$ for all $i \in I$, there must exist $n_i$ such that $b_i \below x_{n_i}$.
  Now by setting  $n  = \max \lbrace n_i \mid i \in I \rbrace$, then we have
  $b_i \below x_n$ for all $i \in I$. Hence $\sqsup_i b_i \below
  x_n$ as $\sqsup_i b_i$ is the least upper bound. 
\end{proof}
\begin{cor}\label{cor:waybelow-sup-bounded-complete}
Let $(B, \below)$ be a bounded complete poset and $I$ a
finite set. If $b_i, b_i' \in B$ for $i \in I$ such that $b_i
\waybelow b_i'$ and $\lbrace b_i' \mid i \in I \rbrace$ is
consistent, then both $\sqsup_i b_i$ and $\sqsup_i b_i'$ exist in
$B$ and $\sqsup_i b_i \waybelow \sqsup_i b_i'$.
\end{cor}
\begin{proof}
  As $\lbrace b_i' \mid i \in I \rbrace$ is consistent, $\sqsup_i
  b_i'$ exists in $B$, and  there must exist an upper bound
  $x \in B$ such that $b_i' \below x$ for all $i \in I$. 
  Moreover, 
  $b_i \below b_i' \below x$ for all $i \in I$ as $b_i
  \waybelow b_i'$ hence $\lbrace b_i \mid i \in I \rbrace$ is also
  consistent (with upper bound $x$), hence $\sqsup_i b_i$ exists in
  $B$, and the claim follows from Lemma \ref{lemma:waybelow-sup}.
\end{proof}

\begin{lemma}\label{lemma:sup-chain}
  Let $(P, \below)$ be a poset and $(a_n)_n$ and $(b_n)_n$
  chains in $B$. If $s = (\sqsup_n a_n) \sqsup (\sqsup_n b_n)$
  exists in $P$ and
  $s_n = a_n \sqsup b_n$ then $s = \sqsup_n s_n$.
\end{lemma}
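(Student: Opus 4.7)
The plan is to show both that $s$ is an upper bound of $(s_n)_n$ and that it is the least such upper bound, without needing any hypothesis beyond the existence of the suprema already mentioned in the statement. The statement implicitly requires that each $s_n = a_n \sqsup b_n$ exists and that $\sqsup_n a_n$ and $\sqsup_n b_n$ exist (these appear in the definition of $s$), so I can use them freely.

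First I would observe that $(s_n)_n$ is a chain. Indeed, $a_n \below a_{n+1} \below s_{n+1}$ and $b_n \below b_{n+1} \below s_{n+1}$ since $(a_n)$ and $(b_n)$ are chains, so $s_{n+1}$ is an upper bound of $\lbrace a_n, b_n \rbrace$, hence $s_n = a_n \sqsup b_n \below s_{n+1}$. Next I would show that $s$ is an upper bound of $(s_n)_n$: for each $n$, $a_n \below \sqsup_n a_n \below s$ and $b_n \below \sqsup_n b_n \below s$, so $s$ bounds $\lbrace a_n, b_n \rbrace$ and hence $s_n \below s$.

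To finish, I would show that $s$ is the least such upper bound. Suppose $t \in P$ satisfies $s_n \below t$ for every $n$. Then for each $n$, both $a_n \below s_n \below t$ and $b_n \below s_n \below t$, so $t$ is an upper bound of both $(a_n)_n$ and $(b_n)_n$. Consequently $\sqsup_n a_n \below t$ and $\sqsup_n b_n \below t$, which means $t$ is an upper bound of $\lbrace \sqsup_n a_n, \sqsup_n b_n \rbrace$. By the defining property of $s = (\sqsup_n a_n) \sqsup (\sqsup_n b_n)$, this yields $s \below t$. Thus $s = \sqsup_n s_n$.

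There is no real obstacle here: the argument is purely order-theoretic and does not invoke the way-below relation, consistency, or any constructive subtlety such as weak existence. The only mild care needed is to keep straight which suprema are assumed to exist (the two suprema of chains and their binary join on the one hand, and each binary join $s_n$ on the other) and to verify upper-boundedness and leastness separately using these assumptions.
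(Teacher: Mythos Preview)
Your proof is correct and follows essentially the same approach as the paper: show that $s$ is an upper bound of each $s_n$ via $a_n \below \sqsup_n a_n \below s$ and $b_n \below \sqsup_n b_n \below s$, then show leastness by taking a competing upper bound $t$ and deducing $\sqsup_n a_n \below t$ and $\sqsup_n b_n \below t$. The only addition is your preliminary observation that $(s_n)_n$ is a chain, which is correct but not actually needed for the argument (and the paper omits it).
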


\begin{proof}
  We first show that $s = a \sqsup b$ is an upper bound of $a_n \sqsup
  b_n$ for all $n \in \Nat$. We have that $a_n \below a \below a
  \sqsup b$ as $a$ is an upper bound of $a_n$ and $a \sqsup b$ is an
  upper bound of $b$. Similarly $b_n \below a \sqsup b$. Hence $a
  \sqsup b$ is an upper bound of $a_n \sqsup b_n$ and $a_n \sqsup
  b_n \below a \sqsup b$ follows as $a_n \sqsup b_n$ is the least
  upper bound. 

  Now we show that $a \sqsup b$ is indeed the least upper bound of
  $a_n \sqsup b_n$. So take another upper bound $x$, that is, $a_n
  \sqsup b_n \below x$ for all $n \in \Nat$. Then $a_n \below a_n
  \sqsup b_n \below x$ and $b_n \below a_n \sqsup b_n \below x$ for
  all $n \in \Nat$. Hence $a = \sqsup_n a_n \below x$ and $b  =
  \sqsup b_n \below x$. As $a \sqsup b$ is the least upper bound of
  $a$ and $b$, it follows that $a \sqsup b \below x$.
\end{proof}

\noindent
The above facts are used to prove our first result, the
(weak) interpolation property. 

\begin{propn} \label{propn:interpolation}
  Let $(B, \below)$ be a bounded complete predomain base,
  and assume  that $\waybelow$ on $B$ is decidable. 
  Then $B$ has the weak interpolation property, that is,
  whenever $x \waybelow z$ for $x, z \in B$ there must exist $y \in B$
  such that $x \waybelow y \waybelow z$.
\end{propn}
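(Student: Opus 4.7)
The plan is to apply the way-below hypothesis $x \waybelow z$ to a carefully diagonalised chain whose terms are themselves way-below elements of the approximating sequence of $z$, rather than just to the approximating sequence of $z$ directly. Let $(z_n)_n$ be an approximating sequence of $z$, so that $z_n \waybelow z$ for every $n$ and $\sqsup_n z_n = z$; for each $n$, let $(z_n^{(m)})_m$ be an approximating sequence of $z_n$. Define the diagonal
\[
e_n = \sqsup_{i \leq n} z_i^{(n)},
\]
which exists in $B$ by bounded completeness, since every $z_i^{(n)}$ with $i \leq n$ lies $\below z_i \below z_n$, so the set $\{z_i^{(n)} : i \leq n\}$ is consistent with upper bound $z_n \in B$.

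The next step is to verify three facts about $(e_n)_n$. First, it is a chain, since $z_i^{(n)} \below z_i^{(n+1)}$ for each $i \leq n$ implies $e_n \below e_{n+1}$. Second, $\sqsup_n e_n = z$ in $B$: the upper bound $z$ is witnessed by $e_n \below z_n \below z$, and if $w$ is any upper bound of $(e_n)_n$ then $z_i^{(n)} \below e_n \below w$ for all $n \geq i$ yields $z_i = \sqsup_{n \geq i} z_i^{(n)} \below w$ and hence $z = \sqsup_i z_i \below w$. Third, and most importantly, $e_n \waybelow z_n$: this follows from Corollary~\ref{cor:waybelow-sup-bounded-complete} applied with $b_i = z_i^{(n)}$ and $b_i' = z_i$ for $i \leq n$, using the pairwise hypotheses $z_i^{(n)} \waybelow z_i$ and the consistency of $\{z_i : i \leq n\}$ (bounded above by $z_n$).

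With these in hand, the definition of $x \waybelow z$ applied to the chain $(e_n)_n$, whose supremum $z$ lies in $B$, produces weakly some $N \in \Nat$ with $x \below e_N$. Then $x \below e_N \waybelow z_N$ promotes to $x \waybelow z_N$ by Lemma~\ref{lemma:way-below-below}, and $z_N \waybelow z$ holds because $(z_n)_n$ is an approximating sequence of $z$. Setting $y = z_N$ therefore gives $x \waybelow y \waybelow z$, as required, and the argument is carried out under weak existence throughout, matching the claim.

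The main obstacle is locating the right intermediate construction. The naive attempt of applying $x \waybelow z$ to $(z_n)_n$ directly only produces some weakly existing $N$ with $x \below z_N$, which falls short of $x \waybelow z_N$ and so cannot be promoted via Lemma~\ref{lemma:way-below-below}. The diagonal $e_n$ interposes an extra layer of strict approximation between each $z_i^{(n)}$ and $z_i$, which is exactly what the bounded-complete version of Corollary~\ref{cor:waybelow-sup-bounded-complete} needs to upgrade the trivial $e_n \below z_n$ to the genuine $e_n \waybelow z_n$, thereby closing the gap.
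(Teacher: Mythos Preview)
Your argument is correct. The diagonal $e_n = \sqsup_{i \leq n} z_i^{(n)}$ is well-defined by bounded completeness, the chain condition and the equality $\sqsup_n e_n = z$ are verified as you indicate, and Corollary~\ref{cor:waybelow-sup-bounded-complete} gives $e_n \waybelow z_n$ exactly as you describe. Applying $x \waybelow z$ to the chain $(e_n)_n$ then yields the weak interpolant $y = z_N$.

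Your route differs genuinely from the paper's. The paper enumerates $B = \{b_n \mid n \in \Nat\}$ and builds
\[
c_n = \sqsup \{\, b_i \mid i \leq o+n,\ \exists j \leq o+n\,.\, b_i \waybelow b_j \waybelow z \,\},
\]
showing $\sqsup_n c_n = z$ and then extracting an interpolant from the witnesses $b_i'$ with $b_i \waybelow b_i' \waybelow z$. This construction genuinely needs decidability of $\waybelow$ to form the finite index set in the definition of $c_n$. Your approach instead exploits the approximating sequences directly (and a second layer of approximating sequences), avoiding the enumeration altogether. As a consequence, your proof never invokes decidability of $\waybelow$: you only use decidability of $\below$ (via Lemma~\ref{lemma:way-below-below}) and bounded completeness (via Corollary~\ref{cor:waybelow-sup-bounded-complete}). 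So you have in fact established a slightly stronger statement than the proposition as stated, eliminating one of its hypotheses. The paper's proof, on the other hand, follows the classical domain-theoretic argument more closely and makes the role of the countable enumeration explicit.
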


\begin{proof}
  We adapt the (classical) proof based on directed suprema (\cite[Lemma
  2.2.15]{Abramsky:1994:DT}) to our setting. Assume that $x
  \waybelow z$, and let $B = \lbrace b_n \mid n \in \Nat \rbrace$.

  As $z \in B$ and $B$ is a predomain base,
  we can find an element $b_n \in B$ with $b_n \waybelow z$ (e.g. the first
  element of an approximating sequence of $z$). By the same
  reasoning, we can find $b_m \in B$ with $b_m \waybelow b_n$. Let
  $o = \max \lbrace n, m \rbrace$ and consider the sequence
  \[ c_n = \sqsup \lbrace b_i \mid 0 \leq i \leq o+n, \exists 0 \leq j
  \leq o+n \sep b_i \waybelow b_j \waybelow z \rbrace. \]
  Then $c_n$ is well-defined, as suprema are taken over a non-empty,
  bounded (by $z$) and finite set. 

  We now claim that $\sqsup_n c_n = z$. First, it is clear that $c_n
  \below z$ for all $n \in \Nat$. To see that $z$ is a least upper
  bound of the $c_n$, suppose that $c_n \below u$ for all $n \in
  \Nat$, and we show that $z \below u$. Let $(z_n)_n$ be an
  approximating sequence for $z$. As $z = \sqsup_n z_n$, it suffices
  to show that $z_n \below u$ for all $n \in \Nat$. So let $n \in
  \Nat$. As $z_n \in B$, there exists an approximating sequence
  $(z_n^k)_k$ for $z$, and in particular $z_n^k \waybelow z_n
  \waybelow z$ for all $k \in \Nat$. Now fix an arbitrary $k \in
  \Nat$, we show that $z_n^k \below u$. As $B$ is countable, we can find $p, q \in \Nat$
  such that $z_n = b_p$ and $z_n^k = b_q$. Let $r = \max \lbrace p,
  q \rbrace$. Then $b_q = z_n^k  \waybelow z_n = b_p \waybelow z$
  and therefore $b_q \below c_r$ as $p, q \leq r \leq r + o$. As
  $c_r \below u$ we have that $z_n^k = b_q \below u$. As $k$ was
  arbitrary, this implies that $z_n = \sqsup_k z_n^k \below u$. By
  the same argument, as $n$ was arbitrary and $z = \sqsup_n z_n$, we
  may conclude that $z \below u$, thus establishing the claim.

  We now have that $x \waybelow z = \sqsup_n c_n$. Therefore, there
  weakly exists 
  $n \in \Nat$ such that $x \below c_n$.  Let 
  $c_n = \sqsup \lbrace b_i \mid i \in I \rbrace$ where $I \subseteq
  \lbrace 0, \dots, o+n \rbrace$ is a finite, non-empty set. For
  each $i \in I$ we can moreover find $b'_i \in \lbrace b_0, \dots,
  b_{n+o} \rbrace$ with $b_i \waybelow b_i' \waybelow z$. 

  By Lemma \ref{lemma:waybelow-sup} we have that $x \below \sqsup
  \lbrace b_i \mid i \in I \rbrace \waybelow \sqsup \lbrace b_i'
  \mid i \in I \rbrace \waybelow z$. Therefore $y = \sqsup \lbrace
  b_i' \mid i \in I \rbrace$ is our desired interpolant. This (only) shows
  weak existence of an interpolant, due to the weak existence of the
  number $n$ used in its construction. 
\end{proof}

\begin{cor}\label{cor:interpolation}
  Let $(B, \below)$ be a bounded complete predomain base
  for which $\waybelow$ is decidable. If $b_1, \dots, b_n \in B$ and
  $b_i \waybelow c$ for all $i = 1, \dots, n$ then there must exist an
  interpolant $b \in B$ such that $b_i \waybelow b \waybelow c$ for
  all $1 \leq i \leq n$.
\end{cor}
\begin{proof}
  By the previous lemma, we can find interpolants $\hat b_i$ for
  each $1 \leq i \leq n$ such that $b_i \waybelow \hat b_i \waybelow
  c$. By Lemma \ref{lemma:waybelow-sup} we have that $b =
  \sqsup\lbrace \hat b_i \mid 1 \leq i \leq n \rbrace$ satisfies $b
  \waybelow c$ and moreover $b_i \waybelow \hat b_i \below b$ so
  that $b_i \waybelow b$ for all $1 \leq i \leq n$.
\end{proof}

\noindent
We conclude the section with a technical lemma on swapping the order
of suprema that we will use later.

\begin{lemma} \label{lemma:sup-swap}
  Suppose that $P$ is a poset and $f: \Nat \times \Nat \to P$ is
  monotonic, i.e. $n \leq n'$ and $k \leq k'$ implies $f(n, k)
  \below f(n', k')$. Then
  \begin{enumerate}
    \item the sequence $(f(n, m))_n$ is monotonic for all $m \in
    \Nat$
    \item if $\sqsup_n f(n, m)$ exists for all $m \in \Nat$, then
      $(\sqsup_n f(n, m))_m$ is monotonic
    \item if both $\sqsup_n f(n, n)$ and $\sqsup_m \sqsup_n f(n, m)$
    both exist, they are equal.
  \end{enumerate}
\end{lemma}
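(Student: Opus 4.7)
For part (1), the claim is immediate from the hypothesis: fixing $m$ and taking $n \le n'$, the pair inequality $(n,m) \le (n',m)$ yields $f(n,m) \below f(n',m)$ directly.

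For part (2), I would pick $m \le m'$ and verify that $\sqsup_n f(n,m')$ is an upper bound of the chain $(f(n,m))_n$: for each $n$ we have $f(n,m) \below f(n,m') \below \sqsup_k f(k,m')$ using $(n,m) \le (n,m')$. Hence $\sqsup_n f(n,m) \below \sqsup_n f(n,m')$ by the least-upper-bound property.

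Part (3) is the only step with any substance, but it is still essentially bookkeeping. Write $d = \sqsup_n f(n,n)$ and $D = \sqsup_m \sqsup_n f(n,m)$, both assumed to exist. For the inequality $D \below d$, I would show that $d$ dominates each $\sqsup_n f(n,m)$, by verifying that $d$ is an upper bound of the chain $(f(n,m))_n$ for fixed $m$: given $n,m$, set $k = \max\{n,m\}$ so that $(n,m) \le (k,k)$ and hence $f(n,m) \below f(k,k) \below d$. Taking the supremum over $n$ first and then over $m$ gives $D \below d$. For the reverse inequality $d \below D$, fix $n$; then $f(n,n)$ appears in the chain $(f(k,n))_k$, so $f(n,n) \below \sqsup_k f(k,n)$, and this latter supremum is itself one of the terms whose supremum over $m$ is $D$, so $\sqsup_k f(k,n) \below D$. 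Thus $f(n,n) \below D$ for every $n$, and taking the supremum gives $d \below D$.

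The only potential subtlety is that the proof must use only the existence hypotheses stated --- in particular, for the $D \below d$ direction I must not invoke the existence of the intermediate suprema $\sqsup_n f(n,m)$ beyond what is already assumed for part (3), but since these form the very chain whose supremum is $D$, their existence comes for free from the existence of $D$. No interpolation, consistency, or way-below machinery is needed; monotonicity and universal properties of suprema suffice throughout.
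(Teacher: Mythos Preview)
Your proposal is correct and follows essentially the same approach as the paper. The only cosmetic difference is in part (3): the paper argues that $\sqsup_n f(n,n)$ satisfies the universal property of the iterated supremum (upper bound plus leastness), whereas you prove the two inequalities $D \below d$ and $d \below D$ directly; the underlying estimates (via $f(n,m) \below f(\max\{n,m\},\max\{n,m\})$ and $f(n,n) \below \sqsup_k f(k,n)$) are the same in both.
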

\begin{proof}
  The first item is immediate by monotonicity of $f$. For the second
  item, fix $m \in \Nat$. We show that $\sqsup_n f(n, m) \below
  \sqsup_n f(n, m+1)$. This is immediate as $\sqsup_n f(n, m+1)$ is
  an upper bound of $f(n, m)$ for all $n \in \Nat$. 

  For the last item, suppose that $\sqsup_n f(n, n)$ and $\sqsup_m
  \sqsup_n f(n, m)$ both exist, in particular this entails that
  $\sqsup_n f(n, m)$ exists for all $m \in \Nat$. We first show that
  $\sqsup_n f(n, n)$ is an upper bound of all $\sqsup_n f(n, m)$ for
  all $n \in \Nat$. By monotonicity, we have that
  $\sqsup_n f(n, m) = \sqsup_{n \geq m} f(n, m) \below \sqsup_{n
  \geq m} f(n, n) = \sqsup_n f(n, n)$. To finish the proof, we need
  to show that $\sqsup_n f(n, n)$ is the least upper bound of
  $\sqsup_n f(n, m)$ for all $m \in \Nat$. So let $c$ be a
  competitor, i.e. $\sqsup_n f(n, m) \below c$ for all $m \in \Nat$.
  We show that $\sqsup_n f(n, n) \below c$. This follows once we
  establish that $f(n, n) \below c$ for all $n \in \Nat$ as
  $\sqsup_n f(n, n)$ is the least upper bound of all $f(n, n)$. So
  let $n \in \Nat$. But this is evident as $f(n, n) \below \sqsup_k
  f(k, n) \below c$ by assumption.
\end{proof}

\begin{cor} \label{cor:diagonal-sup}
  Let $P$ be a poset that has suprema of all increasing chains, and
  let $f: \Nat \times \Nat \to P$ be monotone. Then both $\sqsup_n f(n,
  n)$ and $\sqsup_m \sqsup_n f(n, m)$ exist and are equal.
\end{cor}

\section{Completion of Predomain Bases} \label{sec:completion}

We give a direct description of the rounded ideal completion of
\cite{Bauer:2009:CTC} with ideals being represented by chains.
The rounded ideal (or continuous) completion is distinguished from
the ideal completion by the definition of the order $\below$ on the
completion in terms of approximation $\waybelow$ on the underlying
predomain base, rather than in terms of its order.

\begin{defn}[Continuous Completion] \label{defn:cont-compl}
Let $(B, \below)$ be a predomain base. The \emph{continuous
completion} of $B$ is the set
$\hat B = \lbrace (b_n)_n \mid b_n \in B, b_n \below b_{n+1}
\mbox{ for all } n \in \Nat \rbrace$
of increasing sequences in $B$, with order relation defined by
\[ (b_n) \below (b'_n) \mbox{ iff } \forall b \in B\sep \forall n
\in \Nat \sep b \waybelow b_n \to \wex  m \in \Nat \sep b \waybelow b'_m \]
for increasing sequences $(b_n)_n$ and $(b'_n)_n$ in $B$.

The function $i: B \to \hat{B}$ that maps $b \in B$ to the constant
sequence $(b)_n$ is called the \emph{canonical embedding}, and in
the sequel we identify elements in $B$ with their canonical
embedding.
\end{defn}

\noindent
We show in Lemma \ref{4.8} that the canonical embedding indeed
preserves both $\below$ and $\waybelow$ which justifies our
terminology.
The above definition of the order $\below$ on the completion of a
predomain base showcases the first instance of our ``constructive
existence -- classical correctness'' approach in the classical
definition of the order relation on the completion above. In
particular, this implies that a realiser of $(b_n) \below (b'_n)_n$
carries no computational content. 
It is straightforward to see that the order relation $\below$
defined above is in fact a preorder. We omit the
straightforward proof of this fact.
\begin{lemma}
  The order relation $\below$ on the continuous completion $\hat B$
  of a predomain base $B$ is a preorder. 
\end{lemma}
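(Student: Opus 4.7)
The plan is to verify reflexivity and transitivity directly from the definition, with the only subtlety being the careful handling of the weak existential quantifier $\wex$ appearing in the definition of $\below$.

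For reflexivity, given an increasing sequence $(b_n)_n$, I need to show $(b_n) \below (b_n)$. I fix $b \in B$ and $n \in \Nat$ with $b \waybelow b_n$, and must find $m \in \Nat$ (weakly) with $b \waybelow b_m$. Taking $m = n$ immediately works, and this witness is in fact strongly existent, so no appeal to $\wex$ is needed.

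For transitivity, assume $(b_n) \below (b'_n)$ and $(b'_n) \below (b''_n)$. Fix $b \in B$ and $n \in \Nat$ with $b \waybelow b_n$. From the first hypothesis I obtain $\wex m \sep b \waybelow b'_m$. Here I use the reasoning pattern highlighted in the remark after Definition \ref{defn:predomain-bases}: from $\exists m \sep b \waybelow b'_m$, the second hypothesis yields $\wex k \sep b \waybelow b''_k$, so by the schema ``$\wex m \sep P(m)$ together with $\forall m \sep P(m) \to \wex k \sep Q(k)$ entails $\wex k \sep Q(k)$'' I conclude $\wex k \sep b \waybelow b''_k$, which is precisely what is required.

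The only potential obstacle is that one might worry whether chaining two weak existentials in this way produces a stronger statement than $\wex$, but since $\wex = \neg\neg\exists$ is stable under the implication chain (a double application of double-negation elimination on a $\neg\neg$-closed target), the conclusion remains of the correct weak-existential form. No use of interpolation, countability of $B$, or decidability of $\below$ is needed here; the proof relies only on the transitivity of implication of the weak existential.
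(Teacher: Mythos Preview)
Your proof is correct. The paper itself omits the proof entirely, stating only that it is straightforward, so your direct verification of reflexivity and transitivity---including the careful constructive handling of the chained weak existentials---is exactly the kind of argument the authors had in mind.
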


\noindent
The preorder $\below$ on the continuous completion $\hat B$ of a
predomain base induces an equality relation on $\hat B$ where
$b = b'$ iff $b \below b'$ and $b' \below b$. In particular, this
gives $(\hat B, \below)$ the structure of a poset, where
arbitrary suprema, if they exist, are unique up to equality, i.e. $s
= \sqsup_i a_i$ and $s' = \sqsup_i a_i$ implies $s = s'$. Moreover,
suprema are extensional: if $c_i = d_i$ for all $i \in I$ and $s =
\sqsup_i c_i$, $t = \sqsup_i d_i$ then $s = t$.

It is an easy but very useful observation that every element
of the continuous completion is equal to the supremum of
the elements of (the canonical embeddings of) its representing sequence.

\begin{lemma}\label{lemma:own-sup}
  Let $B$ be a predomain base and $x = (x_n)_n \in \hat B$. Then $x
  = \sqsup_n x_n$.
\end{lemma}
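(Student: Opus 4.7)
The plan is to unfold the definition of $\below$ on $\hat{B}$ from Definition \ref{defn:cont-compl} directly and verify the two properties of a supremum. Because the canonical embedding $i(x_n)$ is the constant sequence whose every entry is $x_n$, the quantifier over indices in the definition of $\below$ collapses and the argument reduces to essentially nothing.

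First I would show that $i(x_n) \below x$ for every $n \in \Nat$, establishing that $x$ is an upper bound. Fix $n$ and take any $b \in B$ and $k \in \Nat$ with $b \waybelow$ (the $k$-th element of $i(x_n)$), which is just $b \waybelow x_n$. Choosing $m = n$ in the sequence $x = (x_m)_m$ immediately gives $b \waybelow x_m$, witnessing the required (weak) existence and hence $i(x_n) \below x$.

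Next I would show that $x$ is the least upper bound. Suppose $y = (y_m)_m \in \hat{B}$ satisfies $i(x_n) \below y$ for all $n \in \Nat$, and I verify $x \below y$. Take any $b \in B$ and $n \in \Nat$ with $b \waybelow x_n$. Then $b \waybelow$ (the $0$-th element of $i(x_n)$), so the hypothesis $i(x_n) \below y$ yields weak existence of $m \in \Nat$ with $b \waybelow y_m$. This is exactly the condition for $x \below y$.

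The argument is entirely a matter of unpacking definitions, and I do not foresee an obstacle; the only thing to keep in mind is that the two uses of $\below$ on $\hat{B}$ above agree because the canonical embedding is a constant sequence, so the index $k$ appearing in the definition is irrelevant. Combined with the uniqueness of suprema up to the induced equality noted just before the lemma, this gives $x = \sqsup_n i(x_n)$, which under the identification of $x_n$ with $i(x_n)$ is the claimed equality $x = \sqsup_n x_n$.
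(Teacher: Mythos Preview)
Your proposal is correct and is essentially the same argument as the paper's: both verify directly from Definition~\ref{defn:cont-compl} that $x$ is an upper bound of the constant sequences $i(x_n)$ by taking $m=n$, and then that any competing upper bound $y$ dominates $x$ by invoking $i(x_n)\below y$ on the given $b\waybelow x_n$. The only cosmetic difference is that you spell out the role of the canonical embedding as a constant sequence more explicitly than the paper does.
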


\begin{proof}
  First, $x$ is an upper bound of all $x_n$. To see this, let $n \in
  \Nat$ and $a \in B$ with $a \waybelow x_n$. We have to show that
  there must exist some $k \in \Nat$ with $a \waybelow x_k$. But
  this clearly holds for $k = n$. To see that $x$ is the least upper
  bound of all $x_n$, consider a competing upper bound $c = (c_n)_n
  \in \hat B$ with $x_n \below c$ for all $n \in \Nat$. To see that $x
  \below c$, let $n \in \Nat$, $a \in B$ with $a \waybelow x_n$. As
  $x_n \below c$ there must exist $k \in \Nat$ such that $a
  \waybelow c_k$ which is precisely what we need to show for $x
  \below c$. 
\end{proof}

\noindent
We now show that the continuous completion $\hat B$ of a predomain base
$B$ has suprema of all increasing chains. Below, we write $M
\waybelow z$ if $x \waybelow z$ for all $x \in M$ and say that a
predomain base $B$ \emph{has weak interpolation} if there weakly exists $y$ such
that $M \waybelow y \waybelow z$ whenever $M$ is finite and $M
\waybelow z$. 

\begin{lemma} \label{lemma:sup-chains}
  Let $B = \lbrace b_n \mid n \in \Nat\rbrace$ be a predomain base
  and $(b_n^k)_k$ an approximating sequence of $b_n$. If
  $(c_n)_n$ is an increasing sequence in $\hat B$ and $c_n =
  (c_{n, m})_m$, then the following statements hold:
  \begin{enumerate}
    \item The set $D_k = \lbrace c_{n, m}^k \mid 0 \leq n, m \leq k
    \rbrace$ is consistent. If consistency is continuous, the same
    applies to the set $\hat D_k = \lbrace c_{n, k} \mid 0 \leq n \leq
    k \rbrace$.
    \item The sequence $(d_k)_k = (\sqsup D_k)_k$ is increasing. If
    consistency is continuous, the same applies to the sequence
    $(\hat d_k)_n = (\sqsup \hat D_k)_k$.
    \item If $B$ has weak interpolation, we have $c_n \below d$ for
    all $n \in \Nat$. If consistency on $B$ is moreover continuous,
    also $c_n \below \hat d$ for all $n \in \Nat$.
    \item If $c_n \below u$ for all $n \in \Nat$, then $d \below
    u$. If consistency on $B$ is continuous, then also $\hat d
    \below u$.

  \end{enumerate}
\end{lemma}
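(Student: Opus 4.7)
The plan is to tackle the four items in order, with the recurring tool being propagation of weak witnesses through the definition of $\below$ on $\hat B$ together with Lemma~\ref{lemma:way-below-below}.

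For item (1), I would show $D_k$ consistent by chasing the approximations $c_{n,m}^k \waybelow c_{n,m}$ through the $\hat B$-inequalities $c_{n,m} \below c_n \below c_k$, which hold by Lemma~\ref{lemma:own-sup} and the assumed monotonicity of $(c_n)$. Unfolding the constant-sequence comparison $c_{n,m} \below c_k$ yields weakly $q_{n,m}$ with $c_{n,m}^k \waybelow c_{k,q_{n,m}}$ for each $n,m \leq k$, and $c_{k,Q}$ with $Q = \max_{n,m \leq k} q_{n,m}$ then serves as a common weak upper bound in $B$. For $\hat D_k$, the set $\{c_{n,k}^j \mid n \leq k\}$ sits pointwise inside $D_{\max(j,k)}$ and is therefore consistent for every $j$; since $c_{n,k} = \sqsup_j c_{n,k}^j$, continuous consistency upgrades this to consistency of $\hat D_k$.

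Item (2) is immediate: $c_{n,m}^k \below c_{n,m}^{k+1}$ embeds $D_k$ elementwise into $D_{k+1}$, and analogously $c_{n,k} \below c_{n,k+1}$ embeds $\hat D_k$ into $\hat D_{k+1}$, so the corresponding suprema are monotone. For item (3), to prove $c_n \below d$ I would use weak interpolation: given $b \waybelow c_{n,m}$, pick $b'$ with $b \waybelow b' \waybelow c_{n,m}$, apply $\waybelow$ to the approximating chain of $c_{n,m}$ to obtain weakly $j$ with $b' \below c_{n,m}^j$, and then $c_{n,m}^j \below d_k$ for $k = \max(n,m,j)$ combines with Lemma~\ref{lemma:way-below-below} to give $b \waybelow d_k$. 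For $\hat d$ no interpolation is needed because $c_{n,m} \below c_{n,k} \below \hat d_k$ whenever $k \geq \max(n,m)$, so $b \waybelow c_{n,m} \below \hat d_k$ already delivers $b \waybelow \hat d_k$ directly.

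I expect item (4) to be the main obstacle, since $b \waybelow d_k$ does not decompose across the finite supremum defining $d_k$. The strategy is to show $d_k \waybelow u_P$ in $B$ for some weakly existing $P$ and then conclude via Lemma~\ref{lemma:way-below-below}. Unfolding $c_n \below u$ together with $c_{n,m}^k \waybelow c_{n,m}$ yields weakly $p_{n,m}$ with $c_{n,m}^k \waybelow u_{p_{n,m}}$ for each $n,m \leq k$; setting $P = \max p_{n,m}$ and using monotonicity of $(u_p)_p$ collapses these into $c_{n,m}^k \waybelow u_P$ uniformly, and Corollary~\ref{cor:waybelow-sup-bounded-complete} applied with $b_i' = u_P$ yields $d_k = \sqsup D_k \waybelow u_P$, from which $b \waybelow u_P$ follows. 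The same pattern works for $\hat d$ after rewriting $\hat d_k = \sqsup_j e_j$ with $e_j = \sqsup_{n \leq k} c_{n,k}^j$ (finite and chain suprema commute): $b \waybelow \hat d_k$ gives weakly $j$ with $b \below e_j$, and the max-and-corollary argument applied to $c_{n,k}^j \waybelow c_{n,k} \below u$ delivers $e_j \waybelow u_P$, whence $b \waybelow u_P$.
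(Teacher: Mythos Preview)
Your proposal is correct and follows essentially the same route as the paper's proof: the same chase of approximations through $c_n \below c_k$ for item~(1), the same ``each element of $D_k$ lies below one of $D_{k+1}$'' argument for item~(2), the same interpolation-then-absorb step for item~(3), and the same ``push each $c_{n,m}^k$ (resp.\ $c_{n,k}^j$) below some $u_r$, take the max, apply Corollary~\ref{cor:waybelow-sup-bounded-complete}'' pattern for item~(4), including the commutation $\hat d_k = \sqsup_j \sqsup_{n\leq k} c_{n,k}^j$. If anything, your write-up is slightly more careful than the paper's in two places: you make the use of weak interpolation in item~(3) explicit (the paper silently passes from $x \waybelow c_{n,m} = \sqsup_k c_{n,m}^k$ to $x \waybelow c_{n,m}^k$), and in item~(2) you correctly argue via elementwise domination rather than the paper's literal ``$D_k \subseteq D_{k+1}$'', which is not quite true as stated since the superscript changes.
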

\begin{proof}
  For the first item, fix $k \in \Nat$ and let $0 \leq n, m \leq k$.
  As $(c_{n, m}^i)_i$ is an approximating sequence of $(c_{n, m})$
  we have $c_{n, m}^k \waybelow c_{n, m}$. As $c_n \below c_k$,
  there must exist $r = r(n, m)$ such that $c_{n, m}^k \waybelow
  c_{k, r}$. Let $s = \max \lbrace r(n, m) \mid 0 \leq n, m \leq k
  \rbrace$. Then $c_{n, m}^k \waybelow c_{k, r} \below c_{k, s}$ so
  that $c_{k, s}$ is an upper bound of $D_k$.

  Now assume that consistency is continuous. Then consistency of
  $\hat D_k$ follows if the sets $\hat D_{k, i} = \lbrace c_{n, k}^i
  \mid 0 \leq n \leq k \rbrace$ are consistent for all $i \in \Nat$.
  Let $i \in \Nat$ and $r = \max \lbrace i, k \rbrace$. By what we
  have just demonstrated, there must exist an upper bound $b$ of the
  set $D_r$. We show that $b$ is an upper bound of $\hat D_{k, i}$.
  This follows since for $c_{k, n}^i \in \hat D_{k, i}$ we have that
  $c_{k, n}^i \below c_{k, n}^r \in D_r$ and the fact that $b$ is an
  upper bound of $D_r$.

  The second item, monotonicity of $(d_k)_k$ and $(\hat d_k)_k$ is
  clear since both $D_k \subseteq D_{k+1}$ and $\hat D_k \subseteq
  \hat D_{k+1}$.

  For the third item, we begin by showing that $c_n \below d$. So
  fix $m \in \Nat$ and suppose that $x \waybelow c_{n, m}$, we show
  that there must exist $k \in \Nat$ such that $x \waybelow d_k$. As
  $c_{n, m} = \sqsup_k c_{n, m}^k$ there must exist $k \in \Nat$
  such that $x \waybelow c_{n, m}^k$. The same relation holds if we
  replace $k$ by $k' = \max \lbrace n, m, k \rbrace$ so we assume
  that $k \geq n, m$ without loss of generality. Then $x \waybelow
  c_{n, m}^k \below \sqsup \lbrace c_{n, m}^k \mid 0 \leq n, m \leq k
  \rbrace = d_k$ as required. Now assume that consistency is
  continuous. To see that $c_n \below \hat d$, fix $m \in \Nat$, $x
  \in B$ and assume that $x \waybelow c_{n, m}$. We show that there
  must exist $k \in \Nat$ such that $x \waybelow d_k$. This holds,
  for example, if $k = \max \lbrace n, m \rbrace$ for then $x
  \waybelow c_{n, m} \below c_{n, k} \below \sqsup \lbrace c_{n, k}
  \mid 0 \leq n \leq k \rbrace = \hat d_k$.

  For the last item, assume that $u = (u_i)_i \in \hat B$ and
  $c_n \below u$ for all $n \in \Nat$. We first show that $d \below
  u$. To see this, fix $k \in \Nat$, $x \in B$ and assume that $x
  \waybelow d_k$. We show that there must exist $s \in \Nat$ such
  that $x \waybelow u_s$. By assumption, we have $x \waybelow d_k =
  \sqsup \lbrace c_{n, m}^k \mid 0 \leq n, m \leq k \rbrace$. Fix $0
  \leq n, m \leq k$. Since $c_{n, m}^k \waybelow c_{n, m}$ and $c_n
  \below u$, there must exist $r = r(n, m) \in \Nat$ such that
  $c_{n, m}^k \waybelow u_r$. If $s = \max \lbrace r(n, m) \mid 0
  \leq n, m \leq k \rbrace$, we have $c_{n, m}^k \waybelow u_s$ for
  all $0 \leq n, m \leq k$. Hence, by Corollary
  \ref{cor:waybelow-sup-bounded-complete} we obtain
  $d_k = \sqsup \lbrace c_{n, m}^k \mid 0 \leq n, m \leq k \rbrace
  \waybelow u_s$ as required. Now suppose that consistency is
  continuous. To see that $\hat d \below u$, fix $k \in \Nat$ and $x
  \in B$ such that $x \waybelow \hat d_k = \sqsup \lbrace c_{n, k}
  \mid 0 \leq n \leq k \rbrace$. We show that there must exist $s
  \in \Nat$ such that (as above) $x \waybelow u_s$. 
  Fix $0 \leq n \leq k$. As $x \waybelow \hat d_k
  = \sqsup_{0 \leq k \leq n} \sqsup_i c_{n, k}^i = \sqsup_i
  \sqsup_{0 \leq n \leq k} c_{n, k}^i$ there must exist $i \in \Nat$
  such that $x \below \sqsup_{0 \leq n \leq k} c_{n, k}^i$. 
  For this $i \in \Nat$, we moreover have that $c_{n, k}^i
  \waybelow c_{n, k}$ so that there must exist $r = r(n, k, i)$ for
  which $c_{n, k}^i \waybelow u_r$ since $c_n \below u$. Hence for
  $s = \max \lbrace r(n, k, i) \mid 0 \leq n \leq k \rbrace$ we have
  that $c_{n, k}^i \waybelow u_s$ so that $\sqsup_{0 \leq k \leq n}
  c_{n, k}^i \waybelow u_s$ by Corollary
  \ref{cor:waybelow-sup-bounded-complete}, and finally
  $x \below \sqsup_{0 \leq k \leq n} c_{n, k}^i \waybelow u_s$ as
  desired.
\end{proof}

\noindent
The last lemma finally puts us into a position to show that the
completion of a predomain base is in fact complete.

\begin{cor} \label{cor:bounded-complete-complete}
Let $(B, \below)$ be a bounded complete predomain base. Then $(\hat
B, \below)$ has suprema of all increasing chains.
\end{cor}
\begin{proof}
  For bounded complete predomain bases, we have established the weak
  interpolation property in Corollary \ref{cor:interpolation}. The
  claim follows from Lemma \ref{lemma:sup-chains}.
\end{proof}

\noindent
We have the following extension theorem.

\begin{propn}
Suppose that $B$ and $C$ are predomain bases for which consistency
is continuous, and suppose that $C$ has weak
interpolation. Then every Scott continuous map $f: B \to \hat C$ has a
Scott continuous extension $\hat f: \hat B \to \hat C$.
\end{propn}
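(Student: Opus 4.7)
The canonical candidate is to set $\hat f(x) = \sqsup_n f(x_n)$ for $x = (x_n)_n \in \hat B$, the supremum taken in $\hat C$; this is the only possibility for a Scott continuous extension, since Lemma \ref{lemma:own-sup} gives $x = \sqsup_n x_n$ in $\hat B$. Existence of this supremum is immediate: Scott continuity of $f$ on $B$ forces monotonicity, so $(f(x_n))_n$ is an increasing chain in $\hat C$, and Lemma \ref{lemma:sup-chains} applied to $C$---which has weak interpolation and continuous consistency by hypothesis---supplies a least upper bound. The extension property $\hat f(i(b)) = f(b)$ for $b \in B$ follows because $i(b)$ is the constant sequence. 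The technical ingredient that I would use throughout is that the preorder $\below$ on both $\hat B$ and $\hat C$ is defined by a universal quantifier with only a weak existential inside, and is therefore $\neg\neg$-stable; this allows the weak existential witnesses that arise when unfolding $\below$ to be discharged directly into genuine $\below$-inequalities.

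I would next establish monotonicity of $\hat f$; this will also yield well-definedness modulo the equality on $\hat B$. Suppose $y = (y_n)_n \below (x_n)_n = x$ in $\hat B$ and fix $n \in \Nat$. Take an approximating chain $(y_n^j)_j$ of $y_n$ in $B$. For each $j$ we have $y_n^j \waybelow y_n$, and unfolding $y \below x$ yields, weakly, an $m$ with $y_n^j \waybelow x_m$, whence $y_n^j \below x_m$ by Lemma \ref{lemma:way-below-below}. Monotonicity of $f$ then gives $f(y_n^j) \below f(x_m) \below \hat f(x)$, and $\neg\neg$-stability of $\below$ on $\hat C$ lets me discharge the weakly-existing $m$ to conclude $f(y_n^j) \below \hat f(x)$. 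Scott continuity of $f$ on the chain $(y_n^j)_j$ yields $f(y_n) = \sqsup_j f(y_n^j) \below \hat f(x)$, and a further supremum over $n$ delivers $\hat f(y) \below \hat f(x)$.

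For Scott continuity of $\hat f$, let $(x^k)_k$ be an increasing chain in $\hat B$ with supremum $x$, which exists by Lemma \ref{lemma:sup-chains} since consistency on $B$ is continuous. The inequality $\sqsup_k \hat f(x^k) \below \hat f(x)$ follows from monotonicity. For the reverse, I would write $x = (x_n)_n$ and reduce, via Scott continuity of $f$ on an approximating sequence $(x_n^j)_j$ of $x_n$ in $B$, to showing $f(x_n^j) \below \sqsup_k \hat f(x^k)$ for each $j$ and $n$. From $x_n^j \waybelow x_n$ and the relation $i(x_n) \below x = \sqsup_k x^k$, unfolded against the explicit $\hat d_r$-representation of $\sqsup_k x^k$ afforded by Lemma \ref{lemma:sup-chains}, I expect to weakly obtain indices $k$ and $p$ with $x_n^j \waybelow x^k_p$, giving $f(x_n^j) \below f(x^k_p) \below \hat f(x^k) \below \sqsup_k \hat f(x^k)$; $\neg\neg$-stability then closes the argument. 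The main obstacle is precisely this last step: extracting the witness $k$ requires opening up the representation of the supremum in $\hat B$ rather than treating it as a black box, and carefully tracking weak existential witnesses using continuity of consistency on $B$ together with the transitivity of $\waybelow$ up to $\below$ from Lemma \ref{lemma:way-below-below}.
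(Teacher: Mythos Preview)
Your proposal is correct and follows a sound line. The definition $\hat f(x) = \sqsup_n f(x_n)$ agrees with the paper's, but the verification of Scott continuity differs in style. The paper works concretely at the level of representing sequences: using Lemma~\ref{lemma:sup-chains} it writes down the $k$-th coordinate of both $\hat f(\sqsup_n b_n)$ and $\sqsup_n \hat f(b_n)$ explicitly as finite suprema, and then compares these coordinatewise, invoking weak interpolation on $C$ to push a $\waybelow$ through an intermediate supremum. Your route is more abstract: you first isolate monotonicity of $\hat f$ as a separate step (which the paper omits, and which also handles extensionality of $\hat f$ with respect to the defined equality on $\hat B$), and then reduce Scott continuity to monotonicity together with the $\neg\neg$-stability of $\below$ on $\hat C$. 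The paper's argument is shorter but asks the reader to track the elementwise formulas from Lemma~\ref{lemma:sup-chains}; yours is more modular and makes explicit why the classical formulation of $\below$ matters. The obstacle you flag at the end---extracting indices $k,p$ with $x_n^j$ below $x^k_p$ from the $\hat d_r$-representation of the chain supremum---is precisely where the paper's coordinatewise bookkeeping does the work, so you have correctly located the crux; carrying it out will require one further pass through approximating sequences of the $x^i_r$ and a finite maximum over $i\leq r$, just as in the proof of Lemma~\ref{lemma:sup-chains} itself.
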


\begin{proof}
  Let $(x^i)_i$ be an approximating sequence for each element $x \in
  B$. 
  Define $\hat f( (b_n)_n) = (f(b_n))_n$ for a monotone sequence
  $(b_n)_n \in \hat B$. Then $\hat f((b_n)_n)$ is monotone as $f$ is
  monotone. We show that $\hat f$ is Scott continuous. For this we
  fix an increasing sequence $(b_n)_n \in \hat B$ where $b_n =
  (b_{n, m})_m$ and use Lemma \ref{lemma:sup-chains} to establish
  that
  \[
  A_k
   = [ \hat f(\sqsup_n b_n)]_k 
   = [ \hat f(\sqsup_{n \leq k} ( b_{n, k})_k)]_k
   = f(\sqsup_{n \leq k} b_{n, k})
  \]
  for the $k$-th element $[ \cdot ]_k$ of $\hat f(\sqsup_n b_n)$ using that
  consistency on $C$ is continuous. 
  For the $k$-th element of $\sqsup_n \hat f(b_n)$ we similarly
  obtain
  \[
     B_k  = [ \sqsup_n \hat f(b_n)]_k 
      = [ \sqsup_n (f ( (b_{n, k})_k) ]_k 
      = \sqsup_{n \leq k} f(b_{n, k})
  \]
  also using Lemma \ref{lemma:sup-chains} and continuity of
  consistencey on $B$. 
  For the claim, we need to establish $(A_k)_k = (B_k)_k$. To see
  that $(A_k)_k \below (B_k)_k$ fix $k \in \Nat$ and $x \in C$ such
  that $x \waybelow A_k = f(\sqsup_{n \leq k} b_{n, k})$. We show
  that there must exist $l \in \Nat$ such that $x \waybelow B_l =
  \sqsup_{n \leq l} f(b_{n, l})$. As $C$ has
  weak interpolation, there must exist $y \in C$ such that
  $x \waybelow y \waybelow f(\sqsup_{n \leq k} b_{n, k})$. Therefore
  \[ y \waybelow f(\sqsup_{n \leq k} \sqsup_{i \in \Nat} b_{n, k}^i)
  = f(\sqsup_i \sqsup_{n \leq k} b_{n, k}^i) = 
  \sqsup_i f(\sqsup_{n \leq k} b_{n, k}^i) \]
  using continuity of $f$. 
  Therefore there must exist $i \in \Nat$ such that
  $x \waybelow y \below f(\sqsup_{n \leq k} b_{n, k}^i)$. As $b_n
  \below b_k$ for $n \leq k$ and $b_{n, k}^i \waybelow b_{n, k}$
  there must exist $j(n)$ such that $b_{n, k}^i \waybelow b_{k,
  j(n)}$. Let $j = \max \lbrace j(n) \mid 0 \leq n \leq k\}$. Then $x
  \waybelow f(\sqsup_{n \leq k} b_{n, k}^i) \below f(b_{k, j})$. For
  $l = \max \lbrace j, k \rbrace$ we therefore obtain that
  $x \waybelow f(b_{k, j}) \below \sqsup_{n \leq l} f(b_{n, l}) =
  B_l$.
  For the reverse direction $(B_k)_k \below (A_k)_k$ fix $k \in
  \Nat$ and $x \in C$ such that $x \waybelow B_k = \sqsup_{n \leq k}
  f(b_{n, k})$. We show that there must exist $l \in \Nat$ such that
  $x \waybelow f(\sqsup_{n \leq l} b_{n, l})$. But this is evident
  for $l = k$ as $f(b_{n, k}) \below f(\sqsup_{n \leq k} b_{n, k})$
  by monotonicity of $f$, for all $0 \leq k \leq n$, whence $x
  \waybelow \sqsup_{n \leq k} f(b_{n, k}) \below f(\sqsup_{n \leq k}
  b_{n, k})$.
\end{proof} 

\noindent
We now show that bounded completeness (Definition
\ref{defn:predomain-bases}) transfers from a predomain
base to its completion.

\begin{lemma} \label{lemma:completion-bounded-complete}
  Let $(B, \below)$ be a bounded complete predomain base and suppose
  that consistency on $B$ is continuous. Let $I \subseteq \hat B$ be
  finite and consistent.
  \begin{enumerate}
  \item $I_n = \lbrace x_n \mid (x_n)_n \in I \rbrace$ is consistent
  for all $n \in \Nat$.
  \item $\sqsup I = (\sqsup I_n)_n$
  \end{enumerate}
  i.e. the completion of a bounded complete predomain base is
  bounded complete, and finite suprema of consistent sets are
  calculated pointwise.
\end{lemma}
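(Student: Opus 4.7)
The plan is to deduce both claims from the existence of an upper-bound sequence $(u_k)_k \in \hat B$ for the finite consistent set $I$, combined with approximating sequences of each element of $I_n$ and the hypothesis that consistency on $B$ is continuous. Write $I = \lbrace (x^{(j)}_m)_m \mid 1 \leq j \leq r \rbrace$, so that $I_n = \lbrace x^{(j)}_n \mid 1 \leq j \leq r \rbrace$ is finite, and choose an approximating sequence $(x^{(j), i}_n)_i$ of each $x^{(j)}_n$.

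For (1), I would fix $n$ and unpack the definition of $\below$ on $\hat B$: from $(x^{(j)}_m)_m \below (u_m)_m$, any element of $B$ way-below some $x^{(j)}_n$ must be way-below some $u_{k}$. Feeding $x^{(j), i}_n \waybelow x^{(j)}_n$ into this and taking the maximum of the finitely many indices produced (over $j = 1, \dots, r$) gives, for each $i$, some $K_i \in \Nat$ so that $x^{(j), i}_n \waybelow u_{K_i}$ for all $j$, using monotonicity of $(u_k)_k$ and Lemma \ref{lemma:way-below-below}. Thus $\lbrace x^{(j), i}_n \mid 1 \leq j \leq r \rbrace$ is consistent for every $i$. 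Since $x^{(j)}_n = \sqsup_i x^{(j), i}_n$, continuity of consistency on $B$ transfers consistency across the suprema and yields that $I_n$ is consistent.

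For (2), bounded completeness of $B$ turns the consistency of $I_n$ into $s_n := \sqsup I_n \in B$, and monotonicity of each $(x^{(j)}_m)_m$ makes $(s_n)_n$ an increasing sequence, so $(s_n)_n \in \hat B$. That $(s_n)_n$ is an upper bound of $I$ is immediate: for $b \waybelow x^{(j)}_k$ we have $b \waybelow x^{(j)}_k \below s_k$, hence $b \waybelow s_k$ by Lemma \ref{lemma:way-below-below}.

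The hard part is the least-upper-bound clause. Given a competitor $(u_m)_m$ with $(x^{(j)}_m)_m \below (u_m)_m$ for each $j$, and given $b \in B$ and $k \in \Nat$ with $b \waybelow s_k$, the strategy is first to rewrite
\[
s_k \;=\; \sqsup_j x^{(j)}_k \;=\; \sqsup_j \sqsup_i x^{(j), i}_k \;=\; \sqsup_i \sqsup_j x^{(j), i}_k,
\]
where the finite inner suprema exist in $B$ by Corollary \ref{cor:waybelow-sup-bounded-complete} (since the $x^{(j), i}_k$ are bounded by $s_k$) and the swap of suprema is justified by Corollary \ref{cor:diagonal-sup}. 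Applying the definition of $\waybelow$ to the chain $(\sqsup_j x^{(j), i}_k)_i$ in $B$ yields, weakly, some $i$ with $b \below \sqsup_j x^{(j), i}_k$. For this $i$ and each $j$, the relation $x^{(j), i}_k \waybelow x^{(j)}_k$ together with $(x^{(j)}_m)_m \below (u_m)_m$ gives some $m^{(j)} \in \Nat$ with $x^{(j), i}_k \waybelow u_{m^{(j)}}$; setting $M = \max_j m^{(j)}$ and using monotonicity of $(u_m)_m$ together with Lemma \ref{lemma:way-below-below} brings all of the $x^{(j), i}_k$ way-below $u_M$. A final application of Corollary \ref{cor:waybelow-sup-bounded-complete} (with constant second argument $u_M$) gives $\sqsup_j x^{(j), i}_k \waybelow u_M$, and hence $b \waybelow u_M$, which is precisely what $(s_m)_m \below (u_m)_m$ demands. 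The main obstacle is this coordinated use of the $r$ approximating sequences: both the diagonal rewriting of $s_k$ and the uniform choice of $M$ are needed to convert a single approximation $b \waybelow s_k$ into a single witness in the tail of $(u_m)_m$, and without continuity of consistency on $B$ the inner sups $\sqsup_j x^{(j), i}_k$ would not be available in $B$ to feed back into the definition of $\waybelow$.
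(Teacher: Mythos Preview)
Your argument is correct and mirrors the paper's proof almost exactly: both use approximating sequences of the $x^{(j)}_n$ together with continuity of consistency for part~(1), and for the least-upper-bound clause of part~(2) both rewrite $s_k$ as a chain supremum $\sqsup_i\bigl(\sqsup_j x^{(j),i}_k\bigr)$ before pushing $b \waybelow s_k$ down to a term of this chain and then up into $(u_m)_m$ via a finite $\waybelow$-join. One small fix: the swap $\sqsup_j \sqsup_i = \sqsup_i \sqsup_j$ is not an instance of Corollary~\ref{cor:diagonal-sup}, which needs a map $\Nat \times \Nat \to P$ monotone in \emph{both} coordinates on a poset with all chain suprema, whereas here $j$ ranges over a finite set with no order among the $x^{(j)}$; the paper uses Lemma~\ref{lemma:sup-chain} (finite join commutes with chain supremum), which is the right tool.
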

\begin{proof}
  For the first item, let $b = (b_n)_n$ be an upper bound of $I$, and let
  $(x^i)_i$ be an approximating sequence of $x \in B$. As
  consistency on $B$ is continuous, it suffices to show that
  $\lbrace x_n^i \mid (x_n)_n \in I \rbrace$ is continuous for all
  $n, i \in \Nat$. So let $i, n \in \Nat$. For $x = (x_n)_n \in I$, 
  as $x_n^i \waybelow x_n$ and $(x_n)_n \below b$, there must exist
  $k(x)$ such that $x_n^i \waybelow b_{k(x)}$. Let $k = \max \lbrace
  k(x) \mid x \in I \rbrace$. Then $b_k$ is an upper bound of
  $\lbrace x_n^i \mid (x_n)_n \in I \rbrace$.

  For the second item, note that $\sqsup I_n \in B$ exists since
  $I_n$ is consistent and $(\sqsup I_n)_n$ is monotone as all $x \in
  I$ are monotone so that $(\sqsup I_n)_n \in \hat B$. We first show
  that $(\sqsup I_n)_n$ is an upper bound of all $x \in I$. So let
  $x \in I$, $n \in \Nat$ and assume that $z \in B$ with $z
  \waybelow x_n$. Then $z \waybelow \sqcup I_n$ whence $x \sqsubseteq (\sqcup I_n)_n$ by
  definition of $\below$ on $\hat B$. We now show that $(\sqsup
  I_n)_n$ is the least upper bound of $I$. So assume that $x \below
  b$ for all $x \in I$. We show that $(\sqsup I_n)_n \below b$. So
  let $n \in \Nat$, $z \in B$ with $z \waybelow \sqsup I_n$. We show
  that there must exist $k \in \Nat$ with $z \ll b_k$. As $\sqsup
  I_n = 
  \sqsup \lbrace \sqsup_i x_n^i \mid (x_n)_n \in I \rbrace =
  \sqsup_i \sqsup \lbrace x_n^i \mid (x_n)_n \in I \rbrace$ (using
  Lemma \ref{lemma:sup-chain}) and $z
  \waybelow \sqsup I_n$, there must exist $l \in \Nat$ such that $z
  \below \sqsup \{ x_n^k \mid (x_n)_n \in I \rbrace$. For $x = (x_n)_n
  \in I$ we moreover have that $x_n^k \waybelow x_n$, and since $x
  \below b$ there must exist $l(x) \in \Nat$ such that $x_n^i
  \waybelow b_{l(x)}$. Let $l = \max \lbrace l(x) \mid x \in I
  \rbrace$. Then $z \below \sqsup \lbrace x_n^k \mid (x_n)_n \in I
  \rbrace \waybelow b_l$ by Lemma \ref{lemma:waybelow-sup}.
\end{proof}

\noindent
We conclude the section with properties of the canonical embedding
that have already been reported in \cite{Bauer:2009:CTC}, albeit in
a slight different setting.

\begin{lemma}\label{4.8}
  Let $B$ be a bounded complete predomain base. Then the embedding
  $B \incl \hat B$ preserves and reflects $\below$ and preserves $\waybelow$.
\end{lemma}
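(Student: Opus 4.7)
My plan is to verify three things in turn: preservation of $\below$, reflection of $\below$, and preservation of $\waybelow$. The first two follow routinely from Lemma~\ref{lemma:way-below-below} together with the definition of $\below$ on $\hat B$; the third is substantive and exploits the diagonal presentation of suprema in $\hat B$ provided by Lemma~\ref{lemma:sup-chains}.

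For preservation of $\below$, if $b \below c$ in $B$ then any $a \waybelow b$ satisfies $a \waybelow c$ by Lemma~\ref{lemma:way-below-below}, so the witness required by the definition of $i(b) \below i(c)$ is immediate (any index $m$ suffices, since $i(c)$ is constant). For reflection, suppose $i(b) \below i(c)$ and let $(b^n)_n$ be an approximating sequence of $b$. Each $b^n \waybelow b$ then yields, weakly, some $m$ with $b^n \waybelow c$; because the conclusion does not mention $m$ and $\waybelow$ is $\neg\neg$-stable (being defined through $\wex$), we recover $b^n \waybelow c$ outright, whence $b^n \below c$ by Lemma~\ref{lemma:way-below-below}. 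Taking the supremum over $n$ then gives $b = \sqsup_n b^n \below c$.

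For preservation of $\waybelow$, assume $b \waybelow c$ in $B$ and let $(y_n)_n$ be an increasing chain in $\hat B$ whose supremum $s$ (existing by Corollary~\ref{cor:bounded-complete-complete}) satisfies $i(c) \below s$. Writing $y_n = (y_{n, j})_j$ and fixing approximating sequences $(y_{n, j}^i)_i$ in $B$, Lemma~\ref{lemma:sup-chains} lets us take $s = (d_k)_k$ with $d_k = \sqsup \lbrace y_{n, j}^k \mid 0 \leq n, j \leq k \rbrace$. Applying $i(c) \below (d_k)_k$ to the witness $b \waybelow c$ yields, weakly, some $k$ with $b \waybelow d_k$. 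For each pair $(n, j)$ with $n, j \leq k$, the chain relation $y_n \below y_k$ applied to $y_{n, j}^k \waybelow y_{n, j}$ weakly produces $j'_{n, j}$ with $y_{n, j}^k \waybelow y_{k, j'_{n, j}}$; setting $j^* = \max \lbrace j'_{n, j} \mid 0 \leq n, j \leq k \rbrace$, Lemma~\ref{lemma:waybelow-sup} delivers $d_k \waybelow y_{k, j^*}$. Then for any $a \waybelow b$ we chain $a \waybelow b \below d_k \below y_{k, j^*}$ and apply Lemma~\ref{lemma:way-below-below} twice to obtain $a \waybelow y_{k, j^*}$, so this single $j^*$ uniformly witnesses $i(b) \below y_k$.

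The hardest step is this last one: converting the global assertion $i(c) \below s$ in $\hat B$ into local information about a specific member $y_k$ of the chain. The explicit diagonal form of $s$ from Lemma~\ref{lemma:sup-chains}, combined with the closure of finite suprema under $\waybelow$ (Lemma~\ref{lemma:waybelow-sup}) and the weak interpolation supplied by Corollary~\ref{cor:interpolation}, is what allows us to match the index $k$ chosen for $b \waybelow d_k$ with a uniform $j^*$ inside $y_k$, yielding the required weakly existent $N = k$ with $i(b) \below y_N$.
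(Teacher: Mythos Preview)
Your proof is correct and follows essentially the same strategy as the paper: both use the diagonal representation of $\sqsup_n y_n$ from Lemma~\ref{lemma:sup-chains} and apply the definition of $i(c)\below(d_k)_k$ to the witness $b\waybelow c$ to extract a $k$ with $b\waybelow d_k$, then push this down to $y_k$. The only minor difference is that the paper argues more abstractly that $d_k\below y_k$ in $\hat B$ (via $y_{n,j}^k\below y_{n,j}\below y_n\below y_k$), whereas you unwind this into an explicit index $j^*$; also, you spell out the reflection of $\below$ carefully where the paper simply declares it clear.
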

\begin{proof}
  It is clear that the embedding preserves and reflects the order
  $\below$. To see that it preserves $\waybelow$, assume that $b, c
  \in B$ with $b \waybelow c$ in $B$. We show that $b \waybelow c$
  in $\hat B$ where we identify $b$ and $c$ with the constant
  sequences $(b)_n$ and $(c)_n$. 

  To see that $b \waybelow c$ in $\hat B$, let $(x_n)_n$ be a chain
  in $\hat B$ with $c \below \sqsup_n x_n$. Let $x_n = (x_{n, m})_m$
  and choose an approximating
  sequence $(a^n)_n$ for every element $a \in B$. As $\sqsup_n x_n =
  (\sqsup \lbrace x_{n, m}^k \mid 0 \leq n, m \leq k \rbrace)_k$ by Lemma
  \ref{lemma:sup-chains} and $b \waybelow c$, there must exist $k
  \in \Nat$ such that $b \below \sqsup \lbrace x_{n, m}^k \mid 0
  \leq k \leq n \rbrace$ in $B$ and hence also in $\hat B$. For $0 \leq n, m \leq k$ we have $x_{n,
  m}^k \below x_{n, m} \below x_n \below x_k$ and so $b \below x_k$
  as required.
\end{proof}

\section{The Scott topology} \label{sec:topology}

We investigate the Scott topology on the completion of a predomain
base. We show that every topologically continuous function is Scott
continuous, and that the Scott topologgy is generated by the
way-below relation. These results are also used later in the
function space construction. In particular, we employ a classical
definition of the Scott topology to align open sets with the
way-below relation.

\begin{defn}[Basic Notions]
  Let $(P, \below)$ be a poset. A subset $\SO \subseteq P$ is
  \emph{Scott-open} if it is an upper set ($\forall x, y \in P \sep
  x \below y \land x \in P \to y \in P$) and inaccessible by
  suprema of increasing chains (if $\sqsup_n x_n \in O$ for an
  increasing chain $(x_n)_n$ then there must exist $k
  \in \Nat$ such that $x_k \in \SO$).
\end{defn}

\begin{lemma} \label{lemma:scott-basis}
Let $B$ be a bounded complete predomain base.
Then $b\upup = \lbrace x \in \hat B \mid b \waybelow x
\rbrace$ is Scott-open for all $b \in \hat B$.
\end{lemma}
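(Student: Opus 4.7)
The plan is to verify the two defining conditions of Scott-openness for $b\upup$: that it is an upper set, and that it is inaccessible by suprema of increasing chains. The upper set property is immediate: if $x \in b\upup$ and $x \below y$ in $\hat B$, then $b \waybelow x \below y$ and Lemma~\ref{lemma:way-below-below} yields $b \waybelow y$, so $y \in b\upup$.

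The substantive step is inaccessibility. Given an increasing chain $(x_n)_n$ in $\hat B$ with $\sqsup_n x_n \in b\upup$, I must weakly exhibit some $k$ with $b \waybelow x_k$. Applying the definition of $b \waybelow \sqsup_n x_n$ directly to the chain $(x_n)_n$ only produces (weakly) some $k$ with $b \below x_k$, which is not enough. My strategy is therefore to locate an intermediate element $s \in B$ (viewed via the canonical embedding in $\hat B$) satisfying $b \below s \waybelow x_k$, so that Lemma~\ref{lemma:way-below-below} applies.

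Concretely, Lemma~\ref{lemma:sup-chains} presents the supremum $y := \sqsup_n x_n$ explicitly as the sequence $(s_k)_k \in \hat B$, where $s_k = \sqsup \lbrace (x_n)_m^k \mid 0 \leq n, m \leq k \rbrace$ is a finite supremum in $B$ and $((x_n)_m^j)_j$ is a fixed approximating sequence of $(x_n)_m \in B$. By Lemma~\ref{lemma:own-sup}, $y = \sqsup_k s_k$ in $\hat B$, with each $s_k$ identified with its canonical embedding. Applying the definition of $b \waybelow y$ to the chain $(s_k)_k$ yields weakly some $k$ with $b \below s_k$. It then remains to show $s_k \waybelow x_k$: for each pair $(n,m)$ with $0 \leq n, m \leq k$ we have $(x_n)_m^k \waybelow (x_n)_m \below x_n \below x_k$, hence $(x_n)_m^k \waybelow x_k$ by Lemma~\ref{lemma:way-below-below}, and Lemma~\ref{lemma:waybelow-sup} (with all right-hand sides equal to $x_k$) delivers $s_k \waybelow x_k$. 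Chaining $b \below s_k \waybelow x_k$ and invoking Lemma~\ref{lemma:way-below-below} once more yields the required $b \waybelow x_k$.

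The principal obstacle is the bookkeeping of the representing sequence of the supremum; once one sees that Lemma~\ref{lemma:sup-chains} silently supplies an interpolant inside $B$ for free, the rest of the argument reduces to repeated transitivity between $\below$ and $\waybelow$.
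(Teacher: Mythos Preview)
Your argument is correct and, in fact, more careful than the paper's. For the upper-set clause both proofs coincide. For inaccessibility the paper writes a single line: ``By definition of $\waybelow$, there must exist $n \in \Nat$ such that $b \waybelow x_n$''. Taken literally this overshoots: the definition of $\waybelow$ applied to the chain $(x_n)_n$ only yields (weakly) some $n$ with $b \below x_n$, not $b \waybelow x_n$. You spotted exactly this and supplied the missing interpolant by unpacking the supremum via Lemma~\ref{lemma:sup-chains}: the representing term $s_k \in B$ sits strictly between $b$ and $x_k$, so that $b \below s_k \waybelow x_k$ gives the desired $b \waybelow x_k$. What the paper's one-liner presumably has in mind is the standard continuous-domain fact that $b \waybelow \sqsup_n x_n$ implies $b \waybelow x_n$ for some $n$, which follows from interpolation in $\hat B$; your route makes this explicit at the level of $B$ instead, and buys you a proof that does not rely on having established interpolation for $\hat B$.

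One small point worth tightening: your appeal to Lemma~\ref{lemma:waybelow-sup} to obtain $s_k \waybelow x_k$ needs $s_k$ to be the supremum of the finite family $\lbrace (x_n)_m^k \rbrace$ in $\hat B$, not just in $B$. This is true (the canonical embedding preserves finite consistent suprema, which one checks directly from bounded completeness and the definition of $\below$ on $\hat B$), but it is not entirely automatic and deserves a sentence.
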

\begin{proof}
  To see that $b\upup$ is an upper set, assume that $x, y \in \hat
  B$ with $x \below y$ and let $x \in b\upup$, i.e. $b \waybelow x$.
  Then $b \waybelow x \below y$ so that $b \waybelow y$ by Lemma
  \ref{lemma:way-below-below}. To see that $b\upup$ is inacessible
  by suprema increasing chain increasing chainss, assume that $\sqsup_n x_n \in b\upup$, i.e.
  $b \waybelow \sqsup_n x_n$. By definition of $\waybelow$, there
  must exist $n \in \Nat$ such that $b \waybelow x_n$, hence $x_n
  \in b\upup$.
\end{proof}

\begin{lemma}
  Let $B$ be a bounded complete predomain base, $x \in \hat B$ and
  $\SO \subseteq 
  \hat B$ be Scott-open with $x \in \SO$. Then there must exist
  $b \in B \cap \SO$ such that $x \in b\upup \subseteq \SO$. 
\end{lemma}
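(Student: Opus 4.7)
The approach is to apply the Scott-openness condition twice, first to the representing sequence of $x$ and then to an approximating sequence of a suitable base element, with the second supremum identity being the one substantive step.

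Write $x = (x_n)_n$. By Lemma~\ref{lemma:own-sup}, $x = \sqsup_n x_n$ in $\hat B$, where each $x_n$ is identified with its canonical embedding; the embedding is order-preserving by Lemma~\ref{4.8}, so this is an increasing chain in $\hat B$. Since $\SO$ is Scott-open and $x \in \SO$, there must exist $k$ such that $x_k \in \SO$. Now pick an approximating sequence $(x_k^i)_i$ of $x_k$ in $B$. Each $x_k^i$ is way-below $x_k$ in $B$, hence in $\hat B$ by Lemma~\ref{4.8}; combined with $x_k \below x$ in $\hat B$ and Lemma~\ref{lemma:way-below-below}, we get $x_k^i \waybelow x$ in $\hat B$.

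The key technical step is to verify
\[
  x_k \;=\; \sqsup_i x_k^i \quad \text{in } \hat B.
\]
The inequality $\sqsup_i x_k^i \below x_k$ is immediate from $x_k^i \below x_k$ in $B$. For the reverse $x_k \below \sqsup_i x_k^i$, unfolding the definition of $\below$ on $\hat B$ reduces matters to showing that any $b \in B$ with $b \waybelow x_k$ satisfies $b \waybelow x_k^i$ for (weakly) some~$i$. This is where \emph{weak interpolation in $B$} (Corollary~\ref{cor:interpolation}) enters essentially: interpolate $b \waybelow b' \waybelow x_k$, then use $x_k = \sqsup_i x_k^i$ in $B$ to pick (weakly) $i$ with $b' \below x_k^i$, and apply Lemma~\ref{lemma:way-below-below} to conclude $b \waybelow x_k^i$.

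With the identity in hand, $(x_k^i)_i$ is an increasing chain in $\hat B$ whose supremum $x_k$ lies in $\SO$, so a second application of Scott-openness yields some $j$ with $x_k^j \in \SO$. Setting $b = x_k^j \in B \cap \SO$, we have $x \in b\upup$ since $b \waybelow x$ in $\hat B$ by the remark above. For $b\upup \subseteq \SO$, given $y$ with $b \waybelow y$ in $\hat B$, we deduce $b \below y$ from Lemma~\ref{lemma:way-below-below}; its proof uses only that $\below$ is $\neg\neg$-closed, and $\below$ on $\hat B$ is automatically so, being defined by a $\forall\to\wex$ formula. Since $b \in \SO$ and $\SO$ is an upper set, $y \in \SO$. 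The main obstacle is the equality $x_k = \sqsup_i x_k^i$ in $\hat B$: lifting an approximating sequence from $B$ across the canonical embedding is not automatic and is precisely the place where interpolation is needed; everything else is a routine combination of Lemmas~\ref{4.8} and~\ref{lemma:way-below-below} with the definition of Scott-openness.
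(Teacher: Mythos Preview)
Your proof is correct and takes a genuinely different route from the paper. The paper avoids the two-step application of Scott-openness by a diagonal argument: it writes $x = \sqsup_n\bigl(\sqsup_{i\le n} x_i^n\bigr)$ (via Corollary~\ref{cor:diagonal-sup}) and applies Scott-openness once, taking $b = \sqsup_{i\le n} x_i^n$; then $x_i^n \waybelow x_i \below x_n$ together with Lemma~\ref{lemma:waybelow-sup} gives $b \waybelow x_n$, hence $b \waybelow x$. Your version instead peels off $x = \sqsup_n x_n$ and $x_k = \sqsup_i x_k^i$ in two stages, yielding the simpler witness $b = x_k^j$ without forming a finite supremum. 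What your approach buys is transparency about where interpolation is actually needed: the identity $x_k = \sqsup_i x_k^i$ in $\hat B$ (equivalently, that approximating sequences lift along the embedding) is the substantive point, and you isolate it explicitly. The paper's chain of equalities in fact relies on the same fact---indeed, precisely this argument reappears later as Lemma~\ref{lemma:approx-equal}---but here it is left implicit. Conversely, the paper's diagonal packaging is slightly more economical (one appeal to openness, one supremum manipulation) and makes direct use of bounded completeness to form $\sqsup_{i\le n} x_i^n \in B$. Both approaches ultimately hinge on weak interpolation (Corollary~\ref{cor:interpolation}); neither avoids it.
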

\begin{proof}
  Choose an approximating sequence $(a^n)_n$ for every $a \in B$ and
  let $x = (x_n)_n$. Then $x = \sqsup_n x_n$ by Lemma
  \ref{lemma:own-sup} and $x_n = \sqsup_m x_n^m$ by assumption so
  that $x = \sqsup_n x_n = \sqsup_n \sqsup_m x_n^m = \sqsup_n
  \sqsup_m (\sqsup_{i \leq n} x_i^m )= \sqsup_n \sqsup_{i \leq n}
  x_i^n$ by Corollary \ref{cor:diagonal-sup}. As $\SO$ is
  Scott-open, there must exist $n \in \Nat$ such that $\sqsup_{i
  \leq n} x_i^n \in \SO$. For $i \leq n$, we have
  $x_i^n \waybelow x_i \below x_n$ so that $\sqsup_{i \leq n} x_i^n
  \waybelow x_n$ and therefore $\sqsup_{i \leq n} x_i^n \waybelow x$ by
  Lemma \ref{lemma:waybelow-sup}. As $B$ is bounded complete, we
  have $b = \sqsup_{i \leq n} x_i^n \in B$ so that $b \in B \cap
  \SO$. As $\SO$ is an upper set, it follows that $b\upup \subseteq
  \SO$. 
\end{proof}

\begin{remark}
  Taken together, the last two lemmas show that the Scott topology
  on $\hat B$ is generated by sets of the form $b\upup$ for $b \in
  B$.
\end{remark}

\noindent
The following lemma shows that Scott continuity, as defined in
Definition \ref{defn:predomain-bases} as preservation of suprema of
increasing chains, is also 
constructively equivalent to preservation of open sets under inverse
image.

\begin{lemma}
  Let $B$ and $C$ be bounded complete predomain bases and $f: \hat B
  \to \hat C$ be Scott-continous. Then $f\inv(\SO)$ is open whenever
  $\SO \subseteq \hat C$ is open.
\end{lemma}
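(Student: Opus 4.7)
The plan is to verify the two defining conditions of Scott openness for $f^{-1}(\SO)$, namely that it is an upper set and that it is inaccessible by suprema of increasing chains. Both conditions transfer straightforwardly from $\SO$ via the monotonicity and Scott continuity of $f$.

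First, I would observe that $f$ is monotone by hypothesis, since Scott continuity is defined (in Definition \ref{defn:predomain-bases}) for monotone maps. To show $f^{-1}(\SO)$ is an upper set, suppose $x \in f^{-1}(\SO)$ and $x \below y$ in $\hat B$. Then $f(x) \in \SO$, and by monotonicity of $f$ we have $f(x) \below f(y)$. Since $\SO$ is an upper set, it follows that $f(y) \in \SO$, whence $y \in f^{-1}(\SO)$.

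Second, I would verify inaccessibility by increasing chains. Let $(x_n)_n$ be an increasing chain in $\hat B$ and suppose $\sqsup_n x_n \in f^{-1}(\SO)$. By Corollary \ref{cor:bounded-complete-complete} the supremum $\sqsup_n x_n$ exists in $\hat B$, and by monotonicity $(f(x_n))_n$ is an increasing chain in $\hat C$ whose supremum $\sqsup_n f(x_n)$ also exists in $\hat C$, again by Corollary \ref{cor:bounded-complete-complete}. Since both suprema exist, Scott continuity of $f$ yields
\[ f(\sqsup_n x_n) = \sqsup_n f(x_n). \]
As $f(\sqsup_n x_n) \in \SO$ and $\SO$ is Scott open, there must exist $k \in \Nat$ with $f(x_k) \in \SO$, that is $x_k \in f^{-1}(\SO)$, as required.

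There is no real obstacle here: the argument is the standard classical one and goes through in our setting precisely because $\hat B$ and $\hat C$ supply suprema of all increasing chains, so the continuity equation of $f$ is always applicable. The ``must exist'' quantifier in the inaccessibility clause is preserved verbatim from $\SO$ to $f^{-1}(\SO)$, so no further use of weak existence or interpolation is needed beyond what is already available.
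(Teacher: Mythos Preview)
Your proof is correct and follows essentially the same approach as the paper: verify the upper-set property via monotonicity of $f$, and verify inaccessibility by chains via the continuity equation $f(\sqsup_n x_n) = \sqsup_n f(x_n)$ together with inaccessibility of $\SO$. Your version is in fact slightly more careful in explicitly invoking Corollary~\ref{cor:bounded-complete-complete} to ensure the relevant suprema exist before applying the continuity clause, a point the paper leaves implicit.
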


\begin{proof}
  Let $\SO \subseteq \hat C$ be Scott-open. We show that
  $f\inv(\SO)$ is an upper set. So let $x \below y \in \hat B$ with
  $x \in f\inv(\SO)$, i.e. $f(x) \in \SO$. By monotonicity of $f$,
  we have $f(x) \below f(y)$ and as $\SO$ is an upper set, we have
  $f(y) \in \SO$, i.e. $y \in f\inv(\SO)$. We now show that $f\inv$
  is inaccessible by suprema of increasing chains. So let $\sqsup_n x_n \in
  f\inv(\SO)$. Then $\sqsup_n f(x_n) = f(\sqsup_n x_n) \in \SO$ and
  since $\SO$ is inaccessible by directed suprema, there must exist
  $n \in \Nat$ such that $f(x_n) \in \SO$ whence $x_n \in
  f\inv(\SO)$.
\end{proof}

\noindent
We need the following technical result before we can prove the
converse of the above lemma.

\begin{lemma} \label{lemma:approx-equal}
Let $B$ be a bounded complete predomain base with
decidable $\waybelow$ and choose an
approximating sequence $(b^i)_i$ for all $b \in B$. 
Furthermore, let $x = (x_n)_n \in \hat B$ and $y = (y_n)_n = (\sqsup_{i \leq
n} x_i^n)_n$. Then y is well defined (i.e. $\lbrace x_i^n \mid i
\leq n \rbrace$ is consistent) for all $n \in \Nat$ and $x = y$.
\end{lemma}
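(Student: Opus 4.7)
The plan is to establish the lemma in four steps, reflecting the two separate assertions (well-definedness, then equality split into the two directions of $\below$ in $\hat B$).

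First, I will verify that for each $n$ the finite set $\lbrace x_i^n \mid i \leq n \rbrace$ is consistent in $B$. Since $(x_n)_n$ is a chain in (the image of) $B$, we have $x_i \below x_n$ for $i \leq n$. Moreover, $x_i^n \waybelow x_i$ by definition of an approximating sequence, hence $x_i^n \below x_i \below x_n$ by Lemma \ref{lemma:way-below-below}. So $x_n$ is an upper bound of the set, and bounded completeness of $B$ yields $y_n = \sqsup_{i \leq n} x_i^n \in B$. Showing $(y_n)_n$ is a chain is equally short: the superscript-indexed approximating sequences are themselves increasing, and passing from $n$ to $n+1$ enlarges the index set, so $y_n \below y_{n+1}$ follows from Lemma \ref{lemma:waybelow-sup} (or by direct inspection). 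Hence $y \in \hat B$.

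For the inequality $y \below x$, fix $n \in \Nat$ and $a \in B$ with $a \waybelow y_n$. Since $y_n = \sqsup_{i \leq n} x_i^n$ and, as just observed, $x_n$ is an upper bound of this set, we obtain $y_n \below x_n$. Therefore $a \waybelow y_n \below x_n$, whence $a \waybelow x_n$ by Lemma \ref{lemma:way-below-below}, and the witness $m = n$ suffices.

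The harder direction is $x \below y$; this is where the hypothesis of decidable $\waybelow$ (and hence Corollary \ref{cor:interpolation}) will be essential. Fix $n \in \Nat$ and $a \in B$ with $a \waybelow x_n$. By weak interpolation there must exist $a' \in B$ with $a \waybelow a' \waybelow x_n$. Since $x_n = \sqsup_k x_n^k$ with the $(x_n^k)_k$ forming a chain, the definition of $\waybelow$ gives some $k$, weakly, with $a' \below x_n^k$. Set $m = \max \lbrace n, k \rbrace$; then $x_n^k \below x_n^m$ because approximating sequences are increasing in the superscript, and $x_n^m \below y_m$ because $n \leq m$. Chaining these gives $a \waybelow a' \below x_n^k \below x_n^m \below y_m$, so $a \waybelow y_m$ by Lemma \ref{lemma:way-below-below}. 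This furnishes the required $m$ and completes the proof that $x \below y$, thereby establishing $x = y$. The only substantive obstacle is the need to upgrade $a' \below x_n^k$ to a way-below relation ending at $y_m$, and weak interpolation is precisely the tool that makes this possible.
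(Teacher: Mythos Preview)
Your proof is correct and follows essentially the same approach as the paper: the same upper-bound argument for consistency, the same use of $y_n \below x_n$ for the direction $y \below x$, and the same interpolation-plus-approximating-sequence argument for $x \below y$. You additionally make explicit that $(y_n)_n$ is a chain (which the paper leaves implicit) and present the two directions in the opposite order, but the substance is identical.
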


\begin{proof}
  It is evident that $\lbrace x_i^n \mid 0 \leq i \leq n \rbrace$ is
  consistent for all $n \in \Nat$, as for $0 \leq i \leq n$ we have
  that $x_i^n \below x_i \below x_n$, hence $x_n$ is the required
  upper bound.

  To see that $x \below y$ let $n \in \Nat$ and $b \in B$ with $b
  \waybelow x_n$. We show that there must exist $k \in \Nat$ with $b
  \waybelow y_k$. First, by interpolation, there must exist $c \in
  B$ such that $b \waybelow c \waybelow x_n$. As $(x_n^i)_i$ is an approximating sequence, there
  must exist $i \in \Nat$ such that $b \waybelow c \below x_n^i$,
  hence $b \waybelow x_n^i$. Let $k = \max
  \lbrace i, n \rbrace$. Then $b \below x_n^i \below x_n^k \below
  \sqsup_{i \leq k} x_n^k = y_k$.

  To see that $y \below x$, let $b \in \Nat$ and $b \waybelow y_n$.
  Put $k = n$. Then $b \waybelow y_n = \sqsup_{i \leq n} x_i^n
  \below \sqsup_{i \leq n} x_i \below x_n = x_k$.
\end{proof}
\begin{lemma}
  Let $f: \hat B \to \hat C$ be a function between bounded complete
  predomain bases where $\waybelow$ is decidable on $C$ 
  such that $f\inv(\SO)$ is open for all open $\SO
  \subseteq \hat C$. Then $f$ is Scott continuous.
\end{lemma}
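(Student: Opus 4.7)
The plan is to deduce both monotonicity and preservation of suprema of increasing chains from the assumption that inverse images of Scott-open sets are Scott-open, leveraging the basis of Scott-opens $c\upup$ furnished by Lemma \ref{lemma:scott-basis}.

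For monotonicity, I would take $x \below y$ in $\hat B$ and verify $f(x) \below f(y)$ against the pointwise criterion of Definition \ref{defn:cont-compl}: given $c \in C$ and $n \in \Nat$ with $c \waybelow f(x)_n$, we must (weakly) find $m$ with $c \waybelow f(y)_m$. Since $C$ is bounded complete with decidable $\waybelow$, weak interpolation (Proposition \ref{propn:interpolation}) yields $d \in C$ with $c \waybelow d \waybelow f(x)_n$. Embedding into $\hat C$ via Lemma \ref{4.8} gives $d \waybelow f(x)_n \below f(x)$, from which $d \waybelow f(x)$ in $\hat C$ follows by unfolding the definition of $\waybelow$ against any chain above $f(x)$. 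Thus $f(x) \in d\upup$, which is Scott-open by Lemma \ref{lemma:scott-basis}, so $x \in f\inv(d\upup)$, which is Scott-open in $\hat B$ by hypothesis. Being an upper set, $f\inv(d\upup)$ contains $y$, so $d \waybelow f(y)$ in $\hat C$. Writing $f(y) = \sqsup_m f(y)_m$ (Lemma \ref{lemma:own-sup}) and applying the definition of $\waybelow$ yields $m$ with $d \below f(y)_m$ in $\hat C$, hence in $C$ by Lemma \ref{4.8}; Lemma \ref{lemma:way-below-below} then upgrades $c \waybelow d \below f(y)_m$ to $c \waybelow f(y)_m$.

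For preservation of chain suprema, let $(x_n)_n$ be an increasing chain in $\hat B$ with $x = \sqsup_n x_n$. Monotonicity of $f$ makes $(f(x_n))_n$ a chain in $\hat C$, whose supremum $y$ exists by Corollary \ref{cor:bounded-complete-complete}, and $y \below f(x)$ is immediate. To show the converse $f(x) \below y$, I would repeat the template: for $c \waybelow f(x)_k$ in $C$, interpolate to $c \waybelow d \waybelow f(x)_k$, lift to $d \waybelow f(x)$ in $\hat C$, so that $x \in f\inv(d\upup)$. This time I invoke inaccessibility of $f\inv(d\upup)$ by suprema of chains (the second defining clause of Scott-openness) applied to $x = \sqsup_n x_n$, which produces (weakly) an $n$ with $x_n \in f\inv(d\upup)$, whence $d \waybelow f(x_n) \below y$ and so $d \waybelow y$ in $\hat C$. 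As before, the definition of $\waybelow$ against the representation of $y$ delivers $m$ with $d \below y_m$, and Lemma \ref{lemma:way-below-below} concludes $c \waybelow y_m$.

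The main obstacle is the mismatch between what the topological argument naturally produces (a $\below$-statement in $\hat C$, obtained by running the chain through the definition of $\waybelow$) and what the pointwise definition of the order on $\hat C$ demands (a $\waybelow$-statement in $C$). Weak interpolation in $C$ is precisely the tool that bridges this mismatch, and is where the hypotheses on decidability of $\waybelow$ and bounded completeness of $C$ are consumed. The remainder is a disciplined shuffle between $C$ and $\hat C$ using Lemma \ref{4.8}, relying on the fact that $\below$ on $\hat C$ is $\neg\neg$-stable since it is formulated via $\wex$.
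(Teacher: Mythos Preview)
Your proof is correct and follows the same high-level strategy as the paper: use the basic Scott-opens $c\upup$ and the hypothesis to transport information along the order. The execution differs in a way worth noting. You interpolate in $C$ at the outset (producing $c \waybelow d \waybelow f(x)_n$) and then work with $d\upup$; once you obtain $d \waybelow f(y)$ (respectively $d \waybelow y$), Lemma~\ref{lemma:own-sup} together with the definition of $\waybelow$ immediately yields an index $m$ with $d \below f(y)_m$, and Lemma~\ref{lemma:way-below-below} closes the gap. The paper, by contrast, does \emph{not} interpolate in the monotonicity argument: it works directly with $c\upup$, obtains $c \waybelow f(y)$, and then needs the auxiliary Lemma~\ref{lemma:approx-equal} to rewrite $f(y)$ as a supremum of elements that are themselves $\waybelow f(y)_k$, so that the chain produced by the definition of $\waybelow$ lands on something strictly approximating a coordinate. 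For the suprema clause the paper does interpolate (Corollary~\ref{cor:interpolation}), but it also computes the representing sequence of $\sqsup_n f(x_n)$ explicitly via Lemma~\ref{lemma:sup-chains} rather than appealing to Corollary~\ref{cor:bounded-complete-complete} abstractly as you do. Your route is more uniform and sidesteps Lemma~\ref{lemma:approx-equal} entirely; the paper's route shows that interpolation is dispensable for monotonicity at the cost of that extra lemma.
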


\begin{proof}
  We first show that $f$ is monotone. So let $x \below y \in \hat
  B$. To see that $f(x) \below f(y)$ let $n \in \Nat$, $c \in C$ and
  suppose that $c \waybelow f(x)_n$. We show that there must exist
  $k$ such that $c \waybelow f(y)_k$. Since $c  \waybelow f(x)_n
  \below f(x)$ we have that $f(x) \in c\upup$ so that $x \in
  f\inv(c\upup)$ which is open and therefore an upper set. As $x
  \below y$, we have $y \in f\inv(c\upup)$ so that $c \waybelow
  f(y)$. If $(f(y)_n^i)_i$ is an approximating sequence of $f(y)_n$,
  Lemma \ref{lemma:approx-equal} gives $c \waybelow \sqsup_n \sqsup_{i
  \leq n} f(y)_i^n$. Therefore there must exist $k$ such that $c
  \below \sqsup_{i \leq k} f(y)_i^k$. We have, for $i \leq k$, that
  $f(y)_i^k \waybelow f(y)_i \below f(y)_k$ so that $\sqsup_{i \leq
  k} f(y)_i^k \waybelow f(y)_k$ by Lemma \ref{lemma:waybelow-sup} as
  required.

  Now suppose that $x = \sqsup_n x_n \in \hat B$, we show that $f(x)
  = \sqsup_n f(x_n)$. As $x_n \below x$, it is clear that $\sqsup_n
  f(x_n)  \below f(x)$. To see that $f(x) \below \sqsup_n f(x_n)$,
  let $(f(x_n)_m^i)_i$ be an approximating sequence of $f(x_n)_m$
  for all $n, m \in \Nat$. Then $(\sqsup_n f(x_n))_k = \sqsup_{0
  \leq n, m \leq k} f(x_n)_m^k$ by Lemma \ref{lemma:sup-chains} so
  that we need to show that $f(x) \below ( \sqsup_{0
    \leq n, m \leq k} f(x_n)_m^k)_k$. So let $c \in C$, $n \in \Nat$
    and suppose that $c \waybelow f(x)_n$. By Corollary
    \ref{cor:interpolation}, there must exist an interpolant  $c' \in C$ such that $c
    \waybelow c' \waybelow f(x)_n$. We show that there must
    exist $k \in \Nat$ such that $c \waybelow \sqsup_{n, m \leq k}
    f(x_n)_m^k$. 

    As $c' \waybelow f(x)_n \below f(x)$ we have that $x \in
    f\inv(c'\upup)$ which is open by assumption. As $x = \sqsup_n
    x_n$, there must exist $m$ such that $x_m \in f\inv(c'\upup)$,
    i.e. $c' \waybelow f(x_m)$. By Lemma \ref{lemma:approx-equal}, we
    have that $f(x)_m = \sqsup_n \sqsup_{i \leq n} f(x_m)_i^n$
    so that there must exist $n \in \Nat$ such that $c' \below
    \sqsup_{i \leq n} f(x_m)_i^n$. Let $k = \max \lbrace n, m
    \rbrace$. Then $c \waybelow c' \below \sqsup_{i \leq n} f(x_m)^i
    \below \sqsup_{i \leq n} f(x_m)^k \below \sqsup_{0 \leq n, m
    \leq k} f(x_m)^k$ as required.
\end{proof}

\section{Function Spaces}\label{sec:function-spaces}

Given two predomain bases $B$ and $C$, we construct a predomain base
$B \to C$ so that the continuous completion of $B \to C$ is the
space of continuous functions between the continuous completions of
$B$ and $C$. 

\begin{defn}
  Let $(B, \below)$ and $(C, \below)$ be predomain bases. A
  \emph{single step function} of type $B \to C$ is a pair $(b, c)
  \in B \times C$, written $b \step c$. A \emph{step function} is a finite set of single step functions, written
  $\sqsup_i b_i \step c_i$ such that
  \[ \Cons(\lbrace b_j \mid j \in J \rbrace) \to \Cons(\lbrace c_j
  \mid j \in J \rbrace ) \]
  for all (finite, non-empty) $J \subseteq I$. 
  Now suppose that $\waybelow$ on $B$ is decidable and $C$ is
  bounded complete. Then 
  a single step function $b \step
  c$ defines the function
  $b \step c: B \to C$ by $b \step c(x) = c$ if $b \waybelow x$, and
  $b \step c(x) = \bot$,
  otherwise. A step function $\sqsup_i b_i \step c_i$ defines the
  function $(\sqsup_i b_i \step c_i)(x) = \sqsup_i (b_i \step
  c_i(x))$. Step functions are ordered by $s \below t$ iff $s(x)
  \below t(x)$ for all $x \in B$. We write $B \to C$ for the set of
  step functions of type $B \to C$. 
\end{defn}
\noindent
Our first goal is to show that the collection of step functions
forms a predomain base, and we begin with collecting conditions that
ensure decidability of the ordering. The proof needs the following
condition that allows us to separate two subsets using the way below
relation.

\begin{defn} \label{defn:separated}
Let $(B, \below)$  be predomain base. Two subsets $A, D \subseteq B$
are \emph{separated} if there must exist $\omega \in B$
such that $a \waybelow \omega$ for all $a \in A$ and $d
\not\waybelow \omega$ for all $d \in D$.
\end{defn}

\noindent
Crucially, for the ordering on step functions to be decidable, we
need to require that separatedness is decidable. 

\begin{example}
  \begin{enumerate}
    \item Consider the predomain base $(B, =)$ from Example
    \ref{example:bases}, i.e. equality $=$ on $B$ is decidable. Then $A$ and $D$ are
    separated if $A$ is a singleton and $A \cap D = \emptyset$. As a
    consequence, separatedness between finite sets on $(B, =)$ is decidable. 
    \item For the predomain base $(B^\ast, \below_\pref)$ of finite
    sequences of $B$, we have that $A$ and $D$ are separated if
    there exists $a \in A$ such that every (other) $a' \in A$ is a
    prefix of $a$, and no $d \in D$ is a prefix of $a$. As a
    consequence, separatedness is decidable for finite subsets $A, D
    \subseteq B^\ast$.
  \end{enumerate}
\end{example}

\noindent
The situation with the predomain base of rational intervals is
slightly more complex, and warrants a separate lemma.

\begin{lemma}
 Two finite subsets $A, B \subseteq
  \IQ$ are separated   if and only if $\bigsqcup A$ exists and is not a
  singleton (i.e. $\upup \bigsqcup A$ is not empty) and $b
  \not\below \bigsqcup A$ for all $b \in B$. 
\end{lemma}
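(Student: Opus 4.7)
The plan rests on the endpoint characterisation of $\waybelow$ in $\IQ$ from Lemma \ref{lemma:IQ-waybelow}. Set $P = \max_{a \in A} \lo a$ and $Q = \min_{a \in A} \hi a$; consistency of $A$ is then exactly $P \leq Q$, and in that case $\bigsqcup A = [P, Q]$. For any candidate $\omega = [p', q']$ Lemma \ref{lemma:IQ-waybelow} gives that $a \waybelow \omega$ for every $a \in A$ is equivalent to $P < p' \leq q' < Q$, i.e.\ to $\bigsqcup A \waybelow \omega$. Consequently the existence of some $\omega$ with $a \waybelow \omega$ for all $a$ is the same thing as ``$\upup \bigsqcup A$ is inhabited'', which in turn (by Lemma \ref{lemma:IQ-waybelow} applied to the constant sequence) is the same as ``$\bigsqcup A$ exists and is not a singleton''.

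For the forward direction I would suppose $A, B$ are separated and weakly take $\omega$ as above. The previous paragraph already delivers the first conjunct of the right-hand side. For the second, I argue by contradiction: if some $b \in B$ satisfied $b \below \bigsqcup A$, then $b \below \bigsqcup A \waybelow \omega$ would give $b \waybelow \omega$ by Lemma \ref{lemma:way-below-below}, contradicting $b \not\waybelow \omega$. Since the right-hand side is a finite conjunction of $\neg\neg$-stable statements (its first conjunct is decidable, and the second is itself a conjunction of negations), weak existence of $\omega$ is enough.

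The converse requires building $\omega$ explicitly. By decidability of order on $\Rat$, the hypothesis $b \not\below [P, Q]$ unfolds to the decidable disjunction $\lo b > P$ or $\hi b < Q$. I partition $B$ into $B_L = \lbrace b \in B \mid \lo b > P \rbrace$ and $B_R = B \setminus B_L$; for $b \in B_R$ the second disjunct must hold, so $\hi b < Q$. Put $\pi = \min \lbrace \lo b \mid b \in B_L \rbrace$ if $B_L$ is non-empty and $\pi = Q$ otherwise, and dually $\kappa = \max \lbrace \hi b \mid b \in B_R \rbrace$ if $B_R$ is non-empty and $\kappa = P$ otherwise, so that $P < \pi$ and $\kappa < Q$ in every case. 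I then select rationals $p', q'$ subject to $P < p' \leq q' < Q$, $p' \leq \pi$ and $q' \geq \kappa$: when $\kappa < \pi$ any choice of $p' = q'$ in the non-empty open interval $(\max(P, \kappa), \min(\pi, Q))$ works, while when $\kappa \geq \pi$ one picks $p' \in (P, \pi]$ with $p' < Q$ and $q' \in [\kappa, Q)$ independently, getting $p' \leq \pi \leq \kappa \leq q'$ for free. The resulting $\omega = [p', q']$ separates $A$ from $B$: $a \waybelow \omega$ for every $a \in A$ by construction, while $b \in B_L$ forces $p' \leq \pi \leq \lo b$ and $b \in B_R$ forces $q' \geq \kappa \geq \hi b$, so $b \not\waybelow \omega$ in either case.

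The main obstacle is the converse: the disjunctive content of $b \not\below \bigsqcup A$ must be dispatched to a concrete constraint on either $p'$ or $q'$, and the resulting constraints on the two endpoints must then be reconciled with $p' \leq q'$. Decidability of order on $\Rat$ and finiteness of $B$ reduce this to an elementary piece of rational arithmetic, with no deeper domain-theoretic input.
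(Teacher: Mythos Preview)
Your argument is correct. The direction from separatedness to the right-hand side is essentially identical to the paper's. For the converse, both you and the paper construct an explicit witness $\omega$, but the constructions differ. The paper looks at which lower and upper endpoints of elements of $B$ fall into the \emph{open} interval $(P,Q)$, forming two auxiliary sets $C$ and $D$, and then does a four-way case split according to which of $C, D$ are empty; in the main case it takes $\omega = [\tfrac{\epsilon+P}{2}, \tfrac{\delta+Q}{2}]$ with $\epsilon = \min\{\min C, \max D\}$ and $\delta = \max\{\min C, \max D\}$. Your partition $B = B_L \cup B_R$ instead records, for each $b$, \emph{which} disjunct of $b \not\below [P,Q]$ you commit to, and reduces to a two-way split on the order of $\pi$ and $\kappa$. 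Your route is shorter and avoids the somewhat delicate verifications the paper needs in each of its four cases; the paper's construction, on the other hand, gives a single closed-form $\omega$ in terms of averages, which perhaps reads more symmetrically. Both yield an explicit (strongly existing) witness, so either proof immediately supports the subsequent corollary that separatedness of finite subsets of $\IQ$ is decidable.
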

\begin{proof}
  Assume that $A = \lbrace [a_i,
  b_i] \mid i \in I \rbrace$ and $B = \lbrace [c_j, d_j] \mid j \in
  J \rbrace$ for finite sets $I$ and $J$. For the 'if' direction,
  assume moreover that $[a, b] = \bigsqcup A$ exists in $\IQ$ and
  $[c_j, d_j] \not\below \bigsqcup A$ for all $j \in J$. We show
  that $A$ and $B$ are separated by constructing a witness $\omega$
  of separation.
  Consider the sets $C$ and $D$ defined by
  \[ C = \lbrace c_i \mid a < c_i < b \rbrace \mbox{ and }
     D = \lbrace d_j \mid a < d_j < b \rbrace.
   \]
   We first consider the case where both $C$ and $D$ are non-empty,
   and let
   $c = \min C$, $d = \max D$, $\epsilon = \min \lbrace c,
   d \rbrace$ and $\delta = \max \lbrace c, d \rbrace$.
   It is immediate that 
   $a < \epsilon \leq \delta < b$
   from the definition of $C$ and $D$. We now claim that
   $\omega = [\frac{\epsilon + a}{2}, \frac{\delta + b}{2}]$
   witnesses separatednes of $A$ and $B$.

   First note that $\frac12 (\epsilon + a) \leq \frac12 (\delta +
   b)$ as $\epsilon \leq \delta$ and $a < b$ so that $\omega \in
   \IQ$. Moreover,
   $a < \frac{\epsilon + a}{2} \leq \frac{\delta + b}{2} < b$
   so that $[a, b] \waybelow \omega$ by Lemma
   \ref{lemma:IQ-waybelow} and as
   $[a_i, b_i] \below [a, b] \waybelow \omega$ we have that
   $[a_i, b_i] \waybelow \omega$ for all $i \in I$ using Lemma
   \ref{lemma:way-below-below}. To see that $[c_j, d_j] \not
   \waybelow \omega$ suppose that we have $j \in J$ with $[c_j, d_j]
   \waybelow \omega$. Applying Lemma \ref{lemma:IQ-waybelow} again,
   we obtain
   $c_j < \frac{\epsilon + a}{2} < \frac{\epsilon + \epsilon}{2} =
   \epsilon < b$
   and  similarly 
   $a < \delta = \frac{\delta + \delta}{2} < \frac{\delta + b}{2}
   < d_j$.
   As we have assumed that $[c_j, d_j] \not\below [a, b]$, we have
   that $a < c_j$ or $d_j < b$. In the first case, we obtain
   $c_j \in C$, and hence
   $c \leq c_j < \epsilon \leq c$, a contradiction. In the second
   case, we similarly have $d_j \in D$ and obtain a contradiction as
   $d \leq \delta < d_j \leq d$.

   We now consider the case where $C \neq \emptyset$ and $D =
   \emptyset$. Here, we define $\epsilon = \delta = \min C$ and let 
   $\omega = [ \frac{\epsilon + a}{2}, \frac{\delta + b}{2}]$ as
   before. Again, we have that $a < \delta = \epsilon < b$ and
   $a < \frac{\epsilon + a}{2} \leq \frac{\delta + b}{2} < b$ so
   that $\omega \in \IQ$ and $\omega \waybelow [a, b]$ whence
   $\omega \waybelow [a_i, b_i]$ for all $i \in I$.
   Again, if $[c_j, d_j] \waybelow \omega$ we obtain that 
   $c_j < \frac{\epsilon + a}{2} < \frac{\epsilon + \epsilon}{2} <
   b$ and $\delta = \frac{\delta + \delta}{2} < \frac{\delta + b}{2}
   < d_j$. As we have assumed that $[c_j, d_j] \not \below [a,
   b]$ we again have two cases: $a < c_j$ or $d_j < b$. 
   In the first case, we obtain $c_j \in C$ and as before we argue
   that then $c \leq c_j < \epsilon \leq c$ which is impossible.
   The second case is slightly different to the argument above, but
   we just need to observe that $d_j < b$ implies that $d_j \in D$,
   contradicting $D = \emptyset$.

   The case where $C = \emptyset$ and $D \neq \emptyset$ is entirely
   analogous and left to the reader. Finally, we consider the case
   where both $C$ and $D$ are empty.
   Here, we put $\omega = [\frac{a+b}{2}, \frac{a+b}{2}]$. Then $[a,
   b] \waybelow \omega$ as $a < b$, and by the same argument as
   before, $[a_i, b_i] \below [a, b] \waybelow \omega$ whence $[a_i,
   b_i] \waybelow \omega$ for all $i \in I$. To see that $[c_j, d_j]
   \not \waybelow \omega$ we again assume that $[c_j, d_j] \waybelow
   \omega$ and show that this is impossible. From the assumption
   $[c_j, d_j] \waybelow \omega$ we obtain that $c_j < \frac{a+b}{2}
   < d_j$, using Lemma \ref{lemma:IQ-waybelow} one more time. 
   Given that $c_j \notin C$, we furthermore obtain that $c_j \leq
   a$ or $c_j \geq b$. The latter case is impossible as then
   $\frac{a+b}{2} \leq \frac{b+b}{2} = b \leq c_j \leq d_j$ which
   cannot happen as we assumed that $[c_j, d_j] \waybelow \omega$.
   As a consequence, we have that $c_j \leq a$.

   For the same reason, given that $d_j \notin D$ we have that $d_j
   \leq a$ or $d_j \geq b$. Again the case $d_j \leq a$ is
   impossible (for a similar reason) so that $d_j \geq b$. But then
   $c_j \leq a \leq b \leq d_j$ so that $[c_j, d_i] \below \bigsqcup
   A$ which contradicts our assumption. This finishes the proof of
   the 
   'if'-implcation. 

   Conversely, suppose that $A$ and $B$ are separated and $\omega$
   is a witness of separatedness of $A$ and $B$. Then $\omega$ is an
   upper bound of $A$ so that $[a, b] = \bigsqcup A$ exists as $\IQ$
   is bounded complete. As $[a, b] \waybelow \omega$ by Lemma
   \ref{lemma:waybelow-sup} which implies that $a < b$ by Lemma
   \ref{lemma:IQ-waybelow}. We now claim that every $[c_j, d_j] \in
   D$ satisfies $[c_j, d_j] \not \below \bigsqcup A$. This
   follows, for if $[c_j, d_j] \below [a, b]$ we obtain $[c_j, d_j]
   \below [a, b] \waybelow \omega$ so that $[c_j, d_j] \waybelow
   \omega$ which is impossible as $\omega$ witnesses the separation
   of $A$ and $B$.
\end{proof}

\noindent
As the above characterisation deals with finite subset of $\IQ$ and
decidable properties only, the following Corollary is immediate.
\begin{cor}
  Separatedness is decidable for finite subsets $A, D \subseteq
  \IQ$.
\end{cor}


\noindent
We now use separatedness to characterise the order between step and
single step functions which takes us one step closer to the goal of
establishing decidability of the order on the set of step
functions.
\begin{thm}
Let $B$ be a predomain base for which $\waybelow$ is decidable, and
let $C$ be bounded complete. 
Furthermore, 
let $I$ be a finite set, $\lbrace \alpha \rbrace \cup \lbrace
\gamma_i \mid i \in I \rbrace \subseteq B$ and $\lbrace \beta
\rbrace \cup \lbrace \delta_i \mid i \in I \rbrace \subseteq C$ and 
and $\bigsqcup_{i \in I}  \gamma_i \step \delta_i$
be a step function. Then the following
are equivalent:
\begin{itemize}[label=$\triangleright$]
\item $\alpha \step \beta \below \bigsqcup_{i \in I} \gamma_i \step
\delta_i$
\item for all $x \in B$ with $\alpha \waybelow x$ we have that
$\beta \below \bigsqcup \lbrace \delta_i \mid i \in I \mbox{ and }
\gamma_i \waybelow x \rbrace$
\item
for any $I_0 \subseteq I$ such that $A=\{ \alpha \} \cup \{ \gamma_i
\mid i \in I_0 \}$ and $D=\{ \gamma_i \mid i \notin I_0 \}$ are
separated, we have $\beta \sqsubseteq \bigsqcup \{ \delta_i \mid  i \in I_0
\}$.  
\end{itemize}
\end{thm}
\begin{proof}
The equivalence of the first two items is immediate. So assume that
$\alpha \waybelow x$ implies that $\beta \below \bigsqcup \lbrace
\delta_i \mid \gamma_i \waybelow x \rbrace$ for all $x \in B$.
Moreover, let $I_0 \subseteq I$ be finite and assume that  
$A=\{ \alpha \} \cup \{ \gamma_i \mid i \in I_0
\}$ and $D=\{ \gamma_i \mid i \notin I_0 \}$ are separated. Note
that in this case, the supremum $\bigsqcup \lbrace \delta_i \mid i
\in I_0 \rbrace$ exists, as $\bigsqcup_{i \in I} \gamma_i \step
\delta_i$ is a step function, and $C$ is bounded complete. As
$\below$ is decidable, we may argue classically to show that $\beta
\below \bigsqcup \lbrace \delta_i \mid i \in I_0 \rbrace$. As
$A$ and $D$ are separated, we can (classically) find $w$ that
witnesses separatedness of $A$ and $D$ (as per Definition
\ref{defn:separated}). In particular, $\alpha \waybelow \omega$ so
that we have $\beta \below \bigsqcup \lbrace \delta_i \mid \gamma_i
\waybelow  \omega \rbrace =  \bigsqcup \lbrace \delta_i \mid i
\in I_0 \rbrace$. 

For the other direction, assume the last statement above and let $x
\in B$ with $\alpha \below x$. Put $I_0 = \lbrace i \in I \mid
\gamma_i \waybelow x \rbrace$. Then $A = \lbrace \alpha \rbrace \cup
\lbrace \gamma_i \mid i \in I_0 \rbrace$ and $D = \lbrace \gamma_i
\mid i \notin I_0 \rbrace$ are separated, and we obtain $\beta
\below \bigsqcup \lbrace \delta_i \mid i \in I_0 \rbrace = \bigsqcup
\lbrace \delta_i \mid \gamma_i \waybelow x \rbrace$ as required.
\end{proof}

\noindent
The above lemma is the key stepping stone to see that the ordering
on the predomain representing the function space is indeed
decidable.
\begin{cor} \label{cor:separated-order-decdiable}
Let $B$ and $C$ be predomain bases where $\waybelow$ on $B$ is
decidable, $C$ is bounded complete, and separatedness on $C$ is
decidable. Then $\below$ on $B \to C$ is decidable. 
\end{cor}
\begin{proof}
  Immediate by the previous lemma, as $\sqsup_{i \in I} \alpha_i
  \step \beta_i  \below \phi$ if and only if $\alpha_i \step \beta_i
  \below \phi$ for all $i \in I$.
\end{proof}

\noindent
Single step functions, and step functions themselves, are automatically
Scott continuous. 
\begin{lemma}
  Let $B$ and $C$ be predomain bases where $C$ is pointed,
  $\waybelow$ on $B$ is decidable, and $B$ has weak interpolation. 
  Then every single step function $b \step
  c$ is Scott continuous.
\end{lemma}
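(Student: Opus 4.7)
The plan is to verify the two clauses of Definition~\ref{defn:predomain-bases} for Scott continuity---monotonicity and preservation of suprema of chains---using decidability of $\waybelow$ on $B$ to case-split on ``$b \waybelow x$'' wherever the step function is evaluated. Monotonicity is immediate: if $x \below y$ and $b \waybelow x$, then $b \waybelow y$ by Lemma~\ref{lemma:way-below-below}, so $(b \step c)(x) = c = (b \step c)(y)$; otherwise $(b \step c)(x) = \bot$, which is below $(b \step c)(y)$ since $C$ is pointed.

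For preservation of chain suprema, let $(x_n)_n$ be a chain in $B$ for which $\sqsup_n x_n$ exists and $\sqsup_n (b \step c)(x_n)$ exists in $C$, and case-split on the decidable proposition $b \waybelow \sqsup_n x_n$. If $\neg(b \waybelow \sqsup_n x_n)$, then for every $n$ we must have $\neg(b \waybelow x_n)$: otherwise $b \waybelow x_n \below \sqsup_m x_m$ would give $b \waybelow \sqsup_m x_m$ via Lemma~\ref{lemma:way-below-below}, contradicting the case hypothesis. Hence every $(b \step c)(x_n) = \bot$, so both sides of the required equation are $\bot$.

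The interesting case is $b \waybelow \sqsup_n x_n$, where $(b \step c)(\sqsup_n x_n) = c$. The main obstacle here is that applying the definition of $\waybelow$ directly to the chain $(x_n)_n$ only yields some $n$ with $b \below x_n$, whereas we need $b \waybelow x_n$ to force $(b \step c)(x_n) = c$. Weak interpolation on $B$ resolves this: there weakly exists $y$ with $b \waybelow y \waybelow \sqsup_n x_n$, and then the definition of $\waybelow$ applied to $y$ produces $n$ with $y \below x_n$, so Lemma~\ref{lemma:way-below-below} gives $b \waybelow x_n$. Thus weakly some $(b \step c)(x_n)$ equals $c$, whence weakly $c \below \sqsup_m (b \step c)(x_m)$. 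Since $\below$ on $C$ is decidable and hence $\neg\neg$-stable, this weak inequality is an actual one; the reverse inequality is immediate because each $(b \step c)(x_m) \in \lbrace \bot, c \rbrace$ is $\below c$. We conclude $\sqsup_m (b \step c)(x_m) = c = (b \step c)(\sqsup_n x_n)$, finishing the proof.
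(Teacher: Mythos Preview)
Your proof is correct and follows essentially the same approach as the paper: case-split on the decidable proposition $b \waybelow \sqsup_n x_n$, in the positive case invoke weak interpolation to upgrade the $b \below x_n$ guaranteed by the definition of $\waybelow$ to $b \waybelow x_n$, and appeal to $\neg\neg$-stability (the paper phrases this as the target being a Harrop formula) to convert the weakly obtained equality into an actual one. You additionally spell out monotonicity explicitly, which the paper's proof omits but which is required by Definition~\ref{defn:predomain-bases}.
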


\begin{proof}
  Let $(x_n)_n$ be an increasing sequence in $B$ and $x \in B$ with
  $x = \sqsup_n x_n$. We show that $b \step c(\sqsup_n x_n) =
  \sqsup_n b \step c(x_n)$.  This is evident if $c = \bot$. So
  suppose $c \neq \bot$.
  By definition of single step functions (and the fact that $\waybelow$ on
  $B$ is decidable), we may distinguish two cases.

  \emph{Case 1.} $b \waybelow \sqsup_n x_n$ and $b \step c(\sqsup_n
  x_n) = c$. By weak interpolation
  on $B$, there must exists $y \in B$ such that $b \waybelow y
  \waybelow \sqsup_n x_n$, therefore there must exist $n \in \Nat$
  such that $b \waybelow y \below x_n$. Therefore $\sqsup_n b \step
  c(x_n) = c$. Note that the latter expression is a Harrop-formula,
  and therefore $\neg\neg$-stable.

  \emph{Case 2.} It is not the case that $b \waybelow \sqsup_n
  x_n$. We show that $\sqsup_n b \step c(x_n) = \bot$. This
  follows, if $b \step c(x_n) = \bot$ for all $n \in \Nat$. So pick
  $n \in \Nat$. We show that $\neg (b \waybelow x_n)$. Assume $b
  \waybelow x_n$. But then $b \waybelow \sqsup_n x_n$ whence $b
  \waybelow \sqsup_n x_n$ which entails $b \step c(\sqsup_n x_n) =
  \bot$, a contradiction.
\end{proof}

\noindent
The following lemma is standard in (classical) domain theory, e.g.
\cite[Proposition 4.0.2]{Abramsky:1994:DT}, and helps to construct
approximating sequences that in turn are required to show that
step functions form a predomain base.

\begin{lemma} \label{lemma:single-step-waybelow}
  Let $s$ be a step function. Then $b \step c \waybelow s$
  whenever $c \waybelow s(b)$. 
\end{lemma}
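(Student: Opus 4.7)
The plan is to unfold the definition of $\waybelow$ on the function-space predomain $B \to C$ directly. So I take a chain $(t_n)_n$ in $B \to C$ for which $T = \sqsup_n t_n \in B \to C$ exists and $s \below T$, and aim to produce (weakly) an $n$ such that $b \step c \below t_n$.

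First, since $\below$ on $B \to C$ is defined pointwise, $s \below T$ gives $s(b) \below T(b)$ in $C$; together with the hypothesis $c \waybelow s(b)$ and Lemma \ref{lemma:way-below-below} this yields $c \waybelow T(b)$.

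The key step I would then establish is that the chain $(t_n(b))_n$ in $C$ has supremum $T(b)$, reducing the problem to an instance of $\waybelow$ in $C$: once this is shown, $c \waybelow T(b) = \sqsup_n t_n(b)$ together with the definition of $\waybelow$ on $C$ yields (weakly) an $n \in \Nat$ with $c \below t_n(b)$. I expect this supremum identity to be the main obstacle. The direction $\sqsup_n t_n(b) \below T(b)$ is immediate from pointwise domination; the reverse rests on the fact that any upper bound $z \in C$ of $(t_n(b))_n$ should give rise to a step function that dominates $(t_n)$ but is pointwise no larger than $T$ at $b$, so minimality of $T$ among step functions forces $T(b) \below z$.

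Granting the identity and the $n$ it produces, the final verification of $b \step c \below t_n$ is straightforward. Fix an arbitrary $x \in B$ and use decidability of $\waybelow$ on $B$ to case split. If $b \waybelow x$, then $b \below x$ by Lemma \ref{lemma:way-below-below}, and step functions are monotone (immediate from their definition together with the same lemma), so $(b \step c)(x) = c \below t_n(b) \below t_n(x)$. If $\neg(b \waybelow x)$, then $(b \step c)(x) = \bot \below t_n(x)$ trivially. Thus $(b \step c)(x) \below t_n(x)$ in either case, which yields $b \step c \below t_n$ and hence $b \step c \waybelow s$.
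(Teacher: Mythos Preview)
Your argument follows the paper's proof essentially step for step: unfold $\waybelow$ on $B \to C$, pass to evaluation at $b$ to obtain (weakly) an index $n$ with $c \below t_n(b)$, and then verify $b \step c \below t_n$ by the same case split on whether $b \waybelow x$. The one difference is expository: you isolate the identity $T(b) = \sqsup_n t_n(b)$ as the crux, whereas the paper simply writes ``Then $s(b) \below \sqsup_n s_n(b)$'' and proceeds without further comment, so on this point you are in fact more explicit than the original (even though your sketch of how an arbitrary upper bound $z$ produces a dominating step function with value $\below z$ at $b$ is not obviously completable as stated).
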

\begin{proof}
  Assume that $s \below \sqsup_n s_n$ where the $s_n$ are step
  functions.
  Then $s(b) \below \sqsup_n s_n(b)$. Since $c \waybelow s(b)$ we
  can find $n \in \Nat$ such that $c \below s_n(b)$. Let $x \in B$,
  we show that $b \step c(x)  \below s_n(x)$. In case $\neg(b
  \waybelow x)$ we have $b \step c(x) = \bot \below s_n(x)$. In case
  $b \waybelow x$ we have $b \step c(x) = c \below s_n(b) \below
  s_n(x)$ by monotonicity of $s_n$.
\end{proof}

\begin{lemma} \label{lemma:function-space-predomain-base}
  Let $B$ be a countable, bounded complete predomain base and
  suppose that $\waybelow$ and separatedness of finite sets on $B$
  are decidable, and let $C$ be
  bounded complete. Then the set $B \to C$ of step functions is a predomain
  base.
\end{lemma}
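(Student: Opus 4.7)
The plan is to verify the three defining properties of a predomain base for $B \to C$: countability, decidability of $\below$, and existence of approximating sequences. Countability follows because a step function is a finite subset of the countable set $B \times C$. Decidability of $\below$ on $B \to C$ is immediate from Corollary~\ref{cor:separated-order-decdiable}, using the hypotheses that $\waybelow$ and separatedness on $B$ are decidable, that $C$ is bounded complete, and that $\below$ on $C$ is decidable (since $C$ is a predomain base).

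The substantive work is the construction of approximating sequences. Given a step function $s = \sqsup_{i \in I} b_i \step c_i$ and an enumeration $B = \{b^{(k)} \mid k \in \Nat\}$, the value $s(b^{(k)}) = \sqsup\{c_j \mid b_j \waybelow b^{(k)}\}$ is a well-defined element of $C$ for each $k$: the indexing set is finite and decidable by decidability of $\waybelow$ on $B$, the corresponding $c_j$'s are consistent in $C$ by the step-function condition (with $b^{(k)}$ as upper bound of the $b_j$'s), and $C$ is bounded complete and pointed. Let $(s(b^{(k)})^m)_m$ be an approximating sequence of $s(b^{(k)})$ in $C$, which exists as $C$ is itself a predomain base, and set
\[
  s_n \;=\; \bigl\{\, b^{(k)} \step s(b^{(k)})^m \;\bigm|\; k, m
  \le n \,\bigr\}.
\]
I would then verify: (i) each $s_n$ is a valid step function, since if $\{b^{(k)} \mid (k,m) \in J\}$ has an upper bound $x \in B$, monotonicity of $s$ yields $s(b^{(k)})^m \below s(b^{(k)}) \below s(x)$ for all $(k,m) \in J$; (ii) each single step function $b^{(k)} \step s(b^{(k)})^m$ lies way-below $s$ by Lemma~\ref{lemma:single-step-waybelow}, since $s(b^{(k)})^m \waybelow s(b^{(k)})$, and hence $s_n \waybelow s$ by Lemma~\ref{lemma:waybelow-sup} (taking all $b_i' = s$); (iii) $(s_n)_n$ is ascending as $s_n \subseteq s_{n+1}$.

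The main remaining obstacle is to show $\sqsup_n s_n = s$ in $B \to C$, for which I would establish the pointwise identity $s(x) = \sqsup\{s(b) \mid b \in B,\ b \waybelow x\}$ for every $x \in B$. The inequality $\sqsup \below s(x)$ is immediate from monotonicity of $s$; for the reverse, I apply weak interpolation (Corollary~\ref{cor:interpolation}) in $B$: for each $i$ with $b_i \waybelow x$ there must exist $b \in B$ with $b_i \waybelow b \waybelow x$, giving $c_i \below s(b)$, and hence $s(x) = \sqsup\{c_i \mid b_i \waybelow x\} \below \sqsup\{s(b) \mid b \waybelow x\}$, where the passage from weak to ordinary existence is licensed by $\neg\neg$-stability of $\below$. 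Combined with $s(b) = \sqsup_m s(b)^m$ and routine reindexing of suprema, this yields $\sqsup_n s_n(x) = s(x)$ pointwise, and hence $\sqsup_n s_n = s$ in $B \to C$.
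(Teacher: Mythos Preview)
Your proposal is correct and follows essentially the same route as the paper. The paper defines $s_n = \sqsup_{0 \le i \le n} b_i \step e_i^n$ (one approximant $e_i^n$ per base element, with the second index tied to $n$) rather than your doubly-indexed $\{b^{(k)}\step s(b^{(k)})^m \mid k,m\le n\}$, but since approximating sequences are increasing your extra pieces are redundant and the two $s_n$'s define the same function. For the least-upper-bound step the paper argues directly against a competitor $t$: it picks a single interpolant $\hat x$ with $b_j \waybelow \hat x \waybelow x$ for all $j$ in $\{i\in I\mid b_i\waybelow x\}$, notes $s(x)=s(\hat x)$, writes $\hat x=b_i$ in the enumeration, and bounds each $e_i^n$ by $t(x)$; your pointwise identity $s(x)=\sqsup\{s(b)\mid b\waybelow x\}$ unpacks to the same interpolation argument. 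Both proofs invoke Lemma~\ref{lemma:single-step-waybelow} plus a finite-supremum-of-way-below lemma for $s_n\waybelow s$, and Corollary~\ref{cor:separated-order-decdiable} for decidability of $\below$.
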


\begin{proof}
  We have to show that every step function $s \in B \to C$ has an
  approximating sequence. Let $B = \lbrace b_n \mid n \in \Nat
  \rbrace$. As $C$ is a predomain base, every $s(b_i) \in B$ has an
  approximating sequence $(e_i^j)_j$. Let $s_n = \sqsup_{0 \leq i
  \leq n} b_i \step e_i^n$. 

  We first show that $s_n$, for $n \in \Nat$, is indeed a step function.
  Let $J \subseteq \lbrace 0, \dots, n \rbrace$ be a non-empty
  subset such that $\lbrace b_j \mid j \in J \rbrace$ is consistent.
  We need to show that $\lbrace e_j^n \mid j \in J \rbrace$ is
  consistent. Pick $j \in J$. Then
  \[ e_j^n \waybelow s(b_j) \below s(\sqsup_{j \in J} b_j) \]
  so that $s(\sqsup_j b_j)$ is an upper bound of all $e_j^n$ for $j
  \in J$.

  It follows from Lemma \ref{lemma:single-step-waybelow} in
  combination with Corollary \ref{cor:waybelow-sup-bounded-complete} that $s_n
  \waybelow s$ for all $n \in \Nat$.

  We now show that $s \below \sqsup_n s_n$ for all $n \in \Nat$. Fix
  $n \in \Nat$ and let $x \in B$, we show that $s_n(x) \below s(x)$.
  This follows from
  \begin{align*}  s_n(x) 
    & = \sqsup \lbrace e_i^n \mid 0 \leq i \leq n, b_i \waybelow x \rbrace \\
    & \below  \sqsup \lbrace s(b_i) \mid 0 \leq i \leq n, b_i \waybelow x \rbrace \\
    & \below \sqsup \lbrace s(x) \mid 0 \leq i \leq n, b_i \waybelow x \rbrace \\
    & = s(x)
  \end{align*}
  so that $s$ is an upper bound of the $s_n$ in $B \to C$.

  We next establish that $s$ is in fact the least upper bound of the
  $s_n$. So let $t$ be a step function and suppose that $s_n
  \below t$ for all $n \in \Nat$. Let $x \in B$, we show that $s(x)
  \below t(x)$. As $s$ is a step function, we may assume that
  $s = \sqsup_{i \in I} b_i \step c_i$ for a finite set $I \subseteq
  \Nat$ and elements $c_i \in C$. Let $J = \lbrace i \in I \mid b_i
  \waybelow x \rbrace$ so that $s(x) 
  = \sqsup \lbrace c_j \mid j  \in J \rbrace$. By
  the interpolation Lemma (Corollary \ref{cor:interpolation}) we may find
  an interpolant $\hat x$ such that $b_j \waybelow \hat x \waybelow
  x$ for all $j \in J$. Then
  \[ b_i \waybelow x \mbox{ iff } b_i \waybelow \hat x \mbox{ for
  all }  i \in I \]
  so that $s(x) = s(\hat x)$. 
  As $B$ is countable, we may assume that $\hat x = b_i$ for some $i
  \in \Nat$. As $(e_i^j)_j$ is an approximating sequence for $s(x) =
  s(\hat x)$, it suffices to show that $e_i^n \below t(x)$. Now, for
  $n \geq i$, we have
  \begin{align*}
    e_i^n 
    & \below \sqsup_{0 \leq i \leq n} b_i \step e_i^n(x)
    && \mbox{(as $b_i = \hat x \waybelow x$ and $n \geq i$)} \\
    & = s_n(x) \below t(x)
  \end{align*}
  so that $e_i^n \below t(x)$ for all $n \in \Nat$ whence $s(x) =
  \sqsup_n e_i^n \below t(x)$ and finally $s \below t$ as $x$ was
  arbitrary. We note that this goal formula is stable whence
  applying the approximation lemma (that only guarantees classical
  existence of an interpolant) in fact proves the claim.

  We finally need to establish that $s_n \waybelow s$ for all $n \in
  \Nat$. But this is immediate from Lemma \ref{lemma:waybelow-sup}.
  The last requirement for a predomain base, decidability of
  ordering, has already been established in Corollary
  \ref{cor:separated-order-decdiable}.

\end{proof}

\begin{remark}
  It is in general not true that $c \waybelow c'$ implies that $b
  \step c \waybelow b \step c'$. If $b' \waybelow b$ in addition to
  $c \waybelow c'$ we have $b \step c \waybelow b' \step c'$. This
  fact cannot be exploited in the proof of the above Lemma as we
  have no way of approximating base elements from above.
\end{remark}

\noindent
We now show that every Scott continuous function arises as a
supremum of step functions.

\begin{lemma} \label{lemma:sup-simple}
  Let $B, C$ be predomain bases for which consistency is continuous,
  let $B$ be bounded complete with decidable $\waybelow$, and let
  $C$ be pointed. 
  Then, for every Scott continuous $f: \hat B \to \hat C$ there
  exists $s = (s_n)_n \in \widehat{B \to C}$ such that $s(x) =
  \sqsup_n  s_n(x) = f(x)$ for all $x \in \hat B$.
\end{lemma}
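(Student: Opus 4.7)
The plan is to mimic the classical construction that recovers a Scott continuous function as the pointwise supremum of step functions evaluated at basis elements. Enumerate $B = \lbrace b_i \mid i \in \Nat \rbrace$, and for each $i \in \Nat$ write $f(b_i) \in \hat C$ via its representing sequence $f(b_i) = (f_i^k)_k$. Define
\[ s_n = \sqsup_{0 \leq i \leq n} b_i \step f_i^n. \]
My task is then to check that (i) every $s_n$ is a well-defined step function; (ii) $(s_n)_n$ is increasing in $B \to C$ so that $s = (s_n)_n \in \widehat{B \to C}$; and (iii) $\sqsup_n s_n(x) = f(x)$ for every $x \in \hat B$.

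For (i), if $J \subseteq \lbrace 0, \dots, n \rbrace$ and $\lbrace b_j \mid j \in J \rbrace$ is consistent via an upper bound $b' \in B$, monotonicity of $f$ gives $f(b_j) \below f(b') = (u_k)_k$ in $\hat C$; since $f_j^n \waybelow f(b_j) \below f(b')$, the definition of $\below$ on $\hat C$ yields, weakly, some $k_j$ with $f_j^n \waybelow u_{k_j}$, and $k = \max_j k_j$ exhibits $u_k$ as a common upper bound of $\lbrace f_j^n \mid j \in J \rbrace$. Item (ii) is immediate since $f_i^n \below f_i^{n+1}$ forces $s_n(x) \below s_{n+1}(x)$ for all $x \in B$.

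For (iii), I first establish the identity on $B$. For $b \in B$, the direction $\sqsup_n s_n(b) \below f(b)$ is straightforward because $b_i \waybelow b$ entails $f_i^n \below f(b_i) \below f(b)$ by monotonicity of $f$. The reverse direction is the heart of the argument: let $(b^k)_k$ be an approximating sequence for $b$ with $b^k = b_{i(k)}$ for suitable indices $i(k)$, so that $b_{i(k)} \waybelow b$. Scott continuity of $f$ gives $f(b) = \sqsup_k f(b^k) = \sqsup_k \sqsup_\ell f_{i(k)}^\ell$. For $n \geq \max\lbrace i(k), \ell \rbrace$, the summand $f_{i(k)}^\ell$ appears in $s_n(b)$, so $f(b_{i(k)}) = \sqsup_\ell f_{i(k)}^\ell \below \sqsup_n s_n(b)$ for every $k$, whence $f(b) \below \sqsup_n s_n(b)$.

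Lifting from $B$ to $\hat B$ is then routine. For $x = (x_m)_m \in \hat B$, Scott continuity of $f$ gives $f(x) = \sqsup_m f(x_m)$, and the Scott continuous extension of $s_n$ to $\hat B \to \hat C$ gives $s_n(x) = \sqsup_m s_n(x_m)$. Using Corollary \ref{cor:diagonal-sup} to swap the double supremum and applying the basis identity just established,
\[ \sqsup_n s_n(x) = \sqsup_n \sqsup_m s_n(x_m) = \sqsup_m \sqsup_n s_n(x_m) = \sqsup_m f(x_m) = f(x), \]
which completes the proof. The main obstacle is the reverse inclusion in clause (iii) on basis elements, where Scott continuity of $f$ must be chained together with the existence of approximating sequences in $B$ and the representing sequences of the values $f(b_i)$; the rest of the argument is bookkeeping and swapping of suprema.
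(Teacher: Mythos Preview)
Your overall strategy---establish $\sqsup_n s_n(b) = f(b)$ for $b \in B$ first, then lift to $\hat B$ by swapping suprema---is sound and is more modular than the paper's argument, which instead computes the $k$-th component of both $\sqsup_n s_n(x)$ and $f(x)$ directly via the explicit formulas of Lemma~\ref{lemma:sup-chains} and compares them term by term. However, your verification of clause (i) contains a genuine gap. You assert that $f_j^n \waybelow f(b_j)$, but $f_j^n$ is merely the $n$-th entry of the representing sequence of $f(b_j) \in \hat C$; one only has $f_j^n \below f(b_j)$, and there is no reason for the way-below relation to hold. Consequently the definition of $\below$ on $\hat C$ does not hand you a $k_j$ with $f_j^n \waybelow u_{k_j}$. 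The repair exists but must invoke the hypothesis that consistency on $C$ is continuous: take an approximating sequence $(c_j^k)_k$ for $f_j^n$ in $C$, so that $c_j^k \waybelow f_j^n$; for each fixed $k$ the definition of $f(b_j) \below f(b')$ now applies and yields a common upper bound $u_m$ of $\lbrace c_j^k \mid j \in J \rbrace$, and continuity of consistency then lifts this to consistency of $\lbrace f_j^n \mid j \in J \rbrace$ itself.

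A smaller point concerns the final sup-swap. Corollary~\ref{cor:diagonal-sup} requires a poset with suprema of all increasing chains, and for $\hat C$ this would need $C$ bounded complete rather than merely pointed. You can sidestep this either by invoking Lemma~\ref{lemma:sup-swap} after separately checking that the three suprema exist (the diagonal $\sqsup_n s_n(x_n)$ is just the element $(s_n(x_n))_n$ of $\hat C$ by Lemma~\ref{lemma:own-sup}, and the iterated ones equal $f(x)$), or by arguing directly: $s_n(x_m) \below f(x_m) \below f(x)$ shows $f(x)$ is an upper bound of all $\hat s_n(x) = (s_n(x_m))_m$, and if $u$ is any other upper bound then $s_n(x_m) \below \hat s_n(x) \below u$ for all $n, m$, whence $f(x_m) = \sqsup_n s_n(x_m) \below u$ by the basis identity and finally $f(x) = \sqsup_m f(x_m) \below u$.
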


\begin{proof}
  Let $B = \lbrace b_0, b_1, \dots \rbrace$ and assume that
  $(b^i)_i$ is an approximating sequence for all $b \in B$. Suppose that $f:
  \hat B \to \hat C$ is Scott continuous. Let
  \[ s_n = \sqsup_{0 \leq i \leq n} b_i \step f(b_i)_n \]
  where we write $f(b_i)_n$ for the $n$-th element of the sequence
  $f(b_i) \in \hat C$. 
  Let $x = (x_n)_n \in \hat B$ be given. 
  
  We first show that $f(x) \below \sqsup_n s_n(x)$. 
  Let $y_n = \sqsup_{0 \leq i
  \leq n} x_i^n$ and $y = (y_n)_n$. Then $x = y$  by Lemma
  \ref{lemma:approx-equal}
  and it therefore
  suffices to show that
  \[ \sqsup_n f(y_n)\sqsubseteq \sqsup_n s_n(x)  \]
  as $f$ is extensional, i.e. $f(x) = f(y)$ and Scott continuous,
  i.e. $f(\sqsup_n y_n) = \sqsup_n f(y_n)$. We have the following
  calculations for the $k$-th elements of the respective sequences:
  \[ A_k = (\sqsup_n f(y_n))_k = \sqsup_{n \leq k} f(y_n)_k =
  \sqsup_{n \leq k} f(\sqsup_{i \leq n} x_i^n) 
  \]
  and
  \[ B_k = (\sqsup_n s_n(x))_k = \sqsup_{n \leq k} s_n(x_k) =
  s_k(x_k) = \sqsup \lbrace f(b_i)_k \mid 0 \leq i \leq k, b_i
  \waybelow x_k \rbrace 
  \]
  by Definition of $s_n$ and Lemma \ref{lemma:sup-chains}. We
  therefore need to show that $(A_k)_k \below (B_k)_k$. So let $k
  \in \Nat$ and $x \in C$ such that $x \waybelow A_k$. Choose $l$
  large enough so that $l \geq k$, $\lbrace x_0, \dots, x_k \rbrace
  \subseteq \lbrace b_0, \dots, b_l \rbrace$ and $\sqsup_{i \leq n}
  x_i^n \in \lbrace b_0, \dots, b_l \rbrace$ for all $n \leq k$.

  Now fix $n \leq k$. By our assumption on $l$, we have $j \leq l$
  such that $\sqsup_{i \leq n} x_i^n = b_j$. Moreover, for all $i
  \leq n$ we have $x_i^n \waybelow x_i \below x_l$ so that
  $b_j = \sqsup_{i \leq n} x_i^n \waybelow x_l$.
  Hence \[
    f(\sqsup_{i \leq n} x_i^n)_k = f(b_j)_k  \below 
    \sqsup \lbrace f(b_i)_k \mid 0 \leq i \leq l, b_i \waybelow x_l
    \rbrace = B_l. \]
  As $n$ was arbitrary, we therefore obtain
  $\sqsup_{n \leq k} f(\sqsup_{i \leq n} x_i^n) \below B_l$ and
  finally $x \waybelow B_l$ as required.

  We now show that $\sqsup_n  s_n(x) \below f(x)$. By
  Scott-continuity of $f$, we have $f(x) = \sqsup_n f(x_n)$ and the
  claim follows if $\sqsup_n s_n(x) \below \sqsup_n f(x_n)$. As
  above, we calculate for the $k$-th element of the respective
  sequences that
  \[ A_k = (\sqsup_n f(x_n))_k = \sqsup_{n \leq k} f(x_n)_k \]
  and
  \[ B_k = (\sqsup_n s_n(x))_k = \sqsup \lbrace f(b_i)_k \mid 0 \leq
  i \leq k, b_i
    \waybelow x \rbrace
  \]
  where we have used Lemma \ref{lemma:sup-chains} for the
  calculation of $A_k$, and the calcuation of $B_k$ is as above. To
  see that $(B_k)_k \below (A_k)_k$, fix $x \in C$, $k \in \Nat$ and
  assume that $x \waybelow B_k$. Let $N = \lbrace 0 \leq i \leq k
  \mid b_i \waybelow x_k \rbrace$ so that $B_k = \sqsup \lbrace
  f(b_i)_k \mid i \in N \rbrace$.

  We claim that $f(b_i) \below f(x_k)$ for every $i \in N$. This is
  immediate, since $i \in N$ implies that $b_i \waybelow x_k$,
  therefore $b_i \below x_k$ and $f(b_i) \waybelow f(x_k)$ by
  monotonicity of $f$. As a consequence, 
  $f(b_i)_k \below \sqsup_n f(b_i)_n = f(b_i) \below f(x_k)$ whence
  $\sqsup \lbrace f(b_i)_k \mid i \in N \rbrace \below f(x_k)$ where
  we view the left-hand side of the last equation as a constant
  sequence. As $x \waybelow B_k = \sqsup \lbrace f(b_i)_k \mid i \in
  N \rbrace$ there must exist $l \in \Nat$ such that $x \waybelow
  f(x_k)_l$. By monotonicity of $(f(x_k)_n)_n$, the same holds for
  $l$ replaced by $\max \lbrace l, k \rbrace$ so that we assume
  without loss of generality that $l \geq k$. In summary, we have
  obtained $x \waybelow f(x_k)_l \below \sqsup_{n \leq l} f(x_n)_l =
  A_l$ as required.
\end{proof}

\section{Real Numbers as Total Elements of the Interval Domain}
\label{sec:interval-domain}

We now consider an important example of predomain bases in more
detail, the predomain base of rational intervals that we have
already introduced in Example  \ref{ex:interval-predomain}.
Specifically, we introduce a constructive representation of the set
of real numbers as the total elements of the interval domain, and
give a characterisation of continuous functions on real numbers in
terms of Scott continuous functions on the interval domain.
Specifically, we compare total elements of the interval domain with
the (standard) constructive notion of Cauchy reals, and characterise
the total elements of the interval domain as \emph{Markov reals},
i.e. Cauchy reals where the modulus of convergence has been replaced
with a modulus of non-divergence.

In summary, we establish that if Markov's principle holds, Cauchy reals and Markov reals are
equivalent, and in turn equivalent to total reals in the sense of
domain theory.

\begin{defn}[Basic Notions] \label{defn:reals}
  We write $\IR$ for the continuous completion of $\IQ$, and write
  $\lo \alpha, \hi \alpha$ for the upper and lower endpoint of
  $\alpha \in \IQ$ as in Example \ref{ex:interval-predomain}, and
  identify $x \in \IR$ with the sequence $(x_n)_n$ so that $x =
  (x_n)_n = ([\lo x_n, \hi x_n])_n$ for $x \in \IR$. 
  The \emph{length} of $a \in \IQ$ is given by $\ell(a) = \hi a - \lo
  a$. An element $x \in \IR$ is a \emph{total real} or a
  simply a \emph{real}, if for all
  $k \in \Nat$ there must exist $n \in \Nat$ such that $\ell(x_n) \leq
  2^{-k}$.  We write $\Real$ for the set of total reals, and call
  a function $f: \IR \to \IR$ \emph{total} if $f(x) \in \Real$
  whenever $x \in \Real$.
  For $a, b \in \IQ$ we put
  \[ a + b = [\lo a + \lo b, \hi a + \hi b] \qquad -a = [- \hi a, - \lo a] \]
  and $|a| = a$ if $0 \leq \lo a$, $|a| = [0, \max \lbrace - \lo a,
  \hi a \rbrace]$ if $\lo a \leq 0 \leq \hi a$ and $|a| = -a$ if $\hi a \leq 0$.
  These operations are extended pointwise to
  elements $x, y \in \IR$, i.e. $x + y = (x_n + y_n)_n$, $-x =
  (-x_n)_n$, $|x| = (|x_n|)_n$ for
  all $n \in \Nat$. For $x = (x_n)_n \in \IR$ we put $0 \leq x$ if
  given $k \in \Nat$ there must exist $n \in \Nat$ such that
  $-2^{-k} \leq \lo x_n$ and $x \leq y$ if $0 \leq y - x$.
\end{defn}

\noindent
Given that the arithmetic operation above are defined on
equivalence classes of (elements of) the continuous completion of
$\IQ$, we need to show that they are well defined with respect to
the (defined) equality in the continuous completion (Definition
\ref{defn:cont-compl}). This is a standard verification. 

\begin{lemma}
  Let $x$, $y$, $x'$ and $y' \in \IR$ and suppose that $x = x'$ and
  $y = y'$. Then $x + y = x' + y'$, $-x = -x'$, $|x| = |x'|$ and $0
  \leq x$ if and only if $0 \leq x'$.
\end{lemma}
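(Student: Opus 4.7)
The plan is to verify each claim by checking that the representing sequences produce the same element of $\IR$, using the explicit characterisation of $\waybelow$ on $\IQ$ from Lemma \ref{lemma:IQ-waybelow} as the workhorse. Unpacking the definition of equality on the continuous completion (Definition \ref{defn:cont-compl}), by symmetry it suffices in each case to establish one inclusion $z \below z'$, which reduces to the concrete transport problem: for every $n$ and every $[p, q] \waybelow z_n$, there must exist $m$ with $[p, q] \waybelow z'_m$. So for each operation I have to move rational-endpoint approximations of $x_n$ (and $y_n$) across the equalities $x = x'$ and $y = y'$.

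For $x + y$, I would observe that $[p, q] \waybelow [\lo x_n + \lo y_n,\ \hi x_n + \hi y_n]$ leaves a positive rational margin $\delta$ such that the symmetric enlargements $[\lo x_n - \delta, \hi x_n + \delta]$ and $[\lo y_n - \delta, \hi y_n + \delta]$ are way below $x_n$ and $y_n$ respectively. Applying $x \below x'$ and $y \below y'$ to these two approximations produces weakly indices $m_1, m_2$; taking $m = \max(m_1, m_2)$ and adding the endpoint inequalities yields $[p, q] \waybelow x'_m + y'_m$. Negation is a direct consequence of the involution $[p, q] \waybelow -a$ iff $[-q, -p] \waybelow a$, so any $\waybelow$-approximation transports to $-x$ by relabelling endpoints.

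For absolute value I would first rewrite the piecewise formula from Definition \ref{defn:reals} as $|a| = [\max(\lo a, -\hi a, 0),\ \max(\hi a, -\lo a)]$, checking that it agrees with the three original cases. Both endpoints are then $1$-Lipschitz functions of $(\lo a, \hi a)$, so the margin argument used for addition applies with only cosmetic changes. For the order relation, assuming $0 \leq x$ and $x = x'$, given $k$ I would invoke the hypothesis at $k+1$ to obtain weakly an $n$ with $-2^{-(k+1)} \leq \lo x_n$, and use $b = [-2^{-k},\ \hi x_n + 1]$ as an explicit element way below $x_n$ (its lower endpoint is pinned precisely at $-2^{-k}$); pushing $b$ across $x \below x'$ then delivers an $m$ with $-2^{-k} < \lo x'_m$. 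The statement $x \leq y$ reduces to the addition and negation cases via $0 \leq y - x$.

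The main obstacle is not any single calculation but the bookkeeping of weak existentials, which compose via the pattern highlighted in the remark after Definition \ref{defn:predomain-bases}: inside a goal of the shape $\wex m\sep \dots$ one may freely nest further weak existentials, so the weakly-existing indices delivered by $x \below x'$ and $y \below y'$ can be combined without any appeal to Markov's Principle. A small but worth-checking point is the unified max-formula for $|a|$: it matters because the piecewise definition uses sign comparisons that are decidable on individual rational intervals but not uniformly on $\IR$, and the max-formula obviates that distinction, so that the addition-style margin argument can be applied without case-splitting on the sign of $\lo x_n$.
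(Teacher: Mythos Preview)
Your proposal is correct. For addition, negation, and the order relation it coincides with the paper's argument up to cosmetic choices (the paper transports $x_n \pm 2^{-(k+1)}$ rather than your explicit $b = [-2^{-k}, \hi x_n + 1]$ for the order clause, but either works).

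The genuine divergence is in the treatment of absolute value. The paper argues by a three-way case split on the sign configuration of $x_n$, and in the middle case $\lo x_n \leq 0 \leq \hi x_n$ it must perform a further three-way split on the sign configuration of the transported $x'_m$, giving five sub-arguments in total. Your uniform formula $|a| = [\max(\lo a, -\hi a, 0),\ \max(\hi a, -\lo a)]$ eliminates all of this and is a real simplification. One point worth making explicit when you write it up: the bare phrase ``$1$-Lipschitz'' is not quite the operative property, because the transport $x_n \pm \delta \waybelow x'_m$ only yields the one-sided bounds $\lo x'_m > \lo x_n - \delta$ and $\hi x'_m < \hi x_n + \delta$, not a two-sided $\ell_\infty$ bound. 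What you actually use is that $|\cdot|$ is monotone for $\below$ on $\IQ$ (immediate from the max-formula) together with $|x_n| \pm \delta \below |x_n \pm \delta|$ (which does follow from the $1$-Lipschitz observation applied to the specific perturbation $a \mapsto a \pm \delta$); chaining these gives $[p,q] \waybelow |x_n| \pm \delta \below |x_n \pm \delta| \below |x'_m|$ and hence $[p,q] \waybelow |x'_m|$. This is exactly the ``margin'' template you invoke, but the monotonicity step deserves a sentence.
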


\begin{proof}
We write $x = (x_n)_n$, $x_n = [\lo x_n, \hi x_n]$ and similarly for $x'$, $y$ and $y'$. For an
element $z = [\lo z, \hi z] \in \IQ$ and $\epsilon \in \Rat$ with
$\epsilon \geq 0$ we write $z \pm \epsilon = [\lo
z - \epsilon, \hi z + \epsilon]$ as in Example
\ref{ex:interval-predomain}. We
begin with addition where it suffices to show that $x + y \below x'
+ y'$. So assume that $b \in \IQ$, $n \in \Nat$ and $b \waybelow (x
+ y)_n$. We show that there must exist $m$ such that $b \waybelow
(x' + y')_m$. Because $b \waybelow (x + y)_n$, there exists
$\epsilon > 0$ such that $b \waybelow (x + y)_n \pm \epsilon$.
Because $x \below x'$ there mus exist $m_x$ such that $x_n \pm
\frac{\epsilon}{2} \waybelow x'_{m_x}$. For the same reason, there
must exist $m_y$ such that $y_n \pm \frac{\epsilon}{2} \waybelow
y'_{m_y}$. For $m = \max \lbrace m_x, m_y \rbrace$ we therefore
obtain that
$b \waybelow (x + y)_n \pm \epsilon = x_n \pm
\frac{\epsilon}{2} + y_n \pm \frac{\epsilon}{2} \below x'_{m_x}
+ y'_{m_y} \below (x' + y')_m$ as required.

For unary minus, it similarly suffices to show that $-x \below -x'$.
If $b \waybelow -x_n$, Lemma \ref{lemma:IQ-waybelow} shows that $-b
\waybelow x_n$ whence there must exist $m$ such that $-b \waybelow
x'_m$ so that $b \waybelow -x'_m$ by applying Lemma
\ref{lemma:IQ-waybelow} again.

For the last claim, assume that $0 \leq x$, we show that $0 \leq
x'$. Let $k \geq 0$ be given. We show that there must exist $m \in \Nat$
such that $-2^{-k} \leq \lo x'_m$. 
As $0 \leq x$, there must exist $n$ such that $-2^{-(k+1)} \leq
\lo{x_{n}}$. As $x_{n} \pm 2^{-(k+1)} \waybelow x_{n}$, there must
exist $m$ such that $x_{n} \pm 2^{-(k+1)} \waybelow x'_{m}$
as $x = x'$ by assumption. In summary, we then obtain $-2^{-k} = 
-2^{-(k+1)} - 2^{-(k+1)} \leq  \lo{x_n} - 2^{-(k+1)}  \leq x_m$ as
required.

To show that $|x| = |x'|$ assume that $b \waybelow |x|_n$. As
ordering on $\Rat$ is decidable, we may distinguish the following
cases.

\emph{Case 1: $0 < \lo x_n$.} It is easy to see that in this case,
there must exist $m_0$ such that $0 < \lo x'_{m_0}$. Also, as $b
\waybelow |x_n| = x_n$, there
must exist $m_1$ such that $b \waybelow x'_{m_1}$. For $m = \max
\lbrace m_0, m_1 \rbrace$ we therefore have that $b \waybelow
x'_{m_1} \below x'_m$ and as $0 \leq \lo x'_{m_0} \leq \lo x'_m$
we have that $|x'_m| = x'_m$ so that $b \waybelow |x'_m|$ as
required.

\emph{Case 2:} $\lo x_n \leq 0 \leq \hi x_n$. In this case we have
$b \waybelow [0, \max \lbrace -\lo x_n, \hi x_n \rbrace]$ so that $\lo b <
0$ and $\hi b > \max \lbrace -\lo x_n, \hi x_n \rbrace$. We then
have that $[-\hi{b}, \hi{b}] \waybelow x_n$ and because $x = x'$
there must exist $m$ such that $[-\hi{b}, \hi{b}] \waybelow x'_m$.
Again the decidability of order on
$\Rat$ allows us to distinguish three subcases to relate this to
$|x'_m|$.

\emph{Subcase 2a: $\lo x'_m > 0$}. Then $\lo b < 0 < \lo x'_m \leq
\hi x'_m < \hi b$ as $-b < 0$ and $[-\hi{b}, \hi{b}] \waybelow x'_m$.
Hence $b = [\lo b, \hi b] \waybelow x'_m = |x'_m|$.

\emph{Subcase 2b: $\lo x'_m \leq 0 \leq \hi x'_m$.} As $[-\hi{b},
\hi{b} ] \waybelow x_m$ we have that $\max \lbrace -\lo x'_m,
\hi x'_m \rbrace < \hi{b}$. As $\lo{b} < 0$ as noted above, this
gives $\lo{b} < 0 \leq \max \lbrace -\lo x'_m,
\hi x'_m \rbrace < \hi{b}$ so that $b \waybelow [0,  \max \lbrace -\lo x'_m,
\hi x'_m \rbrace] = |x_m|$.

\emph{Subcase 2c: $\hi x'_m < 0$.} Similarly to Subcase 2a we
obtain that $-\hi{b} < \lo{x'_m} \leq \hi x'_m < 0 < -\lo{b}$ so
that
$\lo b < 0 < - \hi x'_m  \leq -\lo x'_m < \hi b$ and $[\lo b, \hi
b] \waybelow [-\hi x'_m, -\lo x'_m] = |x_m|$.

Our last case is analogous to the first:

\emph{Case 3: $\hi x_n < 0$.} As in the first case, it is easy to
see that there must exist $m_0$ such that $x'_{m_0} < 0$. Moreover,
$b \waybelow |x_n| = -x_n$ so that $-b \waybelow x_n$. As
consequence, there must exist $m_1$ such that $-b \waybelow
x'_{m_1}$. Let $m = \max \lbrace m_0, m_1 \rbrace$. Then $-b
\waybelow x'_m$ whence $b \waybelow -x'_m$ and $\hi x'_m  < 0$ so that
$|x'_m| = -x'_m$. Taken together this gives $b \waybelow |x'_m|$ as
required.

This argument shows that $|x| \below |x'|$ which is sufficient to establish
the penultimate claim.
\end{proof}

\noindent
The fact that the total reals coincide with the maximal elements of
$\IR$ is essentially a consequence that totality is formulated
negatively (and, like maximality,  has no computational content).
\begin{lemma} \label{lemma:total-max}
  The total reals are precisely the maximal elements of $\IR$.
\end{lemma}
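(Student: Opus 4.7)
The plan is to prove the two inclusions separately: the forward direction (total implies maximal) admits a constructive argument, while the backward direction (maximal implies total) is classical, in line with the remark preceding the lemma that both notions are computation-free.

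For the forward direction, suppose $x \in \Real$ and $x \below y$; I will show $y \below x$. Fix $b \waybelow y_n$ and apply Lemma \ref{lemma:IQ-waybelow} to get $\lo b < \lo y_n \leq \hi y_n < \hi b$. Put $\epsilon = \min(\lo y_n - \lo b, \hi b - \hi y_n) > 0$ and $\delta = \epsilon / 4$. By totality of $x$ there must exist $m$ with $\ell(x_m) < \delta$, and then $\alpha = [\lo x_m - \delta, \hi x_m + \delta]$ satisfies $\alpha \waybelow x_m$. Applying $x \below y$ to $\alpha$ yields (weakly) an index $l$ with $\alpha \waybelow y_l$; using Lemma \ref{lemma:way-below-below} and the fact that $y$ is a chain, I can strengthen this to $\wex l \geq n \sep \alpha \waybelow y_l$. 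Combining $y_n \below y_l$ with $\alpha \waybelow y_l$ places $y_l$ inside both $y_n$ and $(\lo x_m - \delta, \hi x_m + \delta)$; picking any $z \in y_l$ gives $\lo x_m < \hi y_n + \delta$ and $\hi x_m > \lo y_n - \delta$. Because $\ell(x_m) < \delta$, these propagate to $\lo x_m > \lo y_n - 2\delta$ and $\hi x_m < \hi y_n + 2\delta$, and $2\delta = \epsilon/2$ together with the definition of $\epsilon$ yields $\lo b < \lo x_m$ and $\hi x_m < \hi b$. A final application of Lemma \ref{lemma:IQ-waybelow} gives $b \waybelow x_m$, as required.

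For the backward direction, suppose $x$ is maximal. To show $x$ is total, fix $k$ and argue by contradiction: assume $\forall n \sep \ell(x_n) > 2^{-k}$. Classically, the monotone bounded rational sequences $(\lo x_n)$ and $(\hi x_n)$ have limits $a$ and $b$, and since $\ell(x_n)$ is a decreasing sequence bounded below by $2^{-k}$, we get $b - a = \lim_n \ell(x_n) \geq 2^{-k} > 0$. Choose rationals $p_1 < p_2 < p_3 < p_4$ with $a \leq p_1$, $p_4 \leq b$, and $p_2, p_3 \in (a, b)$, and set $y_n = [p_2, p_3]$ for every $n$. The inclusion $x \below y$ is immediate from $\lo x_n \leq a \leq p_1 < p_2$ and $p_3 < p_4 \leq b \leq \hi x_n$ for each $n$. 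On the other hand $\alpha = [p_1, p_4]$ satisfies $\alpha \waybelow y_n$ by construction, while $\lo x_m \leq a \leq p_1 = \lo \alpha$ for every $m$ rules out $\lo \alpha < \lo x_m$ altogether, so $\alpha \not\waybelow x_m$ for any $m$. Hence $y \not\below x$, contradicting maximality of $x$.

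The main obstacle is the backward direction: the witness $y$ depends on the supremum of $(\lo x_n)$ and the infimum of $(\hi x_n)$, which are only classically available. This is precisely the price flagged by the remark that totality and maximality are negative notions; the arithmetic verifications in both directions are otherwise routine.
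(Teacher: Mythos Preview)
Your forward direction is correct and matches the paper's argument in structure: both extract a rational gap $\epsilon$ from $b \waybelow y_n$, use totality of $x$ to find a short $x_m$, feed $x_m \pm \delta$ through $x \below y$ to locate $x_m$ relative to $y_n$, and close with interval arithmetic to obtain $b \waybelow x_m$. The bookkeeping differs slightly but the idea is the same.

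The backward direction has a genuine gap. You introduce $a = \sup_n \lo x_n$ and $b = \inf_n \hi x_n$ and then select rationals $p_1,\dots,p_4$ positioned relative to $a$ and $b$, but in the paper's framework these limits are not available as objects: a bounded monotone rational sequence need not have a supremum in $\mathsf{HA}^\omega$, and there is no term denoting $a$ or $b$. The remark you invoke---that totality and maximality are negative notions---only licenses double-negation elimination on stable formulas inside the proof of $\bot$; it does not import the least-upper-bound principle, which is a genuine mathematical axiom, not a logical law. Concretely, your argument needs rationals $p_1 < p_3$ with $\forall n\,(\lo x_n \leq p_1)$ and $\forall n\,(p_3 \leq \hi x_n)$, and even the \emph{weak} existence of such a pair does not follow from $\forall n\,\ell(x_n) > 2^{-k}$ without knowing where the sequences accumulate.

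The paper sidesteps this entirely by constructing $y$ directly from the data of $x$: each interval $x_n$ is shrunk from within by an explicit rational amount, so that $y = (y_n)_n \in \IR$ is defined by a concrete formula, $x \below y$ holds by inspection, and maximality yields $y \below x$. The contradiction is then extracted from $y \below x$ by elementary interval arithmetic involving only $\lo x_n$, $\hi x_n$, and the fixed $k$---no suprema required. You should replace your constant witness $[p_2,p_3]$ with such an explicit shrinking of the $x_n$.
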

\begin{proof}
  Let $x \in \Real$ be total and suppose that $x \below y$
  where $y \in \IR$. We show $y \below x$. So let $a \in \IR$, $n \in
  \Nat$ and suppose that $a \waybelow y_n$. We show that there must
  exist $k \in \Nat$ such that $a \waybelow x_k$. 
  By Lemma \ref{lemma:IQ-waybelow} there exists $\epsilon
  \in \IQ$, $\epsilon > 0$ so that $a \waybelow y_n \pm \epsilon$.
  As $x$ is an interval real, there must exist $k \in
  \Nat$ such that $\ell(x_k) \leq \frac \epsilon 2$. We show that $a
  \waybelow x_k$. Since $x_k \pm \frac \epsilon 2 \waybelow x_k$ and
  $x \below y$ there must exist $l \in \Nat$ such that $x_k \pm
  \frac \epsilon 2 \waybelow y_l$. We may assume that $l \geq n$. Using Lemma
  \ref{lemma:IQ-waybelow} this entails that $\lo x_k - \frac
  \epsilon 2 < \lo y_l \leq \hi y_l < \hi x_k + \frac \epsilon 2$ so
  that $\lo y_l - \hi x_k - \frac \epsilon 2 < 0$. Using this
  estimate, we obtain
  \[ \lo a < \lo y_n -  \epsilon \leq \lo y_n - (\hi x_k - \lo x_k +
  \frac \epsilon 2) \leq \lo x_k + \lo y_l - \hi x_k - \frac \epsilon 2 \leq
  \lo x_k. \]
  One analogously establishes that $\hi x_k < \hi a$ so that $a
  \waybelow x_k$ as desired.

  Now let $x$ be maximal. We show that $\forall k \in \Nat \sep \wex
  n \in \Nat \sep \ell(x_n) \leq 2^{-k}$. So assume that $k \in \Nat$
  and $\forall n \in \Nat\sep \neg \ell(x_n) \leq 2^{-k}$. We establish
  a contradiction. 

  Define $y_n = [\lo x_n + 2^{-n-1}, \hi x_n - 2^{-n-1}]$. Then $y =
  (y_n)_n \in \IR$ and $x \below y$. As $x$ is maximal, we have $y
  \below x$. Since $y_k \pm 2^{-k-1} \waybelow y_k$ and $y \below
  x$, there must exist $m \in \Nat$ such that $y_k \pm 2^{-k-1}
  \waybelow x_m$. We may assume that $m \geq k$. Then $\ell(y_k) +
  2^{-k} = \ell(y_k \pm 2^{-k-1}) < \ell(x_m)$. But then
  $\ell(x_k) = \ell(y_k) + 2^{-k} < \ell(x_m) \leq \ell(x_n)$, the desired
  contradiction. 
\end{proof}

\noindent
We characterise total reals as Cauchy sequences where we
reformulate the notion of Cauchyness to account for the lack of
information on convergence speed. 

\begin{defn}
A \emph{classical null sequence} is a decreasing sequence $(q_n)_n$
in $\Rat_{\geq 0} \cup \lbrace \infty \rbrace$ such that
$\forall \epsilon > 0 \sep \wex m
\in \Nat \sep \forall n \geq m \sep q_n \leq \epsilon$. We write
$(q_n)_n \fromabove 0$ if $(q_n)_n$ is a classical null-sequence.

A \emph{modulus of non-divergence} of a rational sequence $(q_n)_n$
is a classical null-sequence $(c_n)_n$ such that $\forall N \in \Nat
\sep \forall n, m \geq N \sep |q_n - q_m| \leq c_N$.

Finally, \emph{modulus of convergence} of a rational sequence $(q_n)_n$ is
a non-decreasing function $M: \Nat \to \Nat$ such that 
$\forall n, m \geq M(k) \sep |q_n - q_m| \leq 2^{-k}$.

A rational sequence $(q_n)_n$ is \emph{Cauchy} if it has a modulus
of convergence, and \emph{Markov} if it has a modulus of
non-divergence. A \emph{Cauchy real} (\emph{Markov real}) is a
rational sequence together with a modulus of convergence (modulus of
non-divergence).
\end{defn}

\noindent
We allow classical null sequences to begin with $\infty$ to account
for partially defined approximations of functions later. The
following is simple observation, but requires a rather technical
proof to allow us to convert between moduli of (dis-) continuity.

\begin{lemma} \label{lemma:cauchy-markov}
Every Cauchy sequence is Markov.
\end{lemma}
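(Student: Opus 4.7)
The plan is to extract the modulus of non-divergence directly from the modulus of convergence $M$ via a bounded search. For each $N \in \Nat$, I would define
\[ k_N = \max \lbrace k \leq N \mid M(k) \leq N \rbrace \]
whenever this set is non-empty (which is decidable by inspecting $M(0), \dots, M(N)$), and set $c_N = 2^{-k_N}$ in that case and $c_N = \infty$ otherwise. Admitting $\infty$ for the initial segment is precisely the reason the paper allows $\infty$-values in classical null sequences.

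The verification then splits into three steps. First, monotonicity of $(c_n)_n$: if $k_N$ is defined, then $k_N \leq N \leq N+1$ and $M(k_N) \leq N \leq N+1$, so $k_N$ is a candidate for $k_{N+1}$, yielding $k_{N+1} \geq k_N$ and hence $c_{N+1} \leq c_N$; if $c_N = \infty$ the inequality is trivial. Second, the bound property $|q_n - q_m| \leq c_N$ for all $n, m \geq N$: if $c_N = \infty$ there is nothing to show, and if $c_N = 2^{-k_N}$ then $M(k_N) \leq N$ forces $n, m \geq M(k_N)$, so $|q_n - q_m| \leq 2^{-k_N} = c_N$ by the defining property of $M$. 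Third, the null-sequence property: given $\epsilon \in \Rat_{>0}$, I would pick $k$ with $2^{-k} \leq \epsilon$, set $m = \max \lbrace k, M(k) \rbrace$, and observe that for $n \geq m$ the integer $k$ belongs to the defining set of $k_n$, so $k_n \geq k$ and $c_n = 2^{-k_n} \leq 2^{-k} \leq \epsilon$. In fact this yields strong existence of $m$, which is even stronger than the weak existence $\wex$ appearing in the definition.

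The main subtlety is the choice of the search bound $k \leq N$: without it, locating the largest $k$ with $M(k) \leq N$ could fail to terminate when $M$ happens to be eventually constant, since one cannot computably distinguish this situation from an unbounded $M$. Capping the search at $N$ turns the problem into a decidable finite search at every stage, while the third step still guarantees that the resulting sequence decreases to zero. Beyond this, the remaining steps are routine unpackings of the definitions of modulus of convergence and of classical null sequence, and no additional hypotheses (such as Markov's principle) are required.
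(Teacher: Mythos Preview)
Your proof is correct and follows the same core idea as the paper: construct, for each $N$, an index $k$ with $M(k) \leq N$ and use $2^{-k}$ as the bound $c_N$. The implementations differ in two places. First, the paper builds this ``partial inverse'' of $M$ recursively via an auxiliary function $W$ that increments by at most one at each step (Lemma~\ref{lemma:modulus}), whereas you compute it directly as the bounded maximum $k_N = \max\lbrace k \leq N \mid M(k) \leq N\rbrace$; your version is shorter and avoids the inductive argument showing that $W$ is unbounded. Second, for the initial segment $N < M(0)$ the paper produces an explicit finite bound $1 + \max\lbrace |q_k - q_l| : N \leq k, l \leq M(0)\rbrace$, while you simply set $c_N = \infty$, exploiting the fact that the definition of classical null sequence permits this. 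Both choices are legitimate; yours is more economical, while the paper's yields a rational-valued modulus throughout.
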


\noindent
The proof of this lemma needs the following auxiliary statement that
helps to convert the modulus into a (classical) null sequence.

\begin{lemma} \label{lemma:modulus}
Let $M: \Nat \to \Nat$. Then there exists a non-decreasing function $W: \Nat \to
\Nat$ such that $W(M(0)) = 0$, $(2^{-W(n)})_n$ is a classical null sequence and $M
\circ W(n) \leq n$ for all $n \geq M(0)$. 
\end{lemma}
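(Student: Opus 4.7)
The plan is to define $W$ recursively as a ``greedy'' pseudo-inverse of $M$. Set $W(n) = 0$ for all $n \leq M(0)$, and for $n > M(0)$ put
\[ W(n) = \begin{cases} W(n-1) + 1 & \text{if } M(W(n-1)+1) \leq n, \\ W(n-1) & \text{otherwise.} \end{cases} \]
This is a well-defined primitive recursive definition once $M$ is given, since $\leq$ on $\Nat$ is decidable. By construction $W(n) \in \{W(n-1), W(n-1)+1\}$, so $W$ is non-decreasing, and $W(M(0)) = 0$ directly from the base clause.

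Next I would verify $M(W(n)) \leq n$ for $n \geq M(0)$ by induction on $n$. The base case $n = M(0)$ gives $M(W(M(0))) = M(0) \leq M(0)$. For the step, if $W(n) = W(n-1)+1$ then the recursion was triggered only when $M(W(n-1)+1) \leq n$, which is exactly the required bound; otherwise $W(n) = W(n-1)$ and $M(W(n)) \leq n-1 \leq n$ by the induction hypothesis.

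The remaining obligation is that $(2^{-W(n)})_n$ is a classical null sequence. It is clearly decreasing since $W$ is non-decreasing. The key point is to show that $W$ is unbounded, which is where the hypothesis that $M$ takes values in $\Nat$ (rather than $\Nat \cup \{\infty\}$) is used. Define $m_0 = M(0)$ and $m_{k+1} = \max(m_k + 1, M(k+1))$, and prove by induction that $W(m_k) \geq k$. Indeed, $W(m_0) = 0$, and at step $n = m_{k+1}$ monotonicity gives $W(n-1) \geq W(m_k) \geq k$; if $W(n-1) > k$ we are done, otherwise $W(n-1) = k$ and since $n \geq M(k+1) = M(W(n-1)+1)$ the recursion forces $W(n) = k+1$. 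Given $\epsilon > 0$, choose $k$ with $2^{-k} \leq \epsilon$; then for all $n \geq m_k$ monotonicity yields $W(n) \geq k$, hence $2^{-W(n)} \leq \epsilon$.

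There is no real obstacle here: the construction is quite elementary, and the only mildly delicate point is noticing that the sequence $m_k$ must be defined by taking the maximum with $m_k + 1$ to preserve strict progress through the indices while still forcing the greedy recursion to increment $W$ at least once by the time $n$ reaches $M(k+1)$. Everything else, including the constructive content of the unboundedness argument, is straightforward.
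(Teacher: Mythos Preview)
Your proposal is correct and follows essentially the same approach as the paper: the recursive definition of $W$ is identical, and the verifications of monotonicity, $W(M(0))=0$, and $M\circ W(n)\leq n$ are the same induction. The only minor difference is in the unboundedness argument: the paper first shows $W$ is \emph{progressive} (for each $n\geq M(0)$ there exists $m>n$ with $W(m)>W(n)$, taking $m=n+\max\{1,M(W(n)+1)-M(W(n))\}$) and then deduces unboundedness, whereas you directly build an explicit witness sequence $(m_k)$ with $W(m_k)\geq k$; both are straightforward and yield the same conclusion.
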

\begin{proof}
We put $W(n) = 0$ for $n \leq M(0)$ and use induction for $n >
M(0)$, i.e.
\[
  W(n) = \begin{cases}
    W(n-1) & \mbox{ if } M(W(n-1) + 1) > n \\
    W(n-1) + 1 & \mbox{ if } M(W(n-1) + 1) \leq n
  \end{cases}
\]
for all $n > M(0)$.  It is clear that $W(M(0)) = 0$. 
We first claim that $W$ is non-decreasing, but this is evident from
the definition of $W$. Next, we show by induction on  $n$ that $M
\circ W(n) \leq n$ for all $n \geq M(0)$. For $n = M(0)$ this
follows directly from the definition, as then $M \circ W(n) = M(0)
\leq M(0) = n$. Now let $n > M(0)$. We distinguish two cases.
If $M(W(n-1) + 1) > n$, then $M \circ W(n) = M \circ W(n-1) \leq n-1
\leq n$. If on the other hand, $M(W(n-1) + 1) \leq n$, we obtain $M
\circ W(n) = M(W(n-1) + 1) \leq n$ by assumption.

We now show that $W$ is progressive, that is, $\forall n \geq
M(0)\sep \exists m \geq n+1\sep W(m) > W(n)$. Fix $n \geq M(0)$ and
put $m = n+k$ where $k = \max \lbrace 1, M(W(n) + 1) - M(W(n))
\rbrace$. As $W$ is monotone, it suffices to show that $W(n) =
W(n+1) = \dots = W(n+k)$ is contradictory as the latter chain of
equivalences is decidable. 

So assume that  $W(n) =
W(n+1) = \dots = W(n+k)$. In particular, as $W(n+k) = W(n+k-1)$, we
obtain $M(W(n+k-1) + 1) > n+k$. Since $W(n+k-1) = W(n)$ this gives
$M(W(n) + 1) > n+k$. Substituting the definition of $k$, this gives
$M(W(n) + 1) > n + M(W(n) + 1) - M(W(n))$, i.e. $M \circ W(n) > n$,
contradicting our earlier statement. 

We now show that $W(n)$ is not bounded, i.e. $\forall k \in \Nat
\sep \exists n \in \Nat \sep W(n) \geq k$. This follows by induction
on $k$, and for $k = 0$ we put $n = 0$. For the inductive step,
assume that there exists $n' \in \Nat$ such that $W(n') \geq k-1$.
By the previous claim, there exists $n \geq n'+1$ such that $W(n) >
W(n') \geq k-1$ so that $W(n) \geq k$.

This now shows that $(2^{-W(n)})_n$ is a classical null-sequence, for
any $\epsilon \geq 0$ there exists $k \geq \frac{1}{\epsilon} + 1$ and in
turn $n \in \Nat$ such that $W(n) \geq k$ whence $2^{-W(n)} \leq
2^{-k} \leq \frac 1k \leq \epsilon$ as $k \geq 1$. 
\end{proof}

\noindent
We now give the proof of Lemma \ref{lemma:cauchy-markov}.

\begin{proof}
Let $(q_n)_n$ be a Cauchy sequence with modulus $M$, i.e. $\forall k
\in \Nat \sep \forall n, m \geq M(k) , \hspace{1.5mm} |q_n - q_m| \leq 2^{-k}$.
By Lemma \ref{lemma:modulus} there exists a non-decreasing function $W: \Nat \to \Nat$ such
that $M \circ W(n) \leq n$ for all $n \geq M(0)$ and $(2^{-W(n)})_n$
is a classical null-sequence.

We define a sequence $(c_n)_n$ of non-negative rationals by
$c_n = 1 + \max \lbrace |q_k - q_l| \mid n \leq k, l \leq M(0)
\rbrace$ if $n < M(0)$ and $c_n = 2^{-W(n)}$ if $n \geq M(0)$. 

We now show that if $m, n \geq N$, then $|q_n - q_m| \leq c_N$.
To see this, we distinguish the cases $N < M(0)$ and $N \geq M(0)$.
First suppose that $N < M(0)$. If both $n, m \leq M(0)$ this follows
by definition of $c_N$. If both $n, m \geq M(0)$ then $|q_n - q_m|
\leq 2^0 = 1 \leq c_N$. Now consider without loss of generality that
$n < M(0)$ and $m \geq M(0)$. Then $|q_n - q_m| \leq |q_n -
q_{M(0)}| + |q_{M(0)} - q_m| \leq \max \lbrace |q_k - q_l | : N \leq
k, l \leq M(0) \rbrace + 1 = c_N$.

Now suppose that $N \geq M(0)$ and let $m, n \geq N$. Then $M \circ
W(N) \leq N$ so that $m, n \geq M(W(n))$ whence $|q_n - q_m| \leq
2^{-W(N)}= c_N$. It remains to show that $(c_n)_n$ is a classical
null-sequence which is however immediate from the fact that that
$2^{-W(n)}$ is a classical null sequence.
\end{proof}

\noindent
In fact, it turns out that the equivalence of Cauchy and Markov
reals characterises Markov's principle itself.

\begin{lemma}
  If Markov's Principle holds, every Markov sequence is Cauchy.
\end{lemma}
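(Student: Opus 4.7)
The plan is to apply Markov's Principle to a decidable predicate extracted from the classical null sequence condition on the modulus of non-divergence, in order to convert weak existence of suitable thresholds into strong existence and thereby construct a modulus of convergence.

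First I would fix a Markov sequence $(q_n)_n$ with modulus of non-divergence $(c_n)_n$. For each $k \in \Nat$, consider the predicate $P_k(m) := c_m \leq 2^{-k}$. This predicate is decidable: comparison of $2^{-k}$ with elements of $\Rat_{\geq 0} \cup \lbrace \infty \rbrace$ reduces to decidable order on $\Rat$, with $\infty$ being decidably $> 2^{-k}$. The definition of a classical null sequence applied to $\epsilon = 2^{-k}$ gives $\wex m \in \Nat \sep \forall n \geq m \sep c_n \leq 2^{-k}$, and specialising $n = m$ yields $\wex m \sep P_k(m)$. Markov's Principle now upgrades this to $\exists m \sep P_k(m)$, so we may pick a witness $m_k \in \Nat$ with $c_{m_k} \leq 2^{-k}$.

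Next I would define $M : \Nat \to \Nat$ recursively by $M(0) = m_0$ and $M(k+1) = \max \lbrace M(k), m_{k+1} \rbrace$. By construction $M$ is non-decreasing, and since $(c_n)_n$ is decreasing and $M(k) \geq m_k$, we retain $c_{M(k)} \leq c_{m_k} \leq 2^{-k}$. To verify the modulus property, fix $k \in \Nat$ and $n, n' \geq M(k)$; instantiating the modulus of non-divergence condition at $N = M(k)$ gives $|q_n - q_{n'}| \leq c_{M(k)} \leq 2^{-k}$, so $M$ is indeed a modulus of convergence of $(q_n)_n$.

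The main obstacle is purely organisational rather than mathematical: ensuring that the predicate fed to Markov's Principle is genuinely decidable (which requires noting that $c_m$ may take the value $\infty$ but remains decidably comparable with $2^{-k}$), and observing that the classical null sequence property can be collapsed to a single-index condition $c_m \leq \epsilon$ thanks to the monotonicity of $(c_n)_n$. Once these two observations are in place, the conversion between the two notions of modulus is mechanical.
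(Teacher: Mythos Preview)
Your proposal is correct and follows essentially the same route as the paper: derive $\wex n\sep c_n \leq 2^{-k}$ from the classical null-sequence condition, apply Markov's Principle to the decidable predicate $c_n \leq 2^{-k}$ to obtain strong existence, and then extract a modulus of convergence. The paper's proof is a three-line sketch that invokes number choice without further comment; you supply the details it omits, in particular the decidability check (handling the possible value $\infty$) and the recursive $\max$ construction to ensure $M$ is non-decreasing as required by the definition of a modulus of convergence.
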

\begin{proof}
  Let $(q_n)_n$ be a Markov sequence with (Markov-) moduls
  $(c_n)_n$. As $c_n$ is a classical null-sequence, we obtain that
  $\forall k \in \Nat \sep \wex n \in \Nat \sep c_n \leq 2^{-k}$. By
  Markov's principle, we may replace the weak existential quantifier
  by a strong one, and number choice yields the Cauchy modulus.
\end{proof}

\begin{lemma} If every Markov Sequence is Cauchy, then Markov's
principle holds.
\end{lemma}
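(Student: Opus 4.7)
The plan is to derive Markov's principle in its standard form --- for every decidable predicate $P$ on $\Nat$, $\wex n \sep P(n)$ implies $\exists n \sep P(n)$ --- by exhibiting, from such a $P$, a specific Markov sequence whose Cauchy modulus (supplied by the hypothesis) lets us compute the witness.

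Given a decidable $P$ with $\wex n \sep P(n)$, I would define $q_n = 1$ if there exists $k \leq n$ with $P(k)$ and $q_n = 0$ otherwise, which is well defined by decidability of $P$ and takes values in $\lbrace 0, 1 \rbrace$. As a candidate modulus of non-divergence I would take $c_N = 1$ if $\forall k \leq N \sep \neg P(k)$ and $c_N = 0$ otherwise. Then $(q_n)_n$ is non-decreasing, $(c_N)_N$ is non-increasing, and for $n, m \geq N$ the bound $|q_n - q_m| \leq c_N$ is immediate: when $c_N = 0$ some $k \leq N$ witnesses $P$, so $q_n = q_m = 1$; when $c_N = 1$ the bound holds trivially since $q_n, q_m \in \lbrace 0, 1 \rbrace$.

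Next I would verify that $(c_N)_N$ is a classical null sequence: given rational $\epsilon > 0$, taking $\epsilon < 1$ without loss of generality, this amounts to $\wex m \sep \forall n \geq m \sep c_n = 0$, i.e.\ $\wex m \sep \exists k \leq m \sep P(k)$, which follows directly from $\wex n \sep P(n)$ by unfolding the weak existential as $\neg\forall\neg$. By the lemma's hypothesis, $(q_n)_n$ then has a Cauchy modulus $M$. Setting $N_0 = M(1)$, the Cauchy condition forces $q_n = q_{N_0}$ for all $n \geq N_0$, since $q_n, q_{N_0} \in \lbrace 0, 1 \rbrace$ and $|q_n - q_{N_0}| \leq 1/2$. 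The value $q_{N_0}$ is decidable, so I split: if $q_{N_0} = 1$, a bounded search over $\lbrace 0, \dots, N_0 \rbrace$ produces a constructive $k$ with $P(k)$; if $q_{N_0} = 0$, then $q_n = 0$ for all $n \geq N_0$, which combined with the definition of $q_n$ gives $\forall n \sep \neg P(n)$, contradicting $\wex n \sep P(n)$ and so yielding $\exists n \sep P(n)$ from the resulting $\bot$.

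The main obstacle is the verification that $(c_N)_N$ is a classical null sequence, since this is the only place where the purely negative content of $\wex n \sep P(n)$ must be channelled into the notion of Markov sequence. Crucially, the definition of classical null sequence only requires weak existence in its ``$\forall \epsilon > 0 \sep \wex m$'' clause, which is precisely what makes this step go through without any extra principle; once that is unfolded, everything else --- monotonicity of $(q_n)$, the Markov bound, and the final case split on the decidable value $q_{M(1)}$ --- is routine.
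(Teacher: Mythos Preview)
Your proposal is correct and follows essentially the same construction as the paper: the same $\lbrace 0,1 \rbrace$-valued sequence $q_n$ and the same modulus $c_N = 1 - q_N$, with the same verification that $(c_N)_N$ is a classical null sequence from $\wex n\sep P(n)$. If anything, your write-up is more complete, since the paper's proof stops after establishing that $(q_n)_n$ is Markov and leaves the extraction of the witness from the Cauchy modulus $M(1)$ implicit, whereas you spell out that final case split explicitly.
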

\begin{proof}
  Suppose that every Markov sequence is Cauchy, and let $P(n)$ be a
  decidable predicate on natural numbers. Suppose that $\wex n\sep
  P(n)$. We show that $\exists n\sep P(n)$. Define a sequence
  $(q_n)_n$ by $q_n = 0$ if $\forall k \leq n \sep \neg P(n)$, and
  $q_n = 1$, otherwise. We claim that $(q_n)_n$ is a Markov
  sequence. Put $c_n = 1-q_n$. Then clearly $|q_n - q_m| \leq c_N$
  for $n, m \geq N$ and $N \in \Nat$. If $c_N = 1$ then clearly $|q_n - q_m| \leq 1$. If,
  on the other hand, $c_N = 0$ we have $q_N = 1$ and therefore $q_n
  = q_m = 1$ as $n, m \geq N$. It remains to see that $(c_n)
  \fromabove 0$, i.e. $(c_n)_n$ is a classical null-sequence. It is
  clear that $c_n$ is decreasing. Now let $\epsilon > 0$. We show
  that $\wex n\sep c_n \leq \epsilon$ by showing that $\wex n\sep
  c_n = 0$. But we have $\wex n\sep P(n)$ hence $\wex n \sep q_n =
  1$ whence $\wex n \sep c_n = 0$.
\end{proof}

\noindent
We are now in a position to relate Markov reals to the total reals,
i.e. the maximal elements in the interval domain.  

\begin{lemma}
  Every Markov real is a total real, and vice versa. More precisely,
  the following hold:
  \begin{enumerate}
  \item 
  Every total real $x = (x_n)_n$ defines a Markov sequence $m(x) = (\frac
  12(\hi x_n + \lo x_n))$ with Markov modulus $c_n = \hi x_n - \lo
  x_n$. 
  \item 
  Every Markov-sequence $q = (q_n)_n$ with modulus
  $(c_n)_n$ defines a total real $t(q) = (x_n)_n = (\sqsup_{0 \leq i \leq n} q_i
  \pm c_i)_n$. 
  \item 
  If $x$ is a total real, then $x = t(m(x))$, that is, the above
  constructions perserve equality. 
  \end{enumerate}
\end{lemma}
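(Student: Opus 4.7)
The plan is to verify each of the three claims in turn, using only the definitions of totality, the Markov property, and the order relations on $\IQ$ and $\hat{\IQ}$, together with the $\waybelow$-characterisation from Lemma \ref{lemma:IQ-waybelow} and the transitivity facts of Lemma \ref{lemma:way-below-below}.

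For (1), I would first observe that $(c_n)_n = (\hi x_n - \lo x_n)_n$ is a decreasing sequence of non-negative rationals, since $x_n \below x_{n+1}$ unfolds to $\lo x_n \leq \lo x_{n+1} \leq \hi x_{n+1} \leq \hi x_n$. Totality of $x$ supplies, for every $k$, weak existence of $n$ with $c_n \leq 2^{-k}$, and monotonicity upgrades this to $\wex n\sep \forall m \geq n\sep c_m \leq 2^{-k}$, so $(c_n)_n \fromabove 0$. For the Markov inequality, I would note that for $n, m \geq N$ the midpoints $q_n, q_m$ both lie in $[\lo x_N, \hi x_N]$, because $x_n, x_m \below x_N$ in $\IQ$ unfolds to $[\lo x_n, \hi x_n], [\lo x_m, \hi x_m] \subseteq [\lo x_N, \hi x_N]$, whence $|q_n - q_m| \leq c_N$. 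For (2), the critical step is consistency of $\lbrace q_i \pm c_i \mid 0 \leq i \leq n \rbrace$ in $\IQ$, so that $x_n = \sqsup_{i \leq n} q_i \pm c_i$ exists by bounded completeness of $\IQ$. Consistency reduces to the inequality $q_i - q_j \leq c_i + c_j$ for all $i, j \leq n$, which follows from the Markov bound $|q_i - q_j| \leq c_{\min(i,j)} \leq c_i + c_j$ (using that $(c_n)_n$ is decreasing). The chain property $x_n \below x_{n+1}$ is then immediate as the supremum ranges over a growing set, and totality follows from the bound $\ell(x_n) \leq \ell(q_n \pm c_n) = 2 c_n$ together with $(c_n)_n \fromabove 0$.

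For (3), I would establish the equality in $\hat{\IQ}$ by proving both $\below$-directions. The easy direction $t(m(x)) \below x$ rests on the pointwise inclusion $q_i \pm c_i \below x_i$ in $\IQ$ (the wider interval visibly contains $x_i$), which combined with $x_i \below x_n$ for $i \leq n$ gives $t(m(x))_n = \sqsup_{i \leq n} q_i \pm c_i \below x_n$; then for any $b$ with $b \waybelow t(m(x))_n$ we have $b \waybelow t(m(x))_n \below x_n$, so $b \waybelow x_n$ by Lemma \ref{lemma:way-below-below} and the witness $m = n$ suffices. For the harder direction $x \below t(m(x))$, suppose $b \waybelow x_n$; by Lemma \ref{lemma:IQ-waybelow} there is some $\delta > 0$ with $\lo b + \delta \leq \lo x_n$ and $\hi x_n + \delta \leq \hi b$. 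Totality of $x$ weakly provides $m \geq n$ with $c_m < \delta$, and since $q_m \in [\lo x_m, \hi x_m] \subseteq [\lo x_n, \hi x_n]$, I obtain $\lo b < q_m - c_m$ and $q_m + c_m < \hi b$, hence $b \waybelow q_m \pm c_m$ by Lemma \ref{lemma:IQ-waybelow}. Since $q_m \pm c_m \below t(m(x))_m$, Lemma \ref{lemma:way-below-below} yields $b \waybelow t(m(x))_m$, which is precisely what the order on $\hat{\IQ}$ demands.

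The main obstacle I anticipate is administrative rather than conceptual: the index $m$ produced by totality in part (3) is only weakly given, but this is absorbed by the fact that the $\waybelow$-conclusion on $\hat{\IQ}$ is itself phrased via weak existence, so no appeal to Markov's principle is required. Minor additional care is needed when translating between the $2^{-k}$-form of totality and the arbitrary-$\epsilon$ form of a classical null sequence, but this is routine by choosing $k$ with $2^{-k} \leq \epsilon$.
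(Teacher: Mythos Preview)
Your proof is correct and follows essentially the same approach as the paper, which is considerably terser: the paper carries out the same midpoint estimate for (1), establishes consistency in (2) by exhibiting $q_n$ as a common upper bound (written ``$c_n$'' in the paper, an apparent typo) rather than via your pairwise-overlap inequality, and simply declares (3) ``evident by construction'' where you supply the full $\below$-argument in both directions. The only substantive additions on your side are the explicit verification that $t(q)$ is total via $\ell(x_n) \leq 2c_n$ and the detailed proof of (3), both of which the paper omits.
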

\begin{proof}
  First suppose that $x = (x_n)_n = ([\lo x_n, \hi x_n])_n$ is a
  total real and let $q_n = \frac 12 (\hi x_n + \lo x_n)$. By
  definition, $c_n = \hi x_n - \lo x_n$ is a classical
  null-sequence, so that we just need to establish that $|q_n - q_m|
  \leq c_N$ whenever $n, m \geq N$. Let $k = \min \lbrace n, m
  \rbrace$ and $l = \max \lbrace n, m \rbrace$. Then $k, l \geq N$ and we have that 
  $q_n - q_m = \frac12 (\hi x_n + \lo x_n - \hi x_m - \lo x_m) \leq
  \frac 12 (\hi x_k + \lo x_l - \hi x_l - \lo x_k) = \frac 12
  (c_k + c_l) \leq c_N$ by monotonicity of $x$.  One analogously
  establishes that $-c_N \leq q_n - q_m$ whence $|q_n - q_m| \leq
  c_N$.

  For the converse, we just need to establish that $\lbrace q_i \pm
  c_i \mid 0 \leq i \leq n \rbrace$ is consistent, given a Markov
  sequence $(q_n)_n$ with modulus $(c_n)_n$ which is evident as for
  example $c_n \below q_i \pm c_i$ for all $0 \leq i \leq n$. 
  The last claim is evident by construction.
\end{proof}

\noindent
If we were to define the natural equality relation on Markov reals,
we would also be able to establish that the above constructions also
preserve equality of Markov reals, but this is not needed for what
follows.

\section{Example: Computation of Square Roots} \label{sec:sqrt}

Having established the interval domain as a way of reperesenting real
numbers, we now exemplify the claim that we made in the introduction by means
of an example: computation with constructive domain theoretic reals
gives actual-case error bounds that are much tighter than the worst
case error bounds of Cauchy reals.

Throughout the section, we fix a positive rational number $q > 0$
and
demonstrate how to compute a total real $s = ([\lo s_n, \hi s_n])_n$  such that
$\lo s_n^2 \leq q \leq \hi s_n^2$ for all $n \in \Nat$, i.e.
$s$ represents the square root of $q$.

Our definition is based on Newton iteration, specifically
\[ \hi s_{-1} = 1 \qquad \hi s_n = \frac 12 \left( \hi s_{n-1} +
\frac{q}{\hi s_{n-1}} \right)
\qquad \lo s_n = q / \hi s_n \]
for all $n \in \Nat$. We show that $s = ([\lo s_n, \hi s_n])_n$ is
indeed a total real, and represents the square root of $q$. This is
a sequence of lemmas involving standard estimates, the proofs of
which we elide. In particular:

\begin{lemma}\label{lemma:sqrt-props}
Let $s$ be given as above. 
  \begin{enumerate}
    \item Both $\lo s_n > 0$ and $\hi s_n > 0$ for all $n \in \Nat$.
    \item $(\hi s_n)_n$ is decreasing with $\hi s_n^2 \geq q$
    for all $n \in \Nat$.
    \item $(\lo s_n)_n$ is increasing with $\lo s_n^2 \leq q$
    for all $n \in \Nat$.
    \item $\lo s_n \leq \hi s_n$ for all $n \in \Nat$.
  \end{enumerate}
\end{lemma}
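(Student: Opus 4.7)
The proof consists of entirely elementary rational arithmetic and runs by induction on $n$, establishing the four items in the order they are stated and using each to derive the next. I would carry out the arguments as follows.

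For item (1), I proceed by induction. The base case $\hi s_{-1} = 1 > 0$ is trivial. If $\hi s_{n-1} > 0$, then $\hi s_n = \tfrac12(\hi s_{n-1} + q/\hi s_{n-1})$ is a sum of two positive rationals divided by $2$, hence positive; and $\lo s_n = q/\hi s_n > 0$ because $q > 0$ by assumption and $\hi s_n > 0$ as just shown. For item (2), the crucial observation is the algebraic identity
\[ \hi s_n^2 - q \;=\; \tfrac14\bigl(\hi s_{n-1} - q/\hi s_{n-1}\bigr)^2, \]
which is immediate by expanding the square and using $\hi s_{n-1} \neq 0$. Since the right-hand side is a square of a rational, it is non-negative, giving $\hi s_n^2 \geq q$ for every $n \in \Nat$. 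Monotonicity of $(\hi s_n)_n$ then reduces, for $n \geq 1$, to the inequality $q/\hi s_{n-1} \leq \hi s_{n-1}$, which (using positivity from~(1)) is equivalent to $\hi s_{n-1}^2 \geq q$, and this was just established.

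For item (3), I divide $\hi s_n^2 \geq q > 0$ through by the positive quantity $\hi s_n^2$ to get $q^2/\hi s_n^2 \leq q$, i.e.\ $\lo s_n^2 \leq q$ by definition of $\lo s_n$. That $(\lo s_n)_n$ is increasing follows from the fact that $(\hi s_n)_n$ is decreasing and strictly positive, so $\lo s_n = q/\hi s_n$ is increasing. Finally for item (4), using positivity again, $\lo s_n \leq \hi s_n$ is equivalent to $q \leq \hi s_n^2$, which is item~(2).

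There is no genuine obstacle; the only point requiring a little care is that the initial value $\hi s_{-1} = 1$ is not required to satisfy $\hi s_{-1}^2 \geq q$ (which fails for $q > 1$), but the monotonicity claim is only asserted for indices in $\Nat$, i.e.\ for $n \geq 1$, at which point the inductively established bound $\hi s_{n-1}^2 \geq q$ suffices. All comparisons take place in $\Rat$, where order is decidable, so no constructive subtleties arise and no appeal to the preceding machinery of the paper is needed.
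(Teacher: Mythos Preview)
Your argument is correct in every detail, including the care you take with the initial index (the bound $\hi s_{n-1}^2 \geq q$ is only needed for $n-1 \geq 0$, so the fact that $\hi s_{-1}^2 = 1$ may be below $q$ is harmless). The paper in fact elides the proof entirely, remarking only that these are ``standard estimates'', so there is nothing to compare against; your write-up would serve perfectly well as the omitted details.
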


\noindent
We now need to demonstrate that $s$ is indeed total, i.e. the
distance $\hi s_n - \lo s_n$ can be made arbitrarily small. 
This is an immediate consequence of the fact that $(\lo s_n)$ is
increasing that we use in the following lemma.
\begin{lemma}
We have that $\hi s_n  - \lo s_n  \leq \frac12 (\hi s_{n-1}  -
\lo s_{n-1}) $ for all $n > 0$.
\end{lemma}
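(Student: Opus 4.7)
The plan is to exploit the shape of the Newton recurrence directly, rewriting it in terms of the previous endpoints. Since $\lo s_{n-1} = q/\hi s_{n-1}$ by definition, the recurrence
\[ \hi s_n = \tfrac{1}{2}\!\left(\hi s_{n-1} + \tfrac{q}{\hi s_{n-1}}\right) \]
can immediately be restated as $\hi s_n = \tfrac12(\hi s_{n-1} + \lo s_{n-1})$, so $\hi s_n$ is just the arithmetic mean of the previous upper and lower endpoints.

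From this reformulation, a direct calculation gives
\[ \hi s_n - \lo s_{n-1} = \tfrac{1}{2}(\hi s_{n-1} + \lo s_{n-1}) - \lo s_{n-1} = \tfrac{1}{2}(\hi s_{n-1} - \lo s_{n-1}). \]
To convert this into the desired bound on $\hi s_n - \lo s_n$, I would invoke Lemma \ref{lemma:sqrt-props}(3), which tells us that $(\lo s_n)_n$ is increasing, hence $\lo s_{n-1} \leq \lo s_n$. Combining this monotonicity with the previous identity yields
\[ \hi s_n - \lo s_n \;\leq\; \hi s_n - \lo s_{n-1} \;=\; \tfrac{1}{2}(\hi s_{n-1} - \lo s_{n-1}), \]
which is exactly the claim.

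There is essentially no obstacle here: the only substantive step is recognising that substituting $\lo s_{n-1} = q/\hi s_{n-1}$ into Newton's formula turns the update into a straight arithmetic mean, after which the bound follows from one line of algebra plus the monotonicity of $(\lo s_n)_n$ already established in the preceding lemma. Positivity of $\hi s_{n-1}$ (Lemma \ref{lemma:sqrt-props}(1)) is implicitly used to guarantee that the division defining $\lo s_{n-1}$ is well-formed.
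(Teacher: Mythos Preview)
Your proof is correct and is essentially the same argument as the paper's: both use that $\lo s_{n-1}=q/\hi s_{n-1}$ turns the Newton update into the arithmetic mean $\hi s_n=\tfrac12(\hi s_{n-1}+\lo s_{n-1})$, and then combine this with the monotonicity of $(\lo s_n)_n$. Your write-up is in fact slightly cleaner, since the paper's second displayed inequality is really an equality once the substitution $q/\hi s_{n-1}=\lo s_{n-1}$ is made.
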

\begin{proof}
Immediate, as 
\[ \hi s_n  - \lo s_n  \leq \frac12 \left( \hi s_{n-1}  +
\frac{q}{\hi s_{n-1}} \right) - \lo s_{n-1}  \leq \frac12 \left( \hi s_{n-1}  -
\lo s_{n-1}  \right) \]
using that $\lo s_n$ is increasing.
\end{proof}

\noindent
The following is now immediate:
\begin{cor}
  With $\lo s_n$ and $\hi s_n$ defined as above, $s = (\lo s_n,
  \hi s_n)_n$ is a total real with $s^2 = q$.
\end{cor}

\noindent
We have the following comparision to Cauchy reals.

\begin{comparison} 
If $q = 2$, in particular we have that $|\hi s_n - \hi s_m| \leq
\frac{1}{6 \cdot 2^n}$ for all $m \geq n$. In other words, computing
the $n$-th iterate of the square root of two, the attained precision
is $\frac{1}{6 \cdot 2^n}$. This is precisely the same modulus of
convergence that was obtained for the very same example in \cite{Schwichtenberg:2016:CAW}. 
The following table compares this to the interval width
obtained from a domain theoretic approach where we have used a
simple (hand extracted) Haskell program to obtain the data reported.

\medskip
\begin{tabular}{|c|c|c|} \hline
Iterations & Interval Width & Modulus Precision \\ \hline \hline
1 & $4.9 \times 10^{-3}$ & $8.3 \times 10^{-2}$ \\ \hline
2 & $4.2 \times 10^{-6}$ & $4.2 \times 10^{-2}$ \\ \hline
3 & $3.2 \times 10^{-12}$ & $2.1 \times 10^{-2}$ \\ \hline
4 & $1.8 \times 10^{-24}$ & $1.0 \times 10^{-2}$ \\ \hline
5 & $5.7 \times 10^{-49}$ & $5.2 \times 10^{-3}$ \\ \hline
\end{tabular}

\medskip\noindent
While we cannot draw any valid conclusions from this very small
experiment, we nonetheless note that the difference in precision is
staggering, and warrant further investigation.
\end{comparison}

\section{Scott Topology vs Euclidean Topology}
\label{sec:scott-euclidean}

In Section \ref{sec:interval-domain}, we have investigated total reals
\emph{individually} by relating them to Markov and  Cauchy reals. We
continue our investigation of the real line induced by the total
elements of the interval domain by also considering the topology on
the real line, and linking that with the Scott topology of the
interval domain. 

As with convergence speed, we adopt a classical definition of
properties, in particular that of openness. 

\begin{defn}[Basic Notions]
  A subset $O \subseteq \Real$ is \emph{open}, if for all $x \in O$
  there must exist $k \in \Nat$ such that $\forall y \in \Real \sep
  |x - y| \leq 2^{-k} \to y \in O$.
\end{defn}

\noindent
Our first result shows that the topology defined above is the
subspace topology on the set of total reals, induced by the Scott
topology. That is, every open set arises as the intersection of a
Scott open set with the total reals, and every such intersection is
itself open.

\begin{lemma}
  Let $\SO \subseteq \IR$ be Scott-open. Then $\SO \cap \Real$ is
  open.
\end{lemma}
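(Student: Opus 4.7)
The plan is to exploit the characterisation of Scott-open sets from Section \ref{sec:topology}: every Scott-open neighbourhood of $x$ contains a basic open $b\upup$ for some $b \in \IQ$ with $b \waybelow x$. The idea is then to convert such a basic open around a \emph{total} $x$ into a metric neighbourhood by using totality (to force the approximating intervals to be short) together with Lemma \ref{lemma:IQ-waybelow}, which makes $b \waybelow x_N$ concrete as strict rational inequalities that leave room for a metric tolerance.

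Concretely, given $x \in \SO \cap \Real$, first apply the lemma following Lemma \ref{lemma:scott-basis} to obtain $b = [p, q] \in \IQ$ with $b\upup \subseteq \SO$ and $b \waybelow x$. Since $b\upup$ is Scott-open (Lemma \ref{lemma:scott-basis}) and $x = \sqsup_n x_n$ (Lemma \ref{lemma:own-sup}), there must exist $N \in \Nat$ with $b \waybelow x_N$ in $\IR$, and hence in $\IQ$ by Lemma \ref{4.8}. Lemma \ref{lemma:IQ-waybelow} then yields $p < \lo x_N \leq \hi x_N < q$. Set $\delta = \tfrac14 \min \lbrace \lo x_N - p, q - \hi x_N \rbrace > 0$ and choose $k \in \Nat$ with $2^{-k} \leq \delta$. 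The claim is that this $k$ witnesses metric openness of $\SO \cap \Real$ at $x$.

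To verify the claim, let $y \in \Real$ with $|x - y| \leq 2^{-k}$. Unpacking the pointwise definitions of subtraction, absolute value and the order $\leq$ on $\IR$ from Definition \ref{defn:reals}, this hypothesis states that for every $l \in \Nat$ there must exist $n$ with $\hi |x_n - y_n| \leq 2^{-k} + 2^{-l}$, where $\hi |x_n - y_n| = \max \lbrace |\lo x_n - \hi y_n|,\, |\hi x_n - \lo y_n| \rbrace$. Taking $l$ with $2^{-l} \leq \delta$, and using that $(\hi|x_n - y_n|)_n$ is decreasing (as $|x - y| \in \IR$ is a chain of shrinking intervals), we may arrange $n \geq N$ with $\hi|x_n - y_n| \leq 2\delta$. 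Routine rational estimation, combining $|\lo x_n - \hi y_n|, |\hi x_n - \lo y_n| \leq 2\delta$ with $\lo x_N \leq \lo x_n \leq \hi x_n \leq \hi x_N$ (from $n \geq N$), yields $p + 2\delta \leq \lo y_n \leq \hi y_n \leq q - 2\delta$. Hence $b \waybelow y_n$ in $\IQ$ by Lemma \ref{lemma:IQ-waybelow}, in $\IR$ by Lemma \ref{4.8}, and finally $b \waybelow y$ via $y_n \below y$ and Lemma \ref{lemma:way-below-below}. Therefore $y \in b\upup \subseteq \SO$.

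The main obstacle is the middle step: carefully unpacking the layered definition of $|x-y| \leq 2^{-k}$ through absolute value, subtraction and the $0 \leq \cdot$ relation on $\IR$ into a single inequality on rational endpoints, while tracking the ``there must exist'' quantifiers so the conclusion retains the required weak-existence shape. The strict inequalities supplied by Lemma \ref{lemma:IQ-waybelow} are what make the factor-of-four choice of $\delta$ work: two copies of $\delta$ are absorbed into the metric error from $|x-y| \leq 2^{-k}$, and the other two keep the resulting endpoints of $y_n$ strictly separated from $p$ and $q$.
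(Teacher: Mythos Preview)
Your argument is correct and follows the same overall shape as the paper's: pick a basic Scott-open $b\upup$ with $x \in b\upup \subseteq \SO$, find an index $N$ with $b \waybelow x_N$, extract a rational tolerance from the strict inequalities of Lemma \ref{lemma:IQ-waybelow}, and then show that any total $y$ within that tolerance of $x$ still lies in $b\upup$.

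The difference is in the last step. The paper packages the conversion ``$|x-y| \leq \epsilon$ and $y$ total $\Rightarrow$ $x \pm \epsilon \below y$'' into Lemma \ref{lemma:abs-below} and simply chains $\alpha \waybelow x_n \pm \epsilon \below x \pm \epsilon \below y$. You instead unpack the definition of $|x - y| \leq 2^{-k}$ down to rational endpoint inequalities and verify $b \waybelow y_n$ directly, carrying the factor-of-four bookkeeping by hand. Both routes are sound; the paper's is shorter and reusable (the same lemma is invoked several more times in Section \ref{sec:scott-euclidean}), while yours is self-contained and, incidentally, never actually uses totality of $y$ in the metric step---only the hypothesis $|x-y| \leq 2^{-k}$ and monotonicity of the sequences are needed to land $b \waybelow y_n$.
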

\begin{proof}
  Let $x \in \SO \cap \Real$. As $x \in \SO$, there must exist
  $\alpha \in \IQ$ so that $\SO \ni \alpha \waybelow x$ by Lemma
  \ref{lemma:scott-basis}. 
  As $x = \sqsup_n x_n$ by Lemma
  \ref{lemma:own-sup}, there must exist $n \in \Nat$ such that $\alpha
  \below x_n$ and hence $\alpha \waybelow x_n$. By Lemma
  \ref{lemma:IQ-waybelow} there exists $\epsilon \in \Rat_{>0}$ such
  that $\alpha \waybelow x_n \pm \epsilon$. We now claim that all $y
  \in \SO$ whenever $y \in \Real$ with $|x-y| \leq \epsilon$. So
  pick $y \in \Real$ and assume that $|y - x| \leq \epsilon$. Then
  $\alpha \waybelow x_n \pm \epsilon \below x\pm \epsilon \below y$
  by Lemma \ref{lemma:abs-below}. Hence $\alpha \waybelow y$ so that
  $y \in \SO$.
\end{proof}

\begin{lemma} Let $O \subseteq \Real$ be open. Then
\[ \SO = \lbrace \alpha \in \IR \mid \wex [p, q] \in \IQ, [p, q]
\below \alpha \mbox{ and } [p, q] \subseteq O \rbrace \]
is Scott-open and satisfies $\SO \cap \Real = O$.
\end{lemma}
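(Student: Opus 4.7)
The proof splits into two parts: showing that $\SO$ is Scott-open, and that $\SO \cap \Real = O$. The upper-set condition for $\SO$ is immediate from transitivity of $\below$: if $\alpha \in \SO$ is witnessed by $[p, q] \below \alpha$, $[p, q] \subseteq O$, and $\alpha \below \beta$, then the same $[p, q]$ witnesses $\beta \in \SO$. For inaccessibility under suprema of chains, suppose $\alpha = \sqsup_n \alpha_n \in \SO$ with (weak) witness $[p, q]$. The plan is to enlarge $[p, q]$ to a rational interval $[\tilde p, \tilde q]$ with $\tilde p < p \leq q < \tilde q$ and $[\tilde p, \tilde q] \subseteq O$, using openness of $O$ at the endpoints $p, q$ (both in $O$ since they are reals lying in $[p, q] \subseteq O$). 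Then Lemma \ref{lemma:IQ-waybelow} gives $[\tilde p, \tilde q] \waybelow [p, q]$ in $\IQ$, Lemma \ref{4.8} lifts this to $\IR$, and combining with $[p, q] \below \alpha$ through Lemma \ref{lemma:way-below-below} yields $[\tilde p, \tilde q] \waybelow \alpha$. Definition \ref{defn:predomain-bases} applied to the chain $\alpha = \sqsup_n \alpha_n$ then produces (weakly) some $k$ with $[\tilde p, \tilde q] \below \alpha_k$, witnessing $\alpha_k \in \SO$.

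For the equality $\SO \cap \Real = O$: if $\alpha \in \SO \cap \Real$ with witness $[p, q]$, then $[p, q] \below \alpha$ forces $\alpha$ (viewed as a real) to lie in $[p, q]$; this follows by unfolding $\below$ in $\IR$ and using the rational interpolants $[p - \epsilon, q + \epsilon] \waybelow [p, q]$ (via Lemma \ref{lemma:IQ-waybelow}) to bound the endpoints of $\alpha$'s representing sequence, yielding $p \leq \alpha \leq q$. Since $[p, q] \subseteq O$, we get $\alpha \in O$. Conversely, given $x \in O \subseteq \Real$, openness at $x$ yields (weakly) $k$ with $|x - y| \leq 2^{-k} \to y \in O$ for total reals $y$, and totality of $x$ yields (weakly) $n$ with $\ell(x_n) \leq 2^{-k}$. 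Then $[p, q] = x_n$ witnesses $x \in \SO$: we have $x_n \below x$ canonically, and every total real $y$ with $\lo x_n \leq y \leq \hi x_n$ satisfies $|x - y| \leq \ell(x_n) \leq 2^{-k}$ (since $x$ itself lies in $x_n$), hence $y \in O$.

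The main obstacle is the enlargement step: producing $[\tilde p, \tilde q] \supsetneq [p, q]$ with $[\tilde p, \tilde q] \subseteq O$. Openness at $p$ and at $q$ gives local neighborhoods, and classically these combine with $[p, q] \subseteq O$ to yield a uniform $\delta > 0$ such that $[p - \delta, q + \delta] \subseteq O$; a fully constructive proof of this inclusion resembles a Heine--Borel cover argument on $[p, q]$. Since the target conclusion $\wex k \sep \alpha_k \in \SO$ is itself a weak existential, the proof can proceed under the $\neg\neg$-interpretation, making the classical enlargement admissible and consistent with the paper's ``constructive existence, classical correctness'' approach.
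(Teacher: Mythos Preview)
Your proof is correct and matches the paper's approach closely: the paper also hinges on enlarging $[p,q]$ to $[p,q] \pm \epsilon \subseteq O$, which it asserts without justification, and then uses $[p,q] \pm \epsilon \waybelow [p,q] \below \sqsup_n \alpha_n$ exactly as you do. Your Heine--Borel worry is overblown, though: since $p, q \in [p,q] \subseteq O$, openness at just these two rational endpoints together with the given inclusion $[p,q] \subseteq O$ already covers $[p-\delta, q+\delta]$ (classically), and as you correctly observe the weak-existential target absorbs the classical case split on where $y$ lies.
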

\begin{proof}
  We show that $\SO$ is Scott-open. First, it is immediate that
  $\SO$ is an upper set. To see that $\SO$ is inacessible by
  directed suprema, let $(\alpha_n)_n$ be an increasing sequence in
  $\IR$ with $\sqsup_n \alpha_n \in \SO$. This gives $[p, q] \in
  \IQ$ with $[p, q] \below \alpha$ and $[p, q] \subseteq O$. As $O$
  is open, there must exist $\epsilon > 0$ and $\epsilon \in \Rat$ with
  $[p, q] \pm \epsilon \subseteq O$. To see that there must exist $m
  \in \Nat$ with $\alpha_m \in \SO$, we show that there must exist
  $m \in \Nat$ with $[p, q] \pm \epsilon \below \alpha_m$. Because
  $[p, q] \pm \epsilon \waybelow [p, q] \below \sqsup_n \alpha_n$
  this $m$ must exist, i.e. $m$ must exist with $[p, q] \pm \epsilon
  \below \alpha_m$.

  It remains to be seen that $\SO \cap \Real = O$. Let $x \in \SO
  \cap \Real$. Because $x \in \SO$ we have $[p, q] \in \IQ$ with
  $[p, q] \below x$ and $[p, q] \subseteq \SO$. As $O$ is open, this
  entails that there must exist $\epsilon > 0$ and $\epsilon \in
  \Rat$ such that $[p, q] \pm \epsilon \subseteq O$. We show that $p
  - \epsilon \leq x \leq q + \epsilon$ which implies $x \in O$. This
    follows from $[p, q] \pm \epsilon \waybelow [p, q] \below x$. It
    is clear that $O \subseteq \SO \cap \Real$.
\end{proof}

\noindent
We state and prove the following technical lemma to help deal with
the Euclidean topology.
\begin{lemma} \label{lemma:interval-arith}
  Let $a, b \in \IQ$. Then $a \leq b$ iff $-b \leq -a$ and $|a| \leq
  b$ iff $-b \leq a \leq b$ and $0 \leq b$.
\end{lemma}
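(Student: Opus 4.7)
The plan is to reduce both statements to elementary inequalities on rational endpoints, using Definition~\ref{defn:reals}. For $a \in \IQ$ identified with the constant sequence $(a)_n \in \IR$, the condition $0 \leq a$ unfolds to $\lo a \geq 0$ (since for a fixed rational, $-2^{-k} \leq \lo a$ for all $k$ is equivalent to $\lo a \geq 0$). Using $b - a = b + (-a) = [\lo b - \hi a,\, \hi b - \lo a]$, we then get that $a \leq b$ is equivalent to the elementary condition $\hi a \leq \lo b$. This converts both assertions into pure inequalities on rationals that can be verified by case analysis.

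For the first equivalence $a \leq b$ iff $-b \leq -a$, the argument is a one-line unfolding: under the reformulation above, the left-hand side becomes $\hi a \leq \lo b$ and, using $-a = [-\hi a, -\lo a]$ and $-b = [-\hi b, -\lo b]$, the right-hand side becomes $\hi(-b) \leq \lo(-a)$, i.e., $-\lo b \leq -\hi a$, which is the same inequality.

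For the second equivalence, I would first establish the auxiliary fact $-|a| \leq a \leq |a|$ and $0 \leq |a|$, splitting on the (decidable) cases $0 \leq \lo a$, $\lo a \leq 0 \leq \hi a$, and $\hi a \leq 0$ from the definition of $|a|$. The ($\Rightarrow$) direction of the equivalence then follows by chaining: $a \leq |a| \leq b$ yields $a \leq b$; applying the first equivalence of the lemma to $|a| \leq b$ yields $-b \leq -|a| \leq a$; and $0 \leq |a| \leq b$ yields $0 \leq b$.

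The ($\Leftarrow$) direction is handled by the same three-way case split. In the case $0 \leq \lo a$ we have $|a| = a$, so $|a| \leq b$ is immediate from $a \leq b$. In the case $\hi a \leq 0$ we have $|a| = -a$; from $-b \leq a$ (unfolding to $-\lo b \leq \lo a$) and the first equivalence we obtain $-a \leq b$. The middle case $\lo a \leq 0 \leq \hi a$ is the main combinatorial step, where $|a| = [0, \max\{-\lo a, \hi a\}]$, and we need $\max\{-\lo a, \hi a\} \leq \lo b$: this follows from $a \leq b$ (giving $\hi a \leq \lo b$) together with $-b \leq a$ (giving $-\lo a \leq \lo b$). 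The main care needed is in checking that the three clauses in the definition of $|a|$ agree at the boundaries $\lo a = 0$ and $\hi a = 0$, which they do by inspection; beyond that, the proof is purely computational.
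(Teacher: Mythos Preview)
Your unfolding of the order to $a \leq b \iff \hi a \leq \lo b$ is what the definition in Definition~\ref{defn:reals} literally gives, and your arguments for the first equivalence and for the $(\Leftarrow)$ direction of the second are correct. The gap is in your $(\Rightarrow)$ argument: the auxiliary inequality $-|a| \leq a \leq |a|$ that you rely on is \emph{false} under your own reading of~$\leq$. Take $a = [1,2]$; then $|a| = a$, and $a \leq |a|$ would demand $\hi a \leq \lo a$, i.e.\ $2 \leq 1$. More generally, with $\leq$ unfolding to $\hi a \leq \lo b$ the relation is not even reflexive on non-degenerate intervals, so no chaining through $|a|$ can work. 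The fix is simply to run the same three-way case split on the position of $0$ relative to $\lo a,\hi a$ that you already use for $(\Leftarrow)$; in each case the hypothesis $\hi|a| \leq \lo b$ directly yields all three of $a \leq b$, $-b \leq a$, and $0 \leq b$ by the same elementary manipulations. That case-split structure for the forward direction is also what the paper does.

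It may be worth noting that the paper's own proof computes throughout with the \emph{other} natural unfolding, $a \leq b \iff \lo a \leq \hi b$, under which your auxiliary fact would hold and $\leq$ is reflexive; this is also the reading required for the counterexample in the Remark following the lemma, where $[1,1] \leq [-2,1]$ is asserted. So there is a mismatch between the stated definition and the paper's argument. Under either reading the lemma is true, but under the characterisation you correctly derived from the definition your chaining shortcut fails, and the direct case split is what is needed.
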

\begin{proof}
  For the first statement, assume that $a \leq b$. Then $\lo a \leq
  \hi b$ whence $\lo{-b} = -\hi b \leq - \lo a = \hi{-a}$. If $-b
  \leq -a$ then $a = --a \leq --b = b$.

  For the second statement, 
  first suppose that $|a| \leq b$. We show that $- \hi b = \lo{-b}
  \leq \hi a$ and $\lo a \leq \hi b$. We distinguish three cases. If $0
  \leq \lo a$, then $a = |a| \leq b$ whence $\lo a \leq \hi b$ and
  $-\hi b \leq - \lo a \leq 0 \leq \lo a \leq \hi a$. Now suppose
  that $\lo a \leq 0 \leq \hi a$. Then $[0, \max \lbrace -\lo a, \hi
  a \rbrace ] = |a| \leq b$ so that $0 \leq \hi b$. We obtain $-\hi
  b \leq 0 \leq \hi a$ and $\lo a \leq 0 \leq \hi b$. Finally assume
  that $\hi a \leq 0$. Then $[-\hi a, -\lo a] = |a| \leq b$ whence
  $-\hi a \leq \hi b$. Then $-\hi b \leq \lo a$ and $\lo a \leq \hi
  a \leq 0 \leq - \lo a \leq \hi b$. In all three cases we have that
  $0 \leq \lo{|a|} \leq \hi b$ so that $0 \leq b$.

  Now assume that $-b \leq a$, $a \leq b$ and $0 \leq b$. We show that $|a| \leq
  b$, again by distinguishing three cases. If $0 \leq \lo a$, we
  have $|a| = a \leq b$. Similarly, if $\hi a \leq 0$, we have $-b \leq
  a$ and therefore $|a| = -a \leq --b = b$ using the first
  statement. Now assume that $\lo a \leq 0 \leq \hi a$. Then $|a| =
  [0, \max \lbrace -\lo a, \hi a \rbrace] \leq b$ if $0 \leq \hi b$
  which is precisely our assumption that $0 \leq b$.
\end{proof}

\begin{remark}
  It is in general false that $|a|\leq b$ whenever $-b \leq a \leq
  b$. To see this, let $a= [-2, 1]$ and $b = -1 = [-1, -1]$. Then
  $-b = [1, 1] \leq [-2, 1] = a$ and $a = [-2, -1] \leq [-1, -1] =
  b$. On the other hand, $|a| = [0, 2] \not\leq [-1, -1] = b$.
\end{remark}

\begin{lemma} \label{lemma:hi-lo}
Let $x, y \in \IR$. If $x \below y$ then $\lo y_n \leq \hi
x_m$ and $\lo x_m \leq \hi y_n$ for all $n, m \in \Nat$.
\end{lemma}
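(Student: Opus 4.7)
The plan is to argue by contradiction, exploiting the $\neg\neg$-stability of the decidable rational inequalities that need to be established. For the first inequality, assume toward contradiction that $\hi x_m < \lo y_n$ and pick a rational $\epsilon > 0$ small enough that $\hi x_m + \epsilon < \lo y_n$. Set $b = [\lo x_m - \epsilon, \hi x_m + \epsilon] \in \IQ$; by Lemma \ref{lemma:IQ-waybelow} we have $b \waybelow x_m$ since $\lo b < \lo x_m \leq \hi x_m < \hi b$. Then $x \below y$ combined with the definition of $\below$ on $\hat{\IQ}$ yields weak existence of $k \in \Nat$ with $b \waybelow y_k$, and the contradiction will be reached by ruling out $b \waybelow y_k$ uniformly in $k$.

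The key observation is that monotonicity of the chain $(y_j)_j$ in $\IQ$ gives the two-sided estimate $\hi y_k \geq \lo y_n$ for every $k$: if $k \leq n$ then $y_k \below y_n$, hence $\hi y_k \geq \hi y_n \geq \lo y_n$; if $k \geq n$ then $y_n \below y_k$, so $\lo y_k \geq \lo y_n$ and thus $\hi y_k \geq \lo y_k \geq \lo y_n$. Combined with the assumption we obtain $\hi y_k \geq \lo y_n > \hi x_m + \epsilon = \hi b$, so $\hi y_k < \hi b$ fails and Lemma \ref{lemma:IQ-waybelow} forbids $b \waybelow y_k$. Since this holds for every $k$, we have $\neg\exists k\sep b \waybelow y_k$, which contradicts the weak existence above.

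The second inequality $\lo x_m \leq \hi y_n$ is handled symmetrically. Assume $\hi y_n < \lo x_m$, pick $\epsilon > 0$ with $\hi y_n + \epsilon < \lo x_m$, and reuse the same test interval $b$; again $b \waybelow x_m$. Monotonicity now yields the dual estimate $\lo y_k \leq \hi y_n$ for all $k$ (via $\lo y_k \leq \lo y_n \leq \hi y_n$ when $k \leq n$ and $\lo y_k \leq \hi y_k \leq \hi y_n$ when $k \geq n$), so $\lo y_k \leq \hi y_n < \lo x_m - \epsilon = \lo b$ and therefore $\lo b < \lo y_k$ fails, ruling out $b \waybelow y_k$ via Lemma \ref{lemma:IQ-waybelow}. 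Again this contradicts the weak existence coming from $x \below y$.

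The only subtlety is choosing a single test interval $b$ that defeats $b \waybelow y_k$ \emph{uniformly} in $k$ (so that the contradiction with the weak existential is clean), and this is what the two-sided monotonicity estimates above deliver; the remainder is bookkeeping with Lemma \ref{lemma:IQ-waybelow} and the decidability of rational order that makes contradiction a legal proof pattern for the target inequalities.
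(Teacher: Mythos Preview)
Your proof is correct and uses essentially the same idea as the paper: take the test interval $b = x_m \pm \epsilon \waybelow x_m$, invoke the definition of $\below$ on $\widehat{\IQ}$ to obtain weak existence of $k$ with $b \waybelow y_k$, and then combine Lemma~\ref{lemma:IQ-waybelow} with monotonicity of $(y_j)_j$. The only cosmetic difference is packaging: the paper argues directly (for every $\epsilon>0$ one gets $\lo y_n < \hi x_m + \epsilon$ and $\lo x_m - \epsilon < \hi y_n$ from a single $k$ chosen WLOG $\geq n$, then lets $\epsilon\to 0$), whereas you phrase it as a contradiction and replace the WLOG step by an explicit case split on $k\leq n$ versus $k\geq n$.
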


\begin{proof}
  Suppose that $x \below y$, let $m, n \in \Nat$ and fix a rational
  $\epsilon > 0$. Then $x_m \pm \epsilon \waybelow x_m$, hence
  there must exist $k \in \Nat$ such that $x_m \pm \epsilon
  \waybelow y_k$. By monotonicity of $y$, we may assume that $k \geq
  n, m$. But then  $\lo x_m - \epsilon = \lo{x_m - \epsilon} < \lo
  y_k \leq \hi y_k \leq \hi y_n$ as $k \geq m$. Similarly
  $\lo y_n \leq \lo y_k \leq  \hi y_k < \hi{x_m \pm \epsilon} = x_m
  + \epsilon$. The claim follows as $\epsilon$ was arbitrary.
\end{proof}

\noindent
In the following, recall that $\ell([x, y]) = y - x$ denotes
interval length, and that we identify rationals $q \in \Rat$ with
singleton intervals.

\begin{lemma} \label{lemma:fat-below}
  Let $w, x, y \in \IR$ with $w \below x$ and $w \below y$. Then $|x
  - y| \leq \ell(w_n)$ for all $n \in \Nat$.
\end{lemma}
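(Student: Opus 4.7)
The plan is to unfold the conclusion $|x - y| \leq \ell(w_n)$ via Definition~\ref{defn:reals} and reduce it to the statement that, for every $k \in \Nat$, there weakly exists $m \in \Nat$ with $\hi |(x-y)_m| \leq \ell(w_n) + 2^{-k}$. This in turn reduces to squeezing both $x_m$ and $y_m$ into a common $\epsilon$-fattening of $w_n$ for a sufficiently small rational $\epsilon$, after which a short interval-arithmetic calculation delivers the desired bound.

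Concretely, fix $k \in \Nat$ and set $\epsilon = 2^{-(k+2)}$ so that $2\epsilon \leq 2^{-k}$. The interval $w_n \pm \epsilon \in \IQ$ satisfies $w_n \pm \epsilon \waybelow w_n$ by Lemma~\ref{lemma:IQ-waybelow}, so unfolding $w \below x$ from Definition~\ref{defn:cont-compl} yields the weak existence of some $m_x$ with $w_n \pm \epsilon \waybelow x_{m_x}$; symmetrically, from $w \below y$ there weakly exists $m_y$ with $w_n \pm \epsilon \waybelow y_{m_y}$. Setting $m = \max \lbrace m_x, m_y \rbrace$ and using monotonicity of $x$ and $y$ together with Lemma~\ref{lemma:way-below-below}, we obtain $w_n \pm \epsilon \waybelow x_m$ and $w_n \pm \epsilon \waybelow y_m$, and reapplying Lemma~\ref{lemma:IQ-waybelow} then delivers the four rational estimates $\lo w_n - \epsilon < \lo x_m$, $\hi x_m < \hi w_n + \epsilon$, $\lo w_n - \epsilon < \lo y_m$ and $\hi y_m < \hi w_n + \epsilon$.

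Combining these with the definition of subtraction on $\IQ$, the endpoints of $(x-y)_m = [\lo x_m - \hi y_m,\ \hi x_m - \lo y_m]$ satisfy
\[ -\ell(w_n) - 2\epsilon \;<\; \lo(x-y)_m \;\leq\; \hi(x-y)_m \;<\; \ell(w_n) + 2\epsilon. \]
A case split on the signs of $\lo(x-y)_m$ and $\hi(x-y)_m$ that follows the three clauses in the definition of $|\cdot|$ on $\IQ$ (Definition~\ref{defn:reals}) then shows $\hi|(x-y)_m| < \ell(w_n) + 2\epsilon \leq \ell(w_n) + 2^{-k}$ in every case, which is precisely what the unfolded formula $0 \leq \ell(w_n) - |x - y|$ demands at level $k$.

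The only real obstacle is the book-keeping around weak existence: we extract the indices $m_x$ and $m_y$ only weakly, yet need to glue them together via $m = \max \lbrace m_x, m_y \rbrace$. This step is legitimate by the reasoning pattern recorded in the Remark following Definition~\ref{defn:predomain-bases}, since our target formula at level $k$ is itself of the form $\wex m \sep (\ldots)$ and hence already $\neg\neg$-stable. Beyond this, the three-case interval-arithmetic split for $|\cdot|$ is routine but has to be done by hand rather than inferred from real-number intuition.
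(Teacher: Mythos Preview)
Your proof is correct, but the paper takes a noticeably shorter route by delegating the $\epsilon$-argument to Lemma~\ref{lemma:hi-lo}. That lemma already extracts, from $w \below x$ (and likewise $w \below y$), the \emph{universal} endpoint bounds $\lo x_k \leq \hi w_n$ and $\lo w_n \leq \hi x_k$ for \emph{all} $n, k \in \Nat$. With these in hand the paper obtains $-\ell(w_n) \leq \hi(x-y)_k$ and $\lo(x-y)_k \leq \ell(w_n)$ for every $k$ in two lines of rational arithmetic and then appeals to Lemma~\ref{lemma:interval-arith} to conclude. Because the endpoint bounds are universal rather than weakly existential, the paper's argument has no indices $m_x, m_y$ to combine, no $\neg\neg$-stability check on the target, and no case split on the definition of $|\cdot|$; all of that bookkeeping has been absorbed once and for all into Lemmas~\ref{lemma:hi-lo} and~\ref{lemma:interval-arith}. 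Your approach essentially inlines the proof of Lemma~\ref{lemma:hi-lo} and then handles the absolute value by hand, which makes your argument self-contained but more laborious.
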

\begin{proof}
  Let $n, k \in \Nat$. By Lemma \ref{lemma:interval-arith} it
  suffices to show that $-\ell(w_n) \leq \hi{(x -
  y)}_k = \hi x_k - \lo y_k$ and $\lo x_k - \hi y_k = \lo{(x - y)}_k \leq \ell(w_n)$.
  This is a consequence of Lemma \ref{lemma:hi-lo}, since
  \[ - \ell(w_n) = \lo w_n - \hi w_n \leq \hi x_k - \hi w_n \leq
  \hi x_k - \lo y_k \]
  and similarly,
  \[ \lo x_k - \hi y_k \leq \hi w_n - \hi y_k \leq \hi w_n - \lo w_k
  \]
  as required.
\end{proof}

\begin{lemma}\label{lemma:ext}
Let $x = (x_n)_n \in \IR$. Then $x \pm \ell(x_n) \below x$ for all
$n \in \Nat$.
\end{lemma}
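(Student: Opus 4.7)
The statement unpacks via Definition~\ref{defn:cont-compl}: writing $y = x \pm \ell(x_n) = (x_m \pm \ell(x_n))_m$, I must show that whenever $b \in \IQ$ and $m \in \Nat$ satisfy $b \waybelow y_m = x_m \pm \ell(x_n)$, there must exist $k \in \Nat$ with $b \waybelow x_k$. Before doing this, I would briefly observe that $y$ is indeed a legitimate element of $\IR$: monotonicity of $(x_m)_m$ gives $\lo x_m \leq \lo x_{m+1}$ and $\hi x_{m+1} \leq \hi x_m$, which is preserved under the symmetric extension by the fixed constant $\ell(x_n) \geq 0$.

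The core of the argument is a one-shot application of Lemma~\ref{lemma:IQ-waybelow}. Given $b \waybelow x_m \pm \ell(x_n) = [\lo x_m - \ell(x_n), \hi x_m + \ell(x_n)]$, that lemma yields the strict inequalities
\[ \lo b < \lo x_m - \ell(x_n) \quad \text{and} \quad \hi x_m + \ell(x_n) < \hi b. \]
Since $\ell(x_n) = \hi x_n - \lo x_n \geq 0$ (because $x_n \in \IQ$), these inequalities weaken to $\lo b < \lo x_m$ and $\hi x_m < \hi b$. Applying Lemma~\ref{lemma:IQ-waybelow} in the other direction, $b \waybelow x_m$, so taking $k = m$ discharges the obligation.

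\textbf{Obstacle.} There is essentially no obstacle: the index $n$ plays no role beyond guaranteeing that $\ell(x_n)$ is a non-negative rational, and the chain property of $x$ enters only to show that the fattened sequence lies in $\IR$. The only thing worth flagging is that the strictness of the way-below inequalities is preserved when one passes from the fattened interval to $x_m$, since adding a non-negative quantity on the correct side of a strict inequality does not affect its direction.
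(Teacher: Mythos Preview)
Your argument is correct, and in fact simpler than the paper's. Both proofs unfold the definition of $\below$ on $\hat B$ and must produce, from $b \waybelow x_m \pm \ell(x_n)$, some index $k$ with $b \waybelow x_k$. You take $k = m$: since $\ell(x_n) \geq 0$, the fattened interval $x_m \pm \ell(x_n)$ is contained in $x_m$ as an interval, so $b \waybelow x_m \pm \ell(x_n) \below x_m$ gives $b \waybelow x_m$ immediately (this is just Lemma~\ref{lemma:way-below-below}, or the direct endpoint inequalities you write). The paper instead takes $k = n$ (the fixed index appearing in the statement) and must therefore compare $x_m$ and $x_n$ for arbitrary $m$; it invokes Lemma~\ref{lemma:hi-lo} applied to $x \below x$ to obtain $\lo x_m \leq \hi x_n$ and $\lo x_n \leq \hi x_m$, and then runs an endpoint calculation to conclude $\lo b < \lo x_n$ and $\hi x_n < \hi b$. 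Your route avoids Lemma~\ref{lemma:hi-lo} entirely and is the more direct proof; the paper's choice of $k = n$ buys nothing extra here and seems to be an artefact of wanting a uniform target index independent of $m$.
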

\begin{proof}
Let $\alpha \in \IQ$ and $k \in \Nat$ with $\alpha \waybelow x_k \pm
\ell(x_n)$. We show that $\alpha \waybelow x_n$. By Lemma
\ref{lemma:hi-lo} we have $\lo x_k \leq \hi x_n$ as $x \below x$.
Therefore
$\lo \alpha < \lo x_k - (\hi x_n - \lo x_n) \leq \hi x_n  - \hi x_n
+ \lo x_n = \lo x_n$. Analogously one shows that $\hi x_n < \hi
\alpha$ so that $\alpha \waybelow x_n$.
\end{proof}

\begin{lemma} \label{lemma:abs-below}
  Let $x \in \IR$, $y \in \Real$ be an interval real and $\epsilon \in
  \Rat_{> 0}$.
  Then $x \pm \epsilon \below y$ whenever $|x - y| \leq
  \epsilon$.
\end{lemma}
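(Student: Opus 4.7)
The plan is to unpack $x\pm\epsilon \below y$ in $\IR$ according to Definition \ref{defn:cont-compl}: it suffices to show that for every $\alpha \in \IQ$ and $n \in \Nat$ with $\alpha \waybelow x_n \pm \epsilon$, there weakly exists $m$ with $\alpha \waybelow y_m$. Fix such $\alpha$ and $n$. By Lemma \ref{lemma:IQ-waybelow} the assumption yields a rational margin $\delta > 0$ with
\[
\lo\alpha < \lo{x_n} - \epsilon - \delta \quad \text{and} \quad \hi{x_n} + \epsilon + \delta < \hi\alpha.
\]

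Next I would unfold $|x-y|\leq \epsilon$, which by Definition \ref{defn:reals} means $0\leq \epsilon - |x-y|$, i.e.\ for every $k$ there weakly exists an index $n'$ with $-2^{-k} \leq \lo{(\epsilon - |x-y|)_{n'}} = \epsilon - \hi{|x-y|_{n'}}$. Picking $k$ with $2^{-k}\leq \delta$, and using monotonicity of the sequences to assume $n' \geq n$, this gives $\hi{|x-y|_{n'}} \leq \epsilon + \delta$.

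The core of the proof is then a short case distinction on the definition of $|\cdot|$ applied to $(x-y)_{n'} = [\lo{x_{n'}} - \hi{y_{n'}}, \hi{x_{n'}} - \lo{y_{n'}}]$. In each of the three cases one checks that
\[
\hi{y_{n'}} - \lo{x_{n'}} \leq \hi{|x-y|_{n'}} \quad \text{and} \quad \hi{x_{n'}} - \lo{y_{n'}} \leq \hi{|x-y|_{n'}},
\]
trivially in the two sign-definite cases (where one of the left-hand sides is non-positive) and by construction in the straddling case. Combining this with monotonicity of $x$, which yields $\lo{x_n} \leq \lo{x_{n'}}$ and $\hi{x_{n'}} \leq \hi{x_n}$, gives
\[
\hi{y_{n'}} \leq \lo{x_{n'}} + \epsilon + \delta \leq \hi{x_n} + \epsilon + \delta < \hi\alpha
\]
and symmetrically $\lo\alpha < \lo{x_n} - \epsilon - \delta \leq \hi{x_{n'}} - \epsilon - \delta \leq \lo{y_{n'}}$. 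Hence $\alpha \waybelow y_{n'}$ by Lemma \ref{lemma:IQ-waybelow}, which is what is required with $m = n'$.

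The main obstacle I expect is the bookkeeping around the absolute value case split, together with recording the right $\delta$--$2^{-k}$ relationship; once the two uniform endpoint bounds on $|x-y|_{n'}$ are in hand, the remainder is routine rational arithmetic. Note that totality of $y$ plays no explicit role beyond what is already encoded in $|x-y|\leq \epsilon$; the hypothesis $y\in\Real$ is needed only because this lemma will be consumed later in the total-real context.
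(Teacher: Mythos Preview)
Your argument is correct, and it differs from the paper's proof in an instructive way. The paper uses the totality hypothesis on $y$ directly: after fixing the margin $\delta$, it picks $k$ with $\ell(y_k)\leq\delta$ (which is exactly where $y\in\Real$ enters), derives from $|x-y|\leq\epsilon$ only the single bound $\lo x_k - \hi y_k \leq \epsilon$, and then converts $\hi y_k$ into $\lo y_k$ via $\hi y_k - \ell(y_k) = \lo y_k$. Your route instead avoids totality altogether by carrying out the three-case analysis of $|\cdot|$ on $(x-y)_{n'}$ to extract \emph{both} endpoint bounds $\hi y_{n'} - \lo x_{n'} \leq \hi{|x-y|_{n'}}$ and $\hi x_{n'} - \lo y_{n'} \leq \hi{|x-y|_{n'}}$, and then combining with the monotonicity of $(x_n)_n$. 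Your final remark is therefore justified: your argument establishes the stronger statement with $y\in\IR$ arbitrary, at the modest cost of the absolute-value case split, whereas the paper's version is a little shorter but genuinely consumes the hypothesis $y\in\Real$.
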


\begin{proof}
  Let $n \in \Nat$, $a \in \IQ$ such that $ a \waybelow (x \pm
  \epsilon)_n$. We show that there must exist $k \in \Nat$ such that
  $a \waybelow y_k$. By characterisation of $\waybelow$ on $\IQ$,
  i.e. Lemma \ref{lemma:IQ-waybelow}, there exists $\delta \in
  \Rat_{>0}$ such that $a \waybelow (x \pm \epsilon)_k \pm \delta$.
  As $y$ is an interval real, there must exist $k \in \Nat$ such
  that $\ell(y_k) \leq \delta$. We may assume that $k \geq n$. As $|x -
  y| \leq \epsilon$, we have that
  \[ -\epsilon \leq \hi{(x - y)}_k = \hi x_k - \lo y_k \mbox{ and }
    \lo x_k - \hi y_k = \lo{(x-y)}_k \leq \epsilon. \]
  We obtain that
  \[ \lo a < \lo x_n - \epsilon - \delta \leq \lo x_k - \epsilon -
  \delta \leq  \hi y_k - \delta \leq \hi y_k - w(y_k) = \lo y_k \]
  and one analogously establishes that $\hi y_k < \hi a$ so that $a
  \waybelow y_k$.
\end{proof}

\noindent
We characterise Scott continuous total functions in terms of their
action on total reals.

\begin{defn}
Let $f: \Real \to \Real$ and $x \in \Real$. A \emph{modulus of
non-discontinuity} of $f$ at $x$ is
a sequence $(\delta_n,
\epsilon_n)_n$ in $\Rat_{>0} \times (\Rat_{\geq 0} \cup \lbrace
\infty \rbrace)$ such that 
\begin{enumerate}
  \item $\forall y \in \Real \sep |x-y| \leq \delta_n \to |f(x) -
  f(y)| \leq \epsilon_n$ whenever $n \in \Nat$
  \item $(\epsilon_n)_n$ is a classical null-sequence.
\end{enumerate}
We say that $f$ is \emph{not discontinuous at $x$} if there exists a
modulus of non-discontinuity of $f$ at $x$. A \emph{modulus
of intensional non-discontinuity} for $f$ is a function
$\omega: \IQ \to \Rat_{>0}$ such
that $(\ell(x_n), \omega(x_n))_n$  is a modulus of non-discontinuity
at $x$ whenever $x \in \Real$ is total. A function $f$ is
\emph{intensionally non-discontinuous} if there exists a modulus of
intensional non-discontinuity for $f$. The function $f$ is
\emph{continuous} at $x$ if, for all $\epsilon \in \Rat_{> 0}$
there exists $\delta \in \Rat_{>0}$ such that for all $y$ with $|x -
y| \leq \delta$ we have $|f(x) - f(y)| \leq \epsilon$.
\end{defn}

\begin{remark}
  It is evident that every function that is intensionally 
  non-discontinuous is non-discontinuous at every $x \in \Real$. 
  The converse is not necessarily due to the uniformity requirement. 
\end{remark}

\begin{lemma}
  If $f: \Real \to \Real$ is continuous at $x$, then $f$ is
  non-discontinuous at $x$.
\end{lemma}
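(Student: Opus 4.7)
The plan is straightforward: we produce a modulus of non-discontinuity at $x$ by iterating the continuity condition along the fixed null sequence $(2^{-n})_n$. Specifically, I would set $\epsilon_n = 2^{-n}$ for every $n \in \Nat$, and then for each $n$ apply the hypothesis that $f$ is continuous at $x$ to the positive rational $2^{-n}$ in order to produce a witness $\delta_n \in \Rat_{>0}$ with the property that $|x - y| \leq \delta_n$ implies $|f(x) - f(y)| \leq 2^{-n}$ for all $y \in \Real$.

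Turning this pointwise data into a sequence $(\delta_n)_n$ is the one subtle step: the statement of continuity gives us $\forall n \sep \exists \delta \in \Rat_{>0}\sep P(n,\delta)$, where $P(n,\delta)$ is the decidable-on-witnesses property above, so a use of countable choice over $\Rat_{>0}$ (available in the Bishop-style setting of $\mathsf{HA}^\omega$ in which the paper works) yields the desired function $n \mapsto \delta_n$. No classical reasoning is needed here because the existential in the definition of continuity at $x$ is the strong one.

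It then remains to verify the two clauses of the definition of a modulus of non-discontinuity. The first clause is the defining property of the chosen $\delta_n$. For the second, $(2^{-n})_n$ is manifestly a decreasing sequence in $\Rat_{\geq 0} \cup \lbrace \infty \rbrace$, and given any $\epsilon \in \Rat_{>0}$ we can (even strongly) find $m \in \Nat$ with $2^{-m} \leq \epsilon$, so $2^{-n} \leq \epsilon$ for all $n \geq m$; hence $(2^{-n})_n \fromabove 0$ and in particular $(\epsilon_n)_n$ is a classical null-sequence. Thus $(\delta_n, \epsilon_n)_n$ is a modulus of non-discontinuity for $f$ at $x$, and $f$ is non-discontinuous at $x$.

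I do not expect a genuine obstacle: the only design choice is the selection of $(\epsilon_n)_n$, and any explicit classical null-sequence of positive rationals (e.g.\ $1/(n+1)$) would serve equally well. The argument is essentially the observation that pointwise continuity already delivers the data required by the weaker (negatively formulated) notion of non-discontinuity, because the latter merely asks that the pair-sequence exist and that its second component be a classical null-sequence rather than a Cauchy modulus.
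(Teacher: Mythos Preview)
Your proposal is correct and follows essentially the same route as the paper: set $\epsilon_n = 2^{-n}$, use continuity (plus countable choice) to extract the corresponding $\delta_n$, and observe that $(2^{-n})_n$ is a classical null-sequence. The paper's proof is terser (it packages the choice step into the existence of a function $\delta: \Rat_{>0} \to \Rat_{\geq 0}$ and omits the verification that $(2^{-n})_n \fromabove 0$), but the content is identical.
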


\begin{proof}
  As $f$ is continuous at $x$, there exists a function $\delta:
  \Rat_{>0} \to \Rat_{\geq 0}$ such that
  $\forall n \in \Nat \sep \forall y \in \Real \sep |y - x| \leq
  \delta(2^{-n}) \to |f(x) - f(y)| \leq 2^{-n}$. Hence the sequence
  $(\delta(2^{-n}), 2^{-n})_n$ is a modulus of non-discontinuity for
  $f$ at $x$.
\end{proof}

\begin{lemma}
  If every (total) function $f: \Real \to \Real$ that is
  non-discontinuous at $0$ is in fact continuous at $0$, then
  Markov's principle holds.
\end{lemma}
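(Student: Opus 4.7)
Let $P: \Nat \to \{0,1\}$ be a decidable predicate with $\wex n \sep P(n)$; we aim to derive $\exists n \sep P(n)$. The plan is to construct a total function $f: \Real \to \Real$ that is non-discontinuous at $0$, so that the hypothesis converts non-discontinuity into continuity, and then to extract a strong witness for $P$ from the rational $\delta$ produced by continuity via bounded decidable search.

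First I would build a Markov-like total real $\alpha \in \Real$ with approximating intervals $\alpha_n = [0, 2^{-n}]$ while no $k \leq n$ satisfies $P(k)$, and $\alpha_n = [2^{-k^*}, 2^{-k^*}]$ once the first witness $k^*$ appears (with a small geometric adjustment of $\hi\alpha_n$ at the transition stage to keep $(\alpha_n)_n$ $\below$-increasing). Totality of $\alpha$ follows from $\wex n \sep P(n)$; classically $\alpha = 2^{-k^*}$ if $\exists k \sep P(k)$, and $\alpha = 0$ otherwise.

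Next I would realise $f$ as a constructive version of $y \mapsto \min(1, |y|/\alpha)$ by
\[ \lo{f(y)}_n = \max_{k \leq n} \min\!\Bigl(1,\, \lo{|y|}_k \big/ \hi\alpha_k \Bigr), \quad
   \hi{f(y)}_n = \min_{k \leq n} \min\!\Bigl(1,\, \hi{|y|}_k \big/ \max(\lo\alpha_k,\, 2^{-k})\Bigr), \]
where flooring the denominator by $2^{-k}$ sidesteps the possibly vanishing $\lo\alpha_k$ and forces $f$ to be total on all of $\Real$. One verifies that $f(0) = 0$, that for a rational $q > 0$ the value $f(q)$ is $1$ as soon as the approximations reveal $q \geq \hi\alpha_k$, and that $f$ is non-discontinuous at $0$ with modulus $(\delta_n, \epsilon_n) = (2^{-n}, \epsilon_n)$, where $\epsilon_n = 1$ before the first witness of $P$ and $\epsilon_n = 2^{k^*-n}$ afterwards. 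The sequence $(\epsilon_n)_n$ is a classical null sequence precisely because $\wex n \sep P(n)$, and the defining inequality $|y| \leq 2^{-n} \to f(y) \leq \epsilon_n$ reduces in the witness case to $|y| \leq 2^{-n} \leq 2^{-k^*} = \alpha$.

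By the hypothesis, $f$ is continuous at $0$; applied at $\epsilon = 1/2$ this yields a rational $\delta > 0$ such that $f(y) \leq 1/2$ whenever $|y| \leq \delta$, so in particular $\lo{f(\delta)}_n \leq 1/2$ for all $n$. Setting $K = \lceil \log_2(1/(2\delta)) \rceil$, the predicate $\exists k \leq K \sep P(k)$ is bounded and decidable and is the desired witness if it holds. In the remaining case $\forall k \leq K \sep \neg P(k)$, a case analysis on the decidable $P(K+1)$ shows that $\hi\alpha_{K+1} = 2^{-(K+1)}$ regardless, whence $\lo{f(\delta)}_{K+1} \geq \min(1,\, \delta \cdot 2^{K+1}) = 1$ by the choice of $K$, contradicting $\lo{f(\delta)}_{K+1} \leq 1/2$. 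The main obstacle in this plan is the constructive definition of $f$ as a genuinely total function on $\Real$ even when $\alpha$ classically vanishes; the flooring trick on the denominator handles this and is the only subtle point.
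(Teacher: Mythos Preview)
Your overall strategy coincides with the paper's: encode the decidable predicate~$P$ into a total function that is non-discontinuous at~$0$, invoke the hypothesis to obtain continuity, and read off a bound for a decidable search from the rational~$\delta$ produced. The paper's realisation of this strategy is, however, much simpler: it takes $f(x)=k^\ast\cdot x$ where $k^\ast=\min\lbrace k\mid P(k)\rbrace$ (realised as a supremum of partial maps $f_n$), so that continuity at~$0$ with $\epsilon=1$ directly yields $k^\ast\leq 1/\delta$.

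Your concrete construction of $f$ is broken. Take $y$ to be the rational $1/2$ (as a constant sequence) and suppose that $\neg P(0)$ and $\neg P(1)$, so that $\alpha_0=[0,1]$ and $\alpha_1=[0,1/2]$. Then for $k=1$ we have $\hi\alpha_1=1/2$, hence
\[
\lo{f(y)}_1 \;\geq\; \min\Bigl(1,\;\tfrac{1/2}{\hi\alpha_1}\Bigr)\;=\;\min(1,1)\;=\;1,
\]
while for $k=0$ we have $\max(\lo\alpha_0,2^{0})=\max(0,1)=1$, hence
\[
\hi{f(y)}_1 \;\leq\; \min\Bigl(1,\;\tfrac{1/2}{1}\Bigr)\;=\;\tfrac12.
\]
So $\lo{f(y)}_1>\hi{f(y)}_1$ and $(f(y)_n)_n$ is not even an element of~$\IR$. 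The conceptual mistake is in the flooring trick: replacing $\lo\alpha_k$ by $\max(\lo\alpha_k,2^{-k})$ in the denominator makes the quotient \emph{smaller}, but since the genuine $\alpha$ may lie anywhere in $[0,2^{-k}]$, the true value of $|y|/\alpha$ has no finite upper bound at that stage (the only sound upper bound for $f(y)$ before a witness appears is~$1$, coming from the outer $\min$). Once you correct $\hi{f(y)}_n$ to simply be~$1$ until a witness is found, the argument can be salvaged, but at that point it is both more complicated and closer to the paper's linear function, which sidesteps division altogether.
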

\begin{proof}
  Let $P(n)$ be a decidable predicate on natural numbers and assume
  that $\neg \forall n \in \Nat \sep \neg P(n)$. We show that there
  exists $n \in \Nat$ such that $P(n)$ under the assumption that
  every function that is non-discontinuous at $0$ is in fact
  continuous at $0$.
  Consider, for $n \in \Nat$, the function $f_n: \IR \to \IR_\bot$
  defined by
  \[ f_n (x) = \begin{cases}
    \bot &  \forall n \leq k \sep \neg P(k)  \\
    x \cdot \min \lbrace k \leq n \mid P(n) = 1 \rbrace & \exists k
    \leq n \sep P(k)
  \end{cases}\]
  and let $f = \sqsup_n f_n$.
  Clearly $(f_n)_n$ is monotone so that $\sqsup_n f_n$ exists, and
  $\sqsup_n f_n$ defines a total function with $f(0) = 0$. We claim
  that the sequence $(\epsilon_n, \delta_n)$ defined by $\delta_n =
  \frac 1{n^2}$ and
  \[ \epsilon_n = \begin{cases}
    \infty & \forall k \leq n \sep \neg P(k) \\
    \frac 1n & \exists k \leq n \sep P(k)
  \end{cases}\]
  is a modulus of non-discontinuity for $f$ at $0$. To see this, let
  $n \in \Nat$. If $\forall k \leq n \sep \neg P(n)$, then
  $\epsilon_n = \infty$ and there is nothing to show. So assume that
  $\exists k \leq n \sep P(n)$ and $|y - 0| \leq \delta_n = \frac
  1{n^2}$. Then $|f(y) - f(0)| = |f(y)| = |y| \cdot \min \lbrace k \leq n
  \mid P(n) \rbrace \leq |y|\cdot n \leq \delta_n \cdot n = \frac
  1{n^2} \cdot n = \frac 1n = \epsilon_n$.

  By assumption, $f$ is also continuous at $0$, hence for $\epsilon
  = 1$ there exists $\delta$ such that for all $y$ with $|y| = |y - 0|
  \leq \delta$ we have $|f(y)| = |f(y) - f(0)| \leq 1$. As $\delta \in
  \Rat_{>0}$ there exists $n$ such that $n \geq \frac{1}\delta$. We
  claim that there exists $k \leq n$ such that $P(k)$. As this
  property is decidable, we may assume that 
  $\neg P(k)$ for all $k \leq n$, and establish a contradiction to
  prove the claim. By assumption, we have $\neg\forall k \neg P(k)$
  so that it suffices to show that $\forall k \sep \neg P(k)$ to
  establish the desired contradiction. So let
  $k \in \Nat$, we need to show $\neg P(k)$ to prove the claim.
  If $k \leq n$ then $\neg P(k)$ is given. Now let $k > n$. As
  $P$ is decidable, we may assume $P(k)$ and establish a
  contradiction to show that $\neg P(k)$. So assume that $P(k)$.
  Then $f(y) = \min \lbrace k' \leq k \mid P(k') \rbrace \cdot x$.

  We now use that for $y = \delta$, we have that $|fy| = \min
  \lbrace k' \leq k \mid P(k')  \rbrace \cdot \delta \leq 1$ so
  that $\min \lbrace k' \leq k \mid P(k')  \rbrace \leq \frac
  1\delta \leq k$ so that there exists $k' \leq k$ with $P(k)$,
  contradiction.
\end{proof}

\noindent
The following lemma shows that equality is stable under adding
classical null-sequences. Again, this is a consequence of the
classical formulation of equality on reals.

\begin{lemma} \label{equality}
Let $x = (x_n) \in \IR$ and let $(q_n)$ be a classical null
sequence. If $y = (x_n \pm q_n)$, then $x = y$.
\end{lemma}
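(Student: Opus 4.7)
The plan is to unfold the definition of equality on the continuous completion and verify both $x \below y$ and $y \below x$ separately. A preliminary step is to check that $y$ is a bona fide element of $\IR$: since $(q_n)_n$ is a classical null sequence there must exist $N$ beyond which $q_n \in \Rat_{\geq 0}$, and on that tail the sequence $(x_n \pm q_n)_n$ is a chain in $\IQ$ (using that $(\lo x_n)_n$ is increasing, $(\hi x_n)_n$ is decreasing, and $(q_n)_n$ is decreasing). Up to equality we may therefore work with this tail and ignore any initial $\infty$-entries.

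The direction $y \below x$ is immediate. Given $b \in \IQ$ and $n \in \Nat$ with $b \waybelow y_n$, Lemma \ref{lemma:IQ-waybelow} gives $\lo b < \lo y_n \leq \hi y_n < \hi b$. Because $q_n \geq 0$, we have $\lo y_n = \lo x_n - q_n \leq \lo x_n$ and $\hi x_n \leq \hi x_n + q_n = \hi y_n$, whence $\lo b < \lo x_n \leq \hi x_n < \hi b$. Applying Lemma \ref{lemma:IQ-waybelow} in the other direction yields $b \waybelow x_n$, so $m = n$ is the required witness.

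The direction $x \below y$ is where the null-sequence hypothesis actually enters, since we must ``spend'' part of the slack in $b \waybelow x_n$ to absorb the perturbation $q_m$. Given $b \waybelow x_n$, Lemma \ref{lemma:IQ-waybelow} shows that the rational number $\epsilon = \min\{\lo x_n - \lo b,\; \hi b - \hi x_n\}$ is strictly positive. The property of $(q_n)_n$ then guarantees weak existence of $m \geq n$ with $q_m \leq \epsilon/2$. For this $m$, monotonicity of $\lo x$ and $\hi x$ yields
\[
  \lo y_m \;=\; \lo x_m - q_m \;\geq\; \lo x_n - \tfrac{\epsilon}{2} \;>\; \lo b
  \quad\text{and}\quad
  \hi y_m \;=\; \hi x_m + q_m \;\leq\; \hi x_n + \tfrac{\epsilon}{2} \;<\; \hi b,
\]
so $b \waybelow y_m$ by Lemma \ref{lemma:IQ-waybelow} again.

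The only obstacle is bookkeeping rather than genuine mathematical difficulty: one must be careful that the index $m$ is only weakly available (from the definition of classical null sequence), but this matches the weak existential form of $\below$ on $\hat B$, so no appeal to Markov's principle is needed. The initial-$\infty$ issue is absorbed by the same weak-existence pattern, as observed in the preliminary remarks in Section \ref{sec:bases}.
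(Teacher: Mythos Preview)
Your proof is correct and follows essentially the same route as the paper's: the direction $y \below x$ uses $x_n \pm q_n \below x_n$ (you unpack Lemma~\ref{lemma:IQ-waybelow} explicitly, the paper just invokes Lemma~\ref{lemma:way-below-below}), and the direction $x \below y$ extracts a positive slack $\epsilon$ from $b \waybelow x_n$, finds a tail index where $q_m$ falls below that slack, and uses monotonicity of the chain to conclude. Your additional care about the initial $\infty$-entries and the well-definedness of $y$ as a chain is a point the paper tacitly assumes away.
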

\begin{proof}
We establish that $(x_n \pm q_n)_n \below (x_n)_n$ and $(x_n)_n
\below (x_n \pm q_n)_n$. The first relation is immediate using the
Definition \ref{defn:cont-compl}, for if $\alpha \waybelow x_n \pm
q_n$, we have $\alpha \waybelow x_n \pm q_n \below x_n$ whence
$\alpha \waybelow x_n \pm q_n$ by Lemma \ref{lemma:way-below-below}.
For the second (converse) relation, assume that $\alpha \waybelow
x_n$. Using Lemm \ref{lemma:IQ-waybelow}, there exits $\epsilon \in
\Rat$ with $\epsilon > 0$ and $\alpha \waybelow x_n \pm \epsilon$.
As $(q_n)_n$ is a classical null sequence, there must exist $m \in
\Nat$ such that $q_m < \epsilon$. For $k = \max \lbrace n, m
\rbrace$ we therefore obtain that
$\alpha \waybelow x_n \pm \epsilon \below x_k \pm q_k$ as required.
\end{proof}

\begin{lemma}
Suppose that $f: \IR \to \IR$ is Scott continuous and total. Then $f
\restriction \Real$ is intensionally non-discontinuous. 
\end{lemma}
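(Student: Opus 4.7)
The plan is to exhibit a modulus $\omega\colon\IQ\to\Rat_{>0}$ built from $f$ applied to the fattened interval $\alpha\pm\ell(\alpha)$; concretely, I aim for $\omega$ of the form $\omega(\alpha)=\ell(f(\alpha\pm\ell(\alpha))_{k(\alpha)})+2^{-k(\alpha)}$ with $k\colon\IQ\to\Nat$ to be chosen. The leading term will carry the pointwise bound, and the additive $2^{-k(\alpha)}$ secures $\omega(\alpha)>0$ even in the degenerate case $\ell(\alpha)=0$.

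For the pointwise bound, I fix a total $x\in\Real$, an index $n\in\Nat$, and $y\in\Real$ with $|x-y|\leq\ell(x_n)$. Lemma~\ref{lemma:ext} gives $x_n\pm\ell(x_n)\below x$ through the canonical embedding of $\IQ$ in $\IR$, and Lemma~\ref{lemma:abs-below}, composed with $x_n\pm\ell(x_n)\below x\pm\ell(x_n)$, gives $x_n\pm\ell(x_n)\below y$. Monotonicity of $f$ (from Scott continuity) then makes $f(x_n\pm\ell(x_n))$ a common $\below$-lower bound of $f(x)$ and $f(y)$, so Lemma~\ref{lemma:fat-below} delivers $|f(x)-f(y)|\leq\ell(f(x_n\pm\ell(x_n))_{k(x_n)})\leq\omega(x_n)$. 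For the classical null-sequence condition, I would next verify $\sqsup_n(x_n\pm\ell(x_n))=x$: the chain is $\below$-monotone and lies below $x$, its supremum in $\IR$ has $m$-th entry $x_m\pm\ell(x_m)$ of width $3\ell(x_m)$ (so is itself total), and hence equals $x$ by Lemma~\ref{lemma:total-max}. Scott continuity then gives $f(x)=\sqsup_n f(x_n\pm\ell(x_n))$, and since $f(x)\in\Real$ the intervals $f(x)_k\pm 2^{-k}\in\IQ$ are way-below $f(x)$ with widths forming a classical null sequence. Unwinding $\waybelow$ on $\IR$ via Lemma~\ref{4.8}, for each such $\beta=f(x)_k\pm 2^{-k}$ there must weakly exist $n$ and $m$ with $f(x_n\pm\ell(x_n))_m\below\beta$, which forces $\ell(f(x_n\pm\ell(x_n))_m)\leq\ell(\beta)$; this yields weak smallness of $\ell(f(x_n\pm\ell(x_n))_m)$ in $n$ and $m$ jointly.

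The main obstacle is choosing $k\colon\IQ\to\Nat$ so that (a) $k(x_n)$ grows fast enough along $(x_n)_n$ to convert the joint weak decay of $\ell(f(x_n\pm\ell(x_n))_m)$ into weak decay of the single-indexed sequence $(\omega(x_n))_n$, and (b) $(\omega(x_n))_n$ is in addition decreasing in $n$. A natural attempt is $k(\alpha)=\lceil-\log_2\ell(\alpha)\rceil$ when $\ell(\alpha)>0$ and a fixed default otherwise. Monotonicity is then most cleanly secured by replacing $\omega$ with a $\below$-anti-monotone majorant, exploiting that $\ell$, and hence $2^{-k(\alpha)}$ for this choice of $k$, is $\below$-anti-monotone on $\IQ$. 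Reconciling (a), (b), and the pointwise bound simultaneously without invoking Markov's principle is, I expect, the technical heart of the argument.
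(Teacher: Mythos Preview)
Your pointwise-bound argument is correct and mirrors the paper's: $x_n\pm\ell(x_n)$ is a common $\below$-lower bound of $x$ and $y$ via Lemmas~\ref{lemma:ext} and~\ref{lemma:abs-below}, and then monotonicity of $f$ together with Lemma~\ref{lemma:fat-below} yields $|f(x)-f(y)|\leq\ell(f(x_n\pm\ell(x_n))_k)$ for any $k$. You have also correctly isolated the real obstacle, namely ensuring that $(\omega(x_n))_n$ is \emph{decreasing}, as the definition of classical null sequence requires.

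The missing idea is that your $\omega(\alpha)=\ell(f(\alpha\pm\ell(\alpha))_{k(\alpha)})+2^{-k(\alpha)}$ cannot be made $\below$-anti-monotone by any choice of $k$: although $\alpha\below\beta$ gives $f(\alpha\pm\ell(\alpha))\below f(\beta\pm\ell(\beta))$ in $\IR$, the order on $\IR$ is defined through approximants and is not componentwise, so nothing relates the $k$-th (or $k(\alpha)$-th versus $k(\beta)$-th) components in $\IQ$. Passing to an anti-monotone majorant, as you suggest, either requires a supremum over the infinitely many $\beta\above\alpha$ or forces $\omega$ to depend on the whole sequence $(x_n)_n$ rather than on the single interval, violating intensionality. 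The paper resolves this by invoking the step-function decomposition of Lemma~\ref{lemma:sup-simple}: write $f=\sqsup_n f_n$ with each $f_n\colon\IQ\to\IQ$ a monotone step function and $(f_n)_n$ a $\below$-chain, then set $\omega(\alpha)=\ell\bigl(f_{s(\alpha)}(\alpha\pm\ell(\alpha))\bigr)$ with $s(\alpha)$ antitone in $\ell(\alpha)$ (concretely $s(\alpha)=b(\ell(\alpha))$ for antitone $b$ with $2^{-b(q)}\leq q$). Now as $n$ grows both the index $s(x_n)$ and the argument $x_n\pm\ell(x_n)$ increase, hence $f_{s(x_n)}(x_n\pm\ell(x_n))$ increases in $\IQ$ (monotonicity in each slot), and its length decreases. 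Weak decay then follows from totality of $f(x)=\sqsup_n f_n(x_n\pm\ell(x_n))$ essentially along the lines you sketch.
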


\begin{proof}
Fix an antitone function $b: \Rat_{>0} \to \Nat$ such that
$2^{-b(q)} \leq q$ and let $s(\alpha) = b(\ell(\alpha))$ for $\alpha
\in \IQ$ with $\ell(\alpha) > 0$. Then $s$ is antitone with respect
to interval length, i.e. $0 < \ell(\alpha) \leq \ell(\beta)$ implies
$s(\alpha) \geq s(\beta)$ for all $\alpha, \beta \in \IQ$. 

By the approximation lemma (Lemma \ref{lemma:sup-simple}) we have
that $f = \sqsup_n f_n$ is a supremum of step functions.
Define $\omega: \IQ \to \Rat_{\geq 0} \cup \lbrace \infty \rbrace$ by
$\omega(\alpha) = \ell(f_{s(\alpha)}(\alpha \pm \ell(\alpha)))$. We show that
$\omega$ is a modulus of intensional non-discontinuity for $f$. 

Let $x \in \Real$ be given. We first show that $\omega(x_n)_n$ is a
classical null-sequence.  Let $y_n = x_n \pm \ell(x_n)$. Then $x =
y$ by Lemma \ref{equality}  and $y = \sqsup_n y_n$ by Lemma
\ref{lemma:own-sup}. Then $f(x) = f(\sqsup_m y_m) = \sqsup_m f(y_m) =
\sqsup_m \sqsup_n f_n(y_m) = \sqsup_n f_n(y_n)$ by Corollary
\ref{cor:diagonal-sup}, and applying Lemma \ref{lemma:own-sup} this
gives that $f(x) = (f_n(y_n))_n$ as a sequence. 
Now let $\epsilon > 0$. As $f(x)$ is total
and $f(x) = (f_n(y_n))_n$,  there must exist $k \in \Nat$ such that
$\ell(f_k(y_k)) \leq \epsilon$. By definition of $s$, there moreover
must exist $i \in \Nat$ such that $s(x_i) \geq k$. As $s$ is
antitone with respect to interval length and $(y_n)_n$ is
increasing, we may assume that $i \geq k$. Then $\omega(x_i) =
\ell(f_{s(x_i)} (y_i)) \leq \ell(f_k (y_i)) \leq \ell(f_k(y_k))
\leq \epsilon$ as required.
 
 We now show that $(\ell(x_n), \omega(x_n))_n$ is a modulus of
 non-discontinuity of $f$ at $x$. So let $n \in \Nat$ and fix $y \in
 \Real$ with $|x - y| \leq \ell(x_n)$. Then
 $x_n \pm \ell(x_n) \below x \pm \ell(x_n) \below y$ by Lemma
 \ref{lemma:abs-below}, and therefore $f(x_n \pm \ell(x_n)) \below
 f(y)$. As also $x_n \pm \ell(x_n) \below x \pm \ell(x_n) \below x$
 we have, again by monotonicity of $f$, that $f(x_n \pm \ell(x_n))
 \below f(y)$. As $f_{s(x_n)} \below f$, this gives
 $f_{s(x_n)} (x_n \pm \ell(x_n)) \below f(x)$ and $f_{s(x_n)} (x \pm
 \ell(x_n)) \below f(y)$. By Lemma \ref{lemma:fat-below} we may
 conclude that $|f(x) - f(y)| \leq \ell(f_{s(x)}(x \pm \ell(x_n)) =
 \omega(x_n)$ which finishes the proof.
\end{proof}

\begin{thm}
  Suppose that $f: \Real \to \Real$ is intensionally non-discontinuous.
  Then there exists a Scott continuous function $g: \IR \to \IR\lift$
  such that $g \restriction \Real = f$. 
\end{thm}

\begin{proof}
Let $\omega$ be a modulus of intensional non-discontinuity for $f$ and
suppose that $\IQ = \lbrace \alpha_i \mid i \in \Nat \rbrace$ is an
enumeration of $\IQ$. For $\alpha \in \IQ$ let $m(\alpha) = \frac 12
(\lo \alpha + \hi \alpha)$ denote the midpoint of $\alpha$. Put 
\[ g_n = \sqsup_{i \leq n} \alpha_i \step f(m(\alpha_i))_n \pm
\omega(\alpha_i) \]
and let $g = \sqsup_n g_n$. 

We first show that $g_n$ is well-defined for all $n \in \Nat$, i.e.
satisfies the consistency requirement for step functions. To see
this, let $I \subseteq \Nat$ be finite and suppose that $\lbrace
\alpha_i \mid i \in I \rbrace$ is consistent. We show that
$f(m(\alpha_i))_k \pm \omega(\alpha_i)$ are consistent for arbitrary
$k \in \Nat$. The claim follows for $k = \max I$.  By consistency of
$\lbrace \alpha_i \mid i \in I \rbrace$ we obtain $q \in \Rat$ such
that $\alpha_i \below q$ for all $i \in I$, e.g. $q = \lbrace \max
\lo \alpha_i \mid i \in I \rbrace$. We now obtain that
$\alpha_i = m(\alpha_i) \pm \frac 12 \ell(\alpha_i) \below q$ and
$\alpha_i \below m(\alpha_i)$, hence by Lemma \ref{lemma:fat-below}
we obtain $|q - m(\alpha_i)| \leq \frac 12 \ell(\alpha_i)$. As
$\omega$ is a modulus of intensional non-discontinuity for $f$, this
gives $|f(q)  - f(m(\alpha_i)| \leq \omega(\alpha_i)$. By Lemma \ref{lemma:abs-below} this gives $f(m(\alpha_i))
\pm w(\alpha_i) \below f(q)$. As $f(m(\alpha_i))_k \pm
\omega(\alpha_i) \pm 2^{-n} \waybelow f(m(\alpha_i))_k \pm
\omega(\alpha_i)$ and $f(m(\alpha_i)) \pm \omega(\alpha_i) \below
f(q)$, there must exist $r(i)$ such that $f(m(\alpha_i))_k \pm
\omega(\alpha_i) \pm 2^{-n} \waybelow f(q)_{r(i)}$. For $r = \max
\lbrace r(i) \mid i \in I \rbrace$ we therefore have
$f(m(\alpha_i)) \pm w(\alpha_i) \pm 2^{-n} \waybelow f(q)_r$, that
is, the set $\lbrace f(m(\alpha_i))_k \pm \omega(\alpha_i) \pm
2^{-n} \mid i \in I \rbrace$ is consistent for all $n \in \Nat$. As
consistency on $\IQ$ is continuous by Lemma \ref{lemma:IQ-cons-cont}
this shows that $\lbrace f(m(\alpha_i))_k \pm \omega(\alpha_i) \mid
i \in I \rbrace$ is consistent.

We now demonstrate that $g$ preserves total reals, that is, $g(x)$
is total whenever $x$ is.
Let $x \in \Real$ be total and $y_n = x_n \pm 2^{-n}$. Then, for
every $n \in \Nat$ there must exist
\begin{itemize}[label=$\triangleright$]
  \item $i \in \Nat$ such that $\omega(y_i) \leq \frac{\epsilon}4$
  \item $j \in \Nat$ such that $y_i \in \lbrace \alpha_i \mid i \leq
  j \rbrace$
  \item $k \in \Nat$ such that $\ell(f(m(y_i))_k) \leq
  \frac{\epsilon}2$
\end{itemize}
so that for $l = \max \lbrace i, j, k \rbrace$ we have:
\begin{align*}
  g(x)
  & \above g_l(x_i) 
  && \mbox{(as $g_l = \below \sqsup_n g_n = g$ and $x_i \below
  \sqsup_n x_n = x$)} \\
  & \above (y_i \step f(m(y_i))_l \pm \omega(y_i)) \, (x_i) 
  && \mbox{(as $y_i \in \lbrace \alpha_i \mid i \leq j \rbrace
  \subseteq \lbrace \alpha_i \mid i \leq k \rbrace$)} \\
  & = f(m(y_i))_l \pm \omega(y_i) 
  && \mbox{(definition of step functions)} \\
  & \above f(m(y_i))_k \pm \omega(y_i)
  && \mbox{(monotonicity of $f(m(y_i))$ and $k \leq l$)}
\end{align*}
In particular, this gives that
\[ \ell(g(x)) \leq \ell(f(m(y_i))_l \pm \omega(y_i)) \leq
\ell(f(m(y_i))_l) + 2 \omega(y_i) \leq \frac\epsilon 2 + 2 \frac
\epsilon 4 = \epsilon \]
as required.

We finally demonstrate that  $g(x) \below f(x)$ for $x \in \Real$. As $g(x) =
\sqsup_n g_n(x_n)$, it suffices to show that $g_n(x_n) \below f(x)$.
So let $n \in \Nat$. Then
$g_n(x) = \sqsup \lbrace f(m(\alpha_i))_n \pm \omega(\alpha_i) \mid 0
\leq i \leq n, \alpha_i \waybelow x_n \rbrace$ so that the claim
follows once we show that $f(m(\alpha)_n \pm \omega(\alpha) \below
f(x)$ whenever $\alpha \in \IQ$ with $\alpha \waybelow x_n$. So assume
that $\alpha \waybelow x_n$ so that in particular $\alpha \below
x$. As also $\alpha \below m(\alpha)$ we have that $|x - \alpha|
\leq \ell(\alpha)$ by Lemma \ref{lemma:fat-below}. As $\omega$ is a
modulus of intensional non-discontinuity, this gives $|f(x) -
f(m(\alpha))| \leq \omega(\alpha)$ and in turn, using Lemma
\ref{lemma:abs-below} that $f(m(\alpha)) \pm \omega(\alpha) \below
f(x)$. Using monotonicity of $f(m(\alpha))$ we obtain $f(m(\alpha))_n
\pm \omega(\alpha) \below f(x)$ as required.

We now finish the proof by showing that $f(x) = g(x)$ whenever $x
\in \Real$ is total. But this follows from $g(x)$ being maximal by
Lemma \ref{lemma:total-max} and the fact that $g(x) \below f(x)$.
\end{proof}

\section{Conclusion and Discussion}

The main guiding principle of our development here was
``constructive existence with classical correctness''. The main goal
was to constructively rationalise standard practice in constructive
analysis: constructions are carried out in the universe of classical
mathematics, and then a secondary argument is used to show that they
are in fact effective. This is reflected in our approach that
emphasises constructive existence, but contends itself with
classical correctness arguments. One consequence of this is that
correctness assertions have no computational content under a
realisability interpretation. While this can also be achieved by
achieved using different methods (e.g. non-computational quantifiers
\cite{Berger:1993:PEN} or $\mathrm{Prop}$-valued assertions in the
calculus of constructions \cite{Coquand:1988:CC}), we consciously
took a pragmatic approach that aligns with computable analysis. As
next step, our approach should be benchmarked both mathematically
(e.g. by establishing standard results of computable analysis as
carried out e.g. in \cite{Schwichtenberg:2008:RIP}) and
experimentally, by implementing our theory in a theorem prover such
as Coq \cite{Bertot:2004:ITP} or Minlog \cite{Schwichtenberg:2012:PC}.

\section*{Acknowledgments}
\noindent The authors wish to express their gratitude to Helmut
Schwichtenberg, Dieter Spreen and Andrej Bauer for continued
discussions about the topic, and to the anonymous reviwers for their
constructive suggestions to improve the paper. We also acknowledge the support of a
visiting scholarship by the Humboldt Foundation.

\bibliographystyle{alpha}
\bibliography{refs}

\begin{thebibliography}{BCHPM04}

\bibitem[AJ94]{Abramsky:1994:DT}
S.~Abramsky and A.~Jung.
\newblock {Domain Theory}.
\newblock In S.~Abramsky, D.~Gabbay, and T.~S.~E. Maibaum, editors, {\em
  Handbook of Logic in Computer Science}, volume~3. Clarendon Press, 1994.

\bibitem[BB85]{Bishop:1985:CA}
E.~Bishop and D.~Bridges.
\newblock {\em {Constructive Analysis}}.
\newblock Number 279 in Grundleheren der mathematischen Wissenschaften.
  Springer, 1985.

\bibitem[BCHPM04]{Bertot:2004:ITP}
Y.~Bertot, P.~Cast{\'e}ran, G{\'e}rard Huet, and Christine Paulin-Mohring.
\newblock {\em Interactive theorem proving and program development : Coq'Art :
  the calculus of inductive constructions}.
\newblock Texts in theoretical computer science. Springer, 2004.

\bibitem[Ber93a]{Berger:1993:PEN}
U.~Berger.
\newblock Program extraction from normalization proofs.
\newblock In Marc Bezem and Jan~Friso Groote, editors, {\em Proc. TLCA 1993},
  volume 664 of {\em Lecture Notes in Computer Science}, pages 91--106.
  Springer, 1993.

\bibitem[Ber93b]{Berger:1993:TSO}
U.~Berger.
\newblock {Total Sets and Objects in Domain Theory}.
\newblock {\em Journal of Pure and Applied Logic}, 60:91--117, 1993.

\bibitem[Ber11]{Berger:2011:CPE}
U.~Berger.
\newblock From coinductive proofs to exact real arithmetic: theory and
  applications.
\newblock {\em Logical Methods in Computer Science}, 7(1), 2011.

\bibitem[BK09]{Bauer:2009:CTC}
A.~Bauer and I.~Kavkler.
\newblock A constructive theory of continuous domains suitable for
  implementation.
\newblock {\em Ann. Pure Appl. Logic}, 159(3):251--267, 2009.

\bibitem[CH88]{Coquand:1988:CC}
T.~Coquand and G.~P. Huet.
\newblock The calculus of constructions.
\newblock {\em Inf. Comput}, 76(2/3):95--120, 1988.

\bibitem[dG97]{diGianantonio:1997:RNC}
P.~di~Gianantonio.
\newblock Real number computation and domain theory.
\newblock {\em Information and Computation}, 127:11--25, 1997.

\bibitem[ES98]{Edalat:1998:DTA}
A.~Edalat and P.~S{\"{u}}nderhauf.
\newblock A domain theoretic approach to computability on the real line.
\newblock {\em Theoretical Computer Science}, 210:73--98, 1998.

\bibitem[ES99]{Edalat:1999:DTA}
A.~Edalat and P.~S{\"u}nderhauf.
\newblock A domain-theoretic approach to computability on the real line.
\newblock {\em Theor. Comput. Sci.}, 210(1):73--98, 1999.

\bibitem[HE02]{Heckmann:2002:CRN}
R.~Heckmann and A.~Edalat.
\newblock Computing with real numbers: {I}. the {LFT} approach to real number
  computation; {II}. {A} domain framework for computational geometry.
\newblock In G.~Barthe, P.~Dybjer, L.~Pinto, and J.~Saraiva, editors, {\em
  Applied Semantics - Lecture Notes from the International Summer School},
  volume 2395 of {\em LNCS}. Springer, 2002.

\bibitem[Ish92]{Ishihara:1992:CPC}
H.~Ishihara.
\newblock Continuity properties in constructive mathematics.
\newblock {\em Symbolic Logic}, 57(2):557--565, 1992.

\bibitem[ME07]{Marcial-Romero:2007:SSL}
J.~R. Marcial{-}Romero and M.~H. Escard{\'{o}}.
\newblock Semantics of a sequential language for exact real-number computation.
\newblock {\em Theor. Comput. Sci.}, 379(1--2):120--141, 2007.

\bibitem[Sch08]{Schwichtenberg:2008:RIP}
H.~Schwichtenberg.
\newblock Realizability interpretation of proofs in constructive analysis.
\newblock {\em Theory Comput. Syst.}, 43(3-4):583--602, 2008.

\bibitem[Sch16]{Schwichtenberg:2016:CAW}
H.~Schwichtenberg.
\newblock {\em Constructive analysis with witnesses}.
\newblock Unpublished notes, avaliable at
  \url{http://www.math.lmu.de/~schwicht/seminars/semws16/constr16.pdf}, 2016.

\bibitem[SW12]{Schwichtenberg:2012:PC}
H.~Schwichtenberg and S.~Wainer.
\newblock {\em Proofs and Computations}.
\newblock Cambridge University Press, 2012.

\bibitem[TvD88]{Troelstra:1988:CMI}
A.~Troelstra and Dirk van Dalen.
\newblock {\em Constructivisim in mathematics: an introduction}.
\newblock North Holland, 1988.
\newblock Two volumes.

\bibitem[Wei00]{Weihrauch:2000:CA}
K.~Weihrauch.
\newblock {\em Computable Analysis}.
\newblock Springer, 2000.

\end{thebibliography}
\end{document}